\newcommand{\RN}[1]{%
  \textup{\uppercase\expandafter{\romannumeral#1}}%
}
\newcommand{\arc}[1]{%
  \xrightarrow[]{\mbox{\tiny \RN{#1}}}%
}
\titlespacing{\paragraph}{0pt}{2ex}{0.5em}
\newcommand{\ignore}[1]{}
\newlength{\bibitemsep}\setlength{\bibitemsep}{0.2\baselineskip plus .0\baselineskip minus .0\baselineskip}
\newlength{\bibparskip}\setlength{\bibparskip}{2pt}
\let\oldthebibliography\thebibliography
\renewcommand\thebibliography[1]{%
	\oldthebibliography{#1}%
	\setlength{\parskip}{\bibitemsep}%
	\setlength{\itemsep}{\bibparskip}%
}
\newcommand{\R}{\mathbb{R}}
\newcommand{\M}{\mathcal{M}}
\newcommand{\RR}{\mathbb{R}}
\DeclareMathOperator{\id}{id}
\DeclareMathOperator{\perm}{perm}
\DeclareMathOperator{\supp}{supp}
\newcommand{\OPT}{OPT}
\algrenewcommand\algorithmicrequire{\textbf{Input:}}
\algrenewcommand\algorithmicensure{\textbf{Output:}}
\DeclareMathOperator{\argmin}{argmin}
\newtheorem{claim}{Claim}[section]
\newtheorem{theorem}{Theorem}
\newtheorem{definition}{Definition}
\newtheorem{lemma}{Lemma}
\newtheorem{fact}{Fact}
\newenvironment{proof}[1][]{\par \noindent {\bf Proof #1}\ }{\hfill$\Box$\par \vspace{11pt}}
\newcommand{\norm}[1]{\Vert#1\Vert}
\newcommand{\dotp}[2]{\langle u_{#1}, v_{#2}\rangle_S}
\newcommand{\dotu}[2]{\langle u_{#1}, u_{#2}\rangle_S}
\newcommand{\dotv}[2]{\langle v_{#1}, v_{#2}\rangle_S}
\newcommand{\normu}[1]{\Vert u_{#1}\Vert_S}
\newcommand{\normv}[1]{\Vert v_{#1}\Vert_S}
\title{Efficient Determinant Maximization for All Matroids}
\author{Adam Brown \thanks{Georgia Tech, \{ajmbrown, aladdha6, msingh94\}@gatech.edu}\and Aditi Laddha\footnotemark[1]\and Madhusudhan Pittu\thanks{Carnegie Mellon University, mpittu@andrew.cmu.edu}\and Mohit Singh\footnotemark[1]}
\date{}
\begin{document}

\maketitle

\begin{abstract}
Determinant maximization provides an elegant generalization of problems in many areas, including convex geometry, statistics, machine learning, fair allocation of goods, and network design.  In an instance of the determinant maximization problem, we are given a collection of vectors $v_1,\ldots, v_n \in \R^d$, and the goal is to pick a subset $S\subseteq [n]$ of given vectors to maximize the determinant of the matrix $\sum_{i \in S} v_iv_i^\top$, where the picked set of vectors $S$ must satisfy some combinatorial constraint such as cardinality constraint ($|S| \leq k$) or matroid constraint ($S$ is a basis of a matroid defined on $[n]$). 

In this work, we give a combinatorial algorithm for the determinant maximization problem under a matroid constraint that achieves $O(d^{O(d)})$-approximation for any matroid of rank $r\geq d$. This complements the recent result of~\cite{BrownLPST22} that achieves a similar bound for matroids of rank $r\leq d$, relying on a geometric interpretation of the determinant. Our result matches the best-known estimation algorithms~\cite{madan2020maximizing} for the problem, which could estimate the objective value but could not give an approximate solution with a similar guarantee. Our work follows the framework developed by~\cite{BrownLPST22} of using matroid intersection based algorithms for determinant maximization. To overcome the lack of a simple geometric interpretation of the objective when $r \geq d$, our approach combines ideas from combinatorial optimization with algebraic properties of the determinant. We also critically use the properties of a convex programming relaxation of the problem introduced by~\cite{madan2020maximizing}.
\end{abstract}
\thispagestyle{empty}
\newpage

\setcounter{page}{1}

\section{Introduction}

In an instance of the determinant maximization problem, we are given a collection of vectors $v_1,\ldots, v_n \in \R^d$, and the goal is to pick a subset $S\subseteq [n]$ of given vectors to maximize the determinant of the matrix $\sum_{i \in S} v_iv_i^\top$. The set $S$ must satisfy additional combinatorial constraints such as cardinality constraint ($|S| \leq k$) or matroid constraint ($S$ is a basis of a matroid defined on $[n]$). The determinant maximization problem under matroid constraint provides a general framework to model problems in various areas, including convex geometry~\cite{khachiyan1995complexity,summa2015largest,Nikolov2015}, experimental design in statistics~\cite{Pukelsheim2006,Allen-Zhu17nearoptimal,SinghX18,nikolov2018proportional}, reliable network design~\cite{baras2009efficient, LiPYZ19}, fair allocation of goods~\cite{anari2016nash,anari2018nash}, sensor placements~\cite{Joshi2009}, and determinantal point processes~\cite{Kulesza2012}. 

The case when the rank of the matroid $r$ is greater than $d$ is important as it generalizes crucial applications like Nash Social Welfare, experimental design, and network design. Algorithmically, this setting presents an interesting challenge as the difficulty of approximation for some applications differs when $r \leq d$ and when $r \gg d$. For example, the problem of maximizing Nash social welfare (NSW) can be modeled as an instance of determinant maximization under matroid constraints. While NSW is APX-hard~\cite{lee2017apx} in general, when the rank of the matroid constraint for NSW is at most dimension $d$, NSW is solvable in polynomial time. 

We describe a couple of applications that are modeled by determinant maximization problems. 

\paragraph{Nash Social Welfare problem.} The goal of Nash Social Welfare problem is to find an allocation of $m$ indivisible items to $d$ players to maximize welfare. Player $i$ has valuation $u_i(j)$ for item $j$, and these valuations are additive, i.e., for any $S \subseteq [m]$, $u_i(S) = \sum_{j\in S} u_i(j)$. The Nash Social Welfare objective of an allocation $\sigma$ is defined as the geometric mean of valuations of the agents, $NSW(\sigma) = (\prod_{i=1}^n \sum_{\sigma(j) = i} u_i(j))^{1/d}$. The geometric mean aims to capture both the fairness aspect, that each player individually obtains a bundle of large value, as well as total efficiency that the items go to players who value them more. \cite{anari2016nash} showed that this problem could be modeled as a particular case of determinant maximization where the constraint matroid is a partition matroid.  For player $i$ and item $j$, consider a vector $v_{i,j} = \sqrt{u_i(j)} \, e_i$ where $e_1, \ldots, e_d$ are the standard basis vectors in $\R^d$. Consider the partition matroid $\mathcal{M}$ on $[d] \times [m]$ with $[m]$ partitions, $P_j = \{(i, j): i \in [d]\}$ for $j \in [m]$. The independent sets of $\mathcal{M}$ are the sets which that at most one vector from each partition, i.e., $\mathcal{I} = \{ S \subset [d] \times [m]: |S \cap P_i | \leq 1\}$.
Then any basis, $S$, corresponds to a feasible allocation of items, and $\det(\sum_{S\cap P_j = \{i\}} v_{i,j} v_{i,j}^\top) $ is equal to the Nash Social Welfare objective raised to the power $d$. Our results give a polynomial in $d$ approximation algorithm for maximizing Nash Social Welfare.

\paragraph{Experimental design.} One of the classical problems in statistics is the experimental design problem~\cite{Pukelsheim2006}. In this problem, the goal is to estimate an unknown parameter vector $\theta^{\star}\in \RR^d$ by taking a set of linear observations of the form $y_i=v_i^{\top} \theta^{\star}+\eta_i$ where $v_i\in \RR^d$ and $\eta_i$ is Gaussian noise. We can select the test vector $v_i$ from a set of candidate vectors, $\{v_1,\ldots, v_n\}$, but each observation has some associated cost. The goal of experimental design is to select a subset of $r$ (often $r \ll n$) vectors, $S$, from the candidate set $\{v_1,\ldots, v_n\}$ such that observations on $S$ maximize the accuracy of estimating $\theta^{\star}$ for some criterion. Under the Gaussian assumption on the noise, for a set of observations $S$, the maximum likelihood estimator for $\theta^{\star}$ is the ordinary least squares solution on $S$, i.e., $\hat{\theta}=\argmin_{\theta}(\sum_{i\in S}|v_i^{\top} \theta -y_i|^2)$.  A simple computation shows that $\hat{\theta}-\theta^{\star}$ is distributed as a $d$-dimensional Gaussian with covariance matrix proportional to $\left(\sum_{i\in S} v_i v_i^{\top}\right)^{-1}$. One optimality criterion, called $D$-optimality, is to minimize the determinant of this matrix, which corresponds to minimizing the volume of the confidence ellipsoid. This can be formulated as an instance of the determinant maximization problem where the constraint matroid is the uniform matroid with rank $r$. A natural generalization of this problem is to restrict the feasible sets of observations to independent sets of a general matroid $\mathcal{M}$. This captures constraints that arise in practice. For example, some observations should be mutually exclusive, or observations should be spread apart in time or location~\cite{thiery2022two}. Our algorithms gives a $O(d^{O(d)})$-approximation for $D$-optimal design under matroid constraints when the rank of the matroid is greater than $d$.

\paragraph{Network design.} Given a graph $G=(V,E)$, a network design problem involves selecting a subgraph $H=(V,F)$ with $F\subseteq E$ such that $H$ is \emph{well-connected} and $F$ satisfies some combinatorial constraints. An example of a combinatorial constraint on $F$ could be $|F|\leq k$ for some integer $k$. Li et al.~\cite{LiPYZ19} consider the notion of connectivity defined by maximizing the number of spanning trees in $H$. This measure of connectivity has found applications in communication networks, network reliability, and as a predictor of the
spread of information in social networks. We refer the reader to~\cite{LiPYZ19} for further details. Here we note that the problem can be easily captured as a determinant maximization problem due to Kirchoff's formula that relates the determinant of the Laplacian of a graph to the number of spanning trees in it.  

\subsection{Our Results and Contributions}

Our main contribution is to give an approximation algorithm for the determinant maximization problem when the rank of the matroid $r$ more than the dimension $d$.
\begin{theorem}\label{thm:main}
    There is a polynomial time algorithm which, given a collection of vectors $v_1,\ldots, v_n \in \R^d$ and a matroid $\mathcal{M} = ([n], \mathcal{I})$ of rank $r \geq d$, returns a set $S \in \mathcal{I}$ such that
    \begin{equation*}
         \det\left( \sum_{i \in S} v_iv_i^\top \right) = \Omega\left(\frac{1}{d^{O(d)}} \right) \max_{S^* \in \mathcal{I}}\det\left( \sum_{i \in S^*} v_iv_i^\top \right).
    \end{equation*} 
\end{theorem}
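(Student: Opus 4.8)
The plan is to follow the matroid-intersection framework of~\cite{BrownLPST22}, but to replace the geometric ``volume sampling'' argument (which only works when $r \le d$) with a combination of algebraic identities for the determinant and rounding guarantees for the convex relaxation of~\cite{madan2020maximizing}. First I would set up the convex programming relaxation: write the relaxation $\max \log\det(\sum_{i} x_i v_iv_i^\top)$ over $x$ in the matroid base polytope of $\mathcal{M}$, and let $x^*$ be an optimal fractional solution with value $\mathrm{OPT}_{\mathrm{cp}}$. By~\cite{madan2020maximizing} we have $\mathrm{OPT}_{\mathrm{cp}} \ge \log\mathrm{OPT}$ and, more importantly, $\mathrm{OPT}_{\mathrm{cp}}$ overestimates $\log\mathrm{OPT}$ by at most an additive $O(d\log d)$, so it suffices to produce an integral base $S$ with $\det(\sum_{i\in S} v_iv_i^\top) \ge e^{\mathrm{OPT}_{\mathrm{cp}}}/d^{O(d)}$.

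Next I would reduce to a well-conditioned instance. Using the optimal fractional solution $x^*$ and its supporting subspace, one can apply a linear transformation so that $\sum_i x_i^* v_iv_i^\top$ is (close to) the identity, and then argue that in this normalized coordinate system the ``profile'' of the vectors is controlled — e.g., one can partition or bucket the fractional mass so that only a bounded number of ``scales'' matter. The key algebraic step is the Cauchy–Binet expansion: $\det(\sum_{i\in S} v_iv_i^\top) = \sum_{T \subseteq S, |T|=d} \det(V_T)^2$ where $V_T$ is the $d\times d$ submatrix on columns $T$. This lets me decouple the rank-$r$ problem into choosing a good $d$-subset $T$ (the ``volume'' part, where the geometry of~\cite{BrownLPST22} applies) together with a surrounding base $S \supseteq T$ of $\mathcal{M}$ that does not hurt the determinant; the matroid-intersection subroutine is exactly what finds such an $S$ while respecting the matroid constraint and a second (partition-type) constraint encoding the choice near $T$.

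The main steps in order: (1) solve the convex relaxation and normalize coordinates; (2) use Cauchy–Binet plus the structure of $x^*$ to identify a fractional $d$-dimensional ``core'' carrying an $e^{-O(d\log d)}$ fraction of the objective; (3) set up a matroid intersection instance (base constraint of $\mathcal{M}$ intersected with an auxiliary partition/transversal matroid that records the core structure) whose optimal common independent set corresponds to an integral base with large determinant; (4) round/clean up using exchange arguments on the two matroids, losing only another $d^{O(1)}$ per exchange over $O(d)$ exchanges; (5) assemble the bounds to get the $d^{O(d)}$ approximation. I expect the hard part to be step (2)–(3): without the clean geometric picture available when $r\le d$, one must argue that the extra $r-d$ vectors in the base can always be completed without destroying the determinant of the core, and that the auxiliary matroid faithfully captures ``spreading'' the fractional solution across scales. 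Making the matroid-intersection instance both expressive enough to capture a near-optimal solution and structured enough that the standard local-search / polytope-integrality guarantees apply is the crux; the determinant's lack of a direct combinatorial meaning for $r > d$ is precisely why one has to lean on the convex relaxation's approximation guarantee rather than on a purely combinatorial potential.
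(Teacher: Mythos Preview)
Your proposal does not match the paper's approach and, more importantly, the core mechanism you sketch in steps (2)--(3) is not a workable plan as stated. The paper's algorithm is an iterative \emph{local search}: it uses the convex program of~\cite{madan2020maximizing} only to sparsify the ground set to $r+O(d^2)$ elements (Theorem~\ref{thm:sparse-sol-gap}), and then repeatedly finds a minimal $f$-violating cycle in a weighted exchange graph $G(S)$ and updates $S\leftarrow S\triangle C$. The real technical content is in (i) defining two kinds of forward-arc weights via the bilinear form $\langle u,v\rangle_S=u^\top(V_SV_S^\top)^{-1}v$ (Definition~\ref{def:wts}), (ii) proving that whenever $S$ is $d^{O(d)}$-suboptimal such a cycle of length $\le 2d$ exists (Lemma~\ref{lem:existence_cycle}), and (iii) proving that exchanging on a \emph{minimal} such cycle keeps $S$ independent and doubles the objective (Theorem~\ref{thm:main2}). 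There is no coordinate normalization, no auxiliary partition/transversal matroid, and no one-shot matroid-intersection rounding.

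Your proposed decoupling via Cauchy--Binet --- ``pick a good $d$-subset $T$ and then extend to a base $S\supseteq T$'' --- is the specific place where the argument breaks. Even if the fractional solution singles out a near-optimal $d$-core, there is no reason a base of $\mathcal{M}$ containing that core exists, and conversely the best base may not contain any single $d$-subset that is near-optimal in isolation; the determinant is a sum over \emph{all} $\binom{r}{d}$ subsets and the matroid constraint couples them. The paper does use Cauchy--Binet, but in the opposite direction (inside the proof of Lemma~\ref{lem:existence_cycle}) to \emph{locate} a small set $W\subseteq T\setminus S$ of size $\le d$ whose contribution is large, then uses matroid augmentation to complete $(S\cap T)\cup W$ to a basis $T_1$ with $|T_1\setminus S|\le d$; this is what creates a short $f$-violating cycle. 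That step requires the sparsified ground set (so that the averaging over $W$ loses only $k^d$ with $k=O(d^2)$) and it feeds into the cycle machinery rather than into a separate matroid-intersection instance. If you want to salvage your outline, you would need to replace the vague ``auxiliary matroid'' with the exchange-graph/$f$-violating-cycle apparatus; the convex program alone does not give you an integral base.
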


This matches the $d^{O(d)}$-approximate estimation algorithm of~\cite{madan2020maximizing}, which only gives an estimate of the optimum value and results in $d^{O(d^2)}$-approximation algorithm.  This result also complements the previous work~\cite{BrownLPST22} which gives an $O(r^{O(r)})$-approximation algorithm for the determinant maximization problem under a matroid constraint when the rank $r$ is at most $d$. 

Throughout the paper, for a subset $S \subseteq [n]$, we also use $S$ to denote the corresponding set of vectors $\{ v_i : i \in S \}$ as well as the matrix whose columns are vectors $v_i$ in $S$. We use $V_S$ to denote the matrix formed by vectors $v_i$ for $i \in S$. For a subset of vectors $S$ and vectors $u, v$, we use $S-v+u$ to denote the set $(S\backslash \{v\}) \cup \{u\}$. 

\paragraph{Technical Overview.} Our starting point is the local search algorithm of~\cite{BrownLPST22}, which gives an $r^{O(r)}$-approximation when $r \leq d$. The algorithm tries to find a basis $S$ of $\mathcal{M}$ to optimize the objective $\det(V_S^\top V_S)$, or equivalently the product of the top $r$-eigenvalues of $V_SV_S^\top$. Let $S$ be any basis of $\mathcal{M}$ with a non-zero objective. The algorithm builds a directed bipartite graph, $G(S)$, called the \emph{exchange graph} of $S$ with bipartitions given by $[n]\setminus S$ and $S$. $G(S)$ contains a \emph{forward} arc $u \rightarrow v$ for $u\notin S$ and $v\in S$ if the vectors in $S-v+u$ are spanning in the linear matroid, and a \emph{backward} arc from $v$ to $u$ if $S-v+u$ is a basis of $\M$. Instead of the traditional vertex weights, the exchange graph $G(S)$ has edge weights which measure the change in objective after a single swap. The algorithm then iteratively finds a cycle $C$ in $G(S)$ with a \emph{really} negative weight and updates the solution by swapping all elements along the cycle, i.e., $S \leftarrow S\triangle C$. While we also follow a similar outline, the approach runs in to significant challenges. Firstly, when the size of the picked set is at most $d$, the determinant $\det(V_S^\top V_S)$ is equal to the square of the $r$-dimensional volume of the  parallelepiped spanned by the vectors in $S$. So, the weight of a forward arc $u\rightarrow v$ is simply the ratio of the volume of the parallelepipeds spanned by vectors in $S-v+u$ and $S$. This ratio can be reduced to a simple form using orthogonal projections. When $r > d$, this geometric relation, and the corresponding arc weights are no longer meaningful. In what follows, we show how to modify the algorithm to work in our case. In particular, we take a more algebraic rather than geometric approach. In our analysis, we crucially utilize the sparsity guarantees of a convex programming relaxation~\cite{madan2020maximizing} to avoid approximation factors that depend on the rank of the matroid $r$.   

We now outline the approach in detail, identifying crucial places where we need to take a different approach. The first critical difference is the weights on the edges of the exchange graph. Following ~\cite{BrownLPST22}, a natural approach would be to assign weight on forward arc $u \rightarrow v$ with $u\notin S$, $v\in S$ in the exchange graph as the logarithm of the ratio of objective when we replace $v$ with $u$ in the current basis $S$. So, for any $u \notin S$ and $v \in S$,
$w(u \rightarrow v) = -\frac12\log\frac{\det\left( V_S V_S^\top - vv^\top + uu^\top \right)}{ \det(V_S V_S^\top)} $. While in the case when $r=d$, the $w(u\rightarrow v)$ can be interpreted nicely using the geometric equivalence to volumes of parallelepipeds, and they also appear as entries of a natural linear map. But in our case, we take a more algebraic approach as such a simple geometric interpretation of the determinant does not exist when $r > d$. 
Using the matrix determinant lemma, we have
\begin{align}
 w(u \rightarrow v) &= -\frac12\log\det\left( V_S V_S^\top - vv^\top + uu^\top \right) + \frac12\log\det(V_S V_S^\top) \notag \\
 &= \frac12\log\left( \underbrace{\left(u^\top (V_S V_S^\top)^{-1}v\right)^2}_{\RN{1}} + \underbrace{\left(1+ u^\top (V_S V_S^\top)^{-1} u\right)\cdot \left(1-v^\top (V_S V_S^\top)^{-1}v\right)}_{\RN{2}}\right). \label{eq:1}
\end{align}
We consider the two terms in equation~\eqref{eq:1} separately and define \emph{two} forward arcs, denoted by $u \arc{1} v$ and $u \arc{2} v$ in $G(S)$ for every $u \in [n]\backslash S$ and $v\in S$. We introduce separate weights for the two new types of edges so that, with weight $-\log\left( \vert u^\top (V_S V_S^\top)^{-1}v \vert\right)$ for arcs of type $\RN{1}$ and a weight of $-\log\sqrt{ (1 + u^\top (V_S V_S^\top)^{-1} u)(1-v^\top (V_S V_S^\top)^{-1}v)}$ for arcs of type $\RN{2}$.

The next step is to show that if the current solution $S$ is significantly suboptimal, then there exists a cycle $C$ in the exchange graph of $S$ with the weights defined via equation~\eqref{eq:1} having a significantly negative total weight. We call such a cycle an $f$-violating cycle (formally defined in Definition~\ref{def:f-violating}). When $r \leq d$, the weights of the arcs in cycle $C$ correspond to entries of a linear map between $S$ and the optimal solution. This linear map is used crucially to  imply the existence of such an $f$-violating cycle. But the existence of such a linear map relies on the fact that the chosen vectors in $S$ must be linearly independent but clearly, that is not the case when $r>d$. Instead, we show that the arc weights defined in equation~\ref{eq:1} are closely related to entries of a \emph{bilinear} map given by the inner product space induced by the matrix $(V_S V_S^\top)^{-1}$ (see Definition~\ref{def:bilinear}). The matrix $(V_S V_S^\top)^{-1}$ completely describes the change in the determinant of $S$ when adding or removing vectors from $S$ (see Lemma~\ref{lem:det_update}). Moreover, the change in determinant is precisely equal to the determinant of a matrix whose entries are from this inner product space and, therefore, bilinear in the vectors being added and removed. The bilinearity is crucial to relate arc weights in the exchange graph with the change in determinant. It allows us to use the Cauchy-Schwarz inequality and similar tools (see in Lemma~\ref{lem:inv}).

Our algorithm uses $f$-violating cycles to both identify a suboptimal solution and modify it to a better valued solution. Therefore the algorithm's approximation factor depends on the optimality gap between the current and the optimal solution up to which we can find an $f$-violating cycle. 
The next challenge is that we can only guarantee the existence of an $f$-violating cycle when the current solution $S$ is suboptimal compared to the optimal solution $S^*$ by a factor that depends exponentially on $|S\Delta S^*|$, which can be as large as $2r$. To control the size of the symmetric difference between the current and the optimal solution, we use a convex program introduced in~\cite{madan2020maximizing} (discussed in Section~\ref{sec:convex_program}) to obtain sparse support for the matroid as a preprocessing step. This step reduces the problem to the case when the symmetric difference between $S$ and $S^*$ can be at most $3d^2$, allowing us to prove a guarantee that depends only on the dimension $d$ and not on the rank of the matroid $r$. Naively using this guarantee gives an approximation factor exponential in $O(d^2)$. To obtain the appropriate dependence in $d$, we use properties of matroids to carefully construct an $f$-violating cycle $C$ with at most $2d$ arcs as long as our current solution $S$ is sub-optimal compared to the optimal solution $S^*$ by a factor of at least $d^{cd}$ for some constant $c$.

We now state our main lemma about the existence of $f$-violating cycles.
\begin{restatable}{lemma}{existence} \label{lem:existence_cycle}
Let the number of elements in the ground set of $\mathcal{M}$ be $r+k$, and define $f:\mathcal{Z}_+ \rightarrow \mathcal{Z}_+ $ as $f(i) = \max\{2, (i!)^{11}\}$.
For any basis $S$ of $\mathcal{M}$ with $\det(V_S V_S^\top) > 0$, if there exists a basis $T$ such that $\det(V_TV_T^\top) > \det(V_S V_S^\top)  \cdot d^{4d}\cdot k^d \cdot f(2d)$, then there exists a cycle $C$ with at most $2d$ arcs in $G(S)$ such that
\begin{equation*}
    \sum_{e \in C} w(e) \leq -\log(f(|C|/2))\,. 
\end{equation*}
\end{restatable}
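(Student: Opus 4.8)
The overall plan is to reduce to a local statement: first replace the possibly far-away better basis $T$ by a basis $S'$ with $|S\triangle S'|\le 2d$ that is still much better than $S$, and then run a bilinear-form analogue of the cycle-extraction argument of~\cite{BrownLPST22} on the nearby pair $(S,S')$. The factor $k^d$ in the hypothesis will pay for the pigeonhole in the first step, the factor $d^{4d}$ provides the slack needed for the Leibniz averaging in the second step, and $f(2d)$ is exactly the price of producing a $2d$-arc $f$-violating cycle.

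\textbf{Step 1 (reduce $T$ to a small profitable set).} Work in the inner product $\langle x,y\rangle_S=x^\top(V_SV_S^\top)^{-1}y$ and write $\tilde v_i=(V_SV_S^\top)^{-1/2}v_i$, so that $\sum_{i\in S}\tilde v_i\tilde v_i^\top=I_d$ and each leverage score $\langle v_i,v_i\rangle_S$ is at most $1$. By Cauchy--Binet, $\det(V_TV_T^\top)/\det(V_SV_S^\top)=\det\big(\sum_{i\in T}\tilde v_i\tilde v_i^\top\big)=\sum_{Y\subseteq T,\,|Y|=d}\mathrm{Gram}_S(Y)$, where $\mathrm{Gram}_S(Y)=\det\big((\langle v_i,v_j\rangle_S)_{i,j\in Y}\big)$. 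Since $\sum_{Y\subseteq S,|Y|=d}\mathrm{Gram}_S(Y)=\det I_d=1$, the bulk of the first sum comes from $Y$ meeting $T\setminus S$; grouping by $B=Y\cap(T\setminus S)$ and using the elementary coefficient comparison $\sum_{Z\subseteq S\cap T,\,|Z|=d-|B|}\mathrm{Gram}_S(B\cup Z)\le\mathrm{Gram}_S(B)$ (extracting the coefficient of $s^{d-|B|}$ in $\det(\tilde V_B\tilde V_B^\top+sI_d)$ versus $\det(\tilde V_B\tilde V_B^\top+s\sum_{i\in S\cap T}\tilde v_i\tilde v_i^\top)$, which is coefficientwise monotone since $\sum_{i\in S\cap T}\tilde v_i\tilde v_i^\top\preceq I_d$) gives $\det(V_TV_T^\top)/\det(V_SV_S^\top)\le 1+\sum_{\emptyset\neq B\subseteq T\setminus S,\,|B|\le d}\mathrm{Gram}_S(B)\le 1+k^d\cdot\max_{\emptyset\neq B\subseteq T\setminus S,\,|B|\le d}\mathrm{Gram}_S(B)$. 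Hence some $B\subseteq T\setminus S$ with $\ell:=|B|\le d$ has $\mathrm{Gram}_S(B)\ge\tfrac12 d^{4d}f(2d)$.

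\textbf{Step 2 (build the nearby basis and extract the cycle).} Since $B\subseteq T$ is independent, the sets $W\subseteq S$ with $|W|=\ell$ and $S-W+B$ a basis are exactly the bases of a rank-$\ell$ matroid $\mathcal N$ on $S$ (the dual of $(\mathcal M/B)$ restricted to $S$); using matroid exchange together with the fact that this is a rank-$\le d$ instance, one shows that a suitable basis $W$ of $\mathcal N$ yields $S':=S-W+B$ with $\det(V_{S'}V_{S'}^\top)/\det(V_SV_S^\top)\ge \mathrm{Gram}_S(B)/d^{O(d)}\ge d^{2d}f(2d)$ and $|S\triangle S'|=2\ell\le 2d$. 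Now mimic~\cite{BrownLPST22} on $(S,S')$: by Lemma~\ref{lem:det_update}, $\det(V_{S'}V_{S'}^\top)/\det(V_SV_S^\top)$ is the determinant of a $2\ell\times 2\ell$ matrix $M$ whose entries are the bilinear forms $\langle\cdot,\cdot\rangle_S$ on the vectors of $S\triangle S'$, with $1\pm\langle\cdot,\cdot\rangle_S$ on the diagonal (for $\ell=1$ this is exactly~\eqref{eq:1}). Expand $\det M$ by the Leibniz formula into at most $(2\ell)!$ signed products of entries; the dominant permutation $\sigma$ has $\big|\prod_i M_{i,\sigma(i)}\big|\ge\det M/(2\ell)!$. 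Decomposing $\sigma$ into cycles, invoking the standard matroid fact that the backward arcs along a cycle can be taken to form a unique perfect matching (so that $S\triangle C$ is again a basis), and splitting each factor $M_{i,\sigma(i)}$ into its type-$\RN{1}$ and type-$\RN{2}$ contributions as in~\eqref{eq:1} and bounding the corresponding forward-arc weights via the Cauchy--Schwarz-type inequality of Lemma~\ref{lem:inv}, each cycle of $\sigma$ produces a cycle $C$ in $G(S)$ of $2\ell\le 2d$ arcs. Pigeonholing over the cycles of $\sigma$ and using $f(i)=(i!)^{11}$, the $d^{2d}$ slack absorbs the $(2\ell)!$ and the polynomial losses and yields a $C$ with $\sum_{e\in C}w(e)\le-\log f(|C|/2)$.

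\textbf{Main obstacle.} The crux is Step 2 and, in particular, its first half: the Cauchy--Binet/low-rank argument only tells us \emph{that} a small set $B\subseteq T\setminus S$ of added vectors is highly profitable, but to convert this into an honest short exchange cycle we must pair $B$ with a removed set $W\subseteq S$ for which $S-W+B$ is a basis \emph{and} the determinant stays within a $d^{O(d)}$ factor of $\mathrm{Gram}_S(B)$ --- i.e.\ we must show the best matroid-feasible removal set (a basis of $\mathcal N$) is almost as good as the best unconstrained one. This forces us to exploit the structure of the matroid $\mathcal N$ of valid removals rather than just basis exchange, and it is the step that genuinely does not reduce to~\cite{BrownLPST22}, where $|S\triangle T|\le 2r\le 2d$ already and no such reduction is needed. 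The only other new ingredient, the splitting of forward-arc weights into types $\RN{1}$ and $\RN{2}$ and the resulting need for a Cauchy--Schwarz bound, is encapsulated in Lemma~\ref{lem:inv} and is comparatively routine.
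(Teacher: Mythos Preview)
Your overall architecture --- reduce to a nearby basis, then run a Leibniz-plus-cycle-extraction argument --- is exactly the paper's. Your second half of Step 2 (the Leibniz expansion of the $2\ell\times 2\ell$ update matrix, the Cauchy--Schwarz/Lemma~\ref{lem:inv} bounds on the four blocks, and the cycle pigeonhole) is essentially the content of Lemma~\ref{lem:cyc_det}, although you gloss over one nontrivial maneuver: to turn the dominant permutation into forward-arc weights one needs a bijection $h:I_1\to I_4$ that pairs the ``$1+\|u\|^2$'' factors with ``$1-\|v\|^2$'' factors so that each paired factor becomes a type-$\RN{2}$ arc weight, and then an Eulerian (not simple-cycle) decomposition of the resulting $4\ell$-arc multigraph.

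The genuine gap is exactly where you flag it, the first half of Step~2, and the paper's resolution is much simpler than the route you sketch. You pigeonhole on $\max_B\mathrm{Gram}_S(B)$ and then try to argue that some matroid-feasible removal set $W$ keeps $\det(V_{S-W+B}V_{S-W+B}^\top)$ within $d^{O(d)}$ of $\mathrm{Gram}_S(B)$; you do not prove this, and it is not obvious. The paper sidesteps the issue entirely by pigeonholing on the \emph{grouped sum} instead: with $U=T\setminus S$, group the $d$-subsets $Y\subseteq T$ by $W=Y\cap U$ and find some $W\subseteq U$, $|W|\le d$, with
\[
\sum_{Y\subseteq T,\;|Y|=d,\;Y\cap U=W}\det\bigl(V_Y^\top(V_SV_S^\top)^{-1}V_Y\bigr)\ \ge\ \frac{\det(V_TV_T^\top)}{\det(V_SV_S^\top)}\Big/\Bigl(d\cdot(k/d)^d\Bigr).
\]
Now simply \emph{extend} the independent set $(S\cap T)\cup W$ to a basis $T_1$ inside $S\cup W$. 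Then $T_1\setminus S\subseteq W$, so $|T_1\setminus S|\le d$, and by Cauchy--Binet
\[
\frac{\det(V_{T_1}V_{T_1}^\top)}{\det(V_SV_S^\top)}\ \ge\ \sum_{Y\subseteq (S\cap T)\cup W,\;|Y|=d}\det\bigl(V_Y^\top(V_SV_S^\top)^{-1}V_Y\bigr)\ \ge\ (2d)!\cdot f(2d),
\]
with \emph{no} loss at all in this step. There is no need to analyze the matroid $\mathcal N$ of valid removals or to compare constrained and unconstrained optima; any basis of $\mathcal M$ containing $(S\cap T)\cup W$ and contained in $S\cup W$ works. This also makes your coefficient-comparison inequality in Step~1 unnecessary: the crude count of at most $d\cdot(k/d)^d$ choices of $W$ already absorbs the $d^{4d}k^d$ in the hypothesis.
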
 
As discussed above, we use a convex program introduced in~\cite{madan2020maximizing} to ensure $k = O(d^2)$ in the above lemma.

After finding a minimal $f$-violating cycle $C$, we update $S \leftarrow S\triangle C$. Our final step is to prove that updating any basis $S$ with a minimal $f$-violating cycle, $C$ in $G(S)$, strictly increases the determinant while keeping the new solution independent in the matroid $\mathcal{M}$. 
\begin{restatable}{theorem}{main} \label{thm:main2}
    Let $C$ be a minimal $f$-violating cycle in $G(S)$ and let $T = S \triangle C$. Then $T$ is independent in $\mathcal{M}$ and  $\det(V_TV_T^\top) > 2\cdot \det(V_S V_S^\top)$.
\end{restatable}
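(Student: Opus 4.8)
My plan is to prove the two conclusions separately: that $T$ is independent in $\mathcal M$, by a matroid augmenting‑cycle argument, and that $\det(V_TV_T^\top) > 2\det(V_SV_S^\top)$, by an algebraic computation resting on Lemma~\ref{lem:det_update} and the bilinear form $\langle\cdot,\cdot\rangle_S$ of Definition~\ref{def:bilinear}. Write the vertices of $C$ in cyclic order as $a_1,b_1,a_2,b_2,\dots,a_m,b_m$ with $a_i\in[n]\setminus S$ and $b_i\in S$, so $|C|=2m\le 2d$, each $a_i\to b_i$ is a forward arc (of type $\RN{1}$ or $\RN{2}$), each $b_i\to a_{i+1}$ is a backward arc (indices mod $m$), and $T=S\triangle C=S-\{b_1,\dots,b_m\}+\{a_1,\dots,a_m\}$; reindex so that the forward arcs $a_i\to b_i$ with $i\le p$ have type $\RN{1}$ and those with $i>p$ have type $\RN{2}$. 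Here ``minimal'' means $G(S)$ has no $f$-violating cycle with strictly fewer arcs.

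\emph{Independence of $T$.} Each backward arc $b_i\to a_{i+1}$ certifies $S-b_i+a_{i+1}\in\mathcal I$, so the backward arcs of $C$ form a perfect matching of the matroid exchange bipartite graph $H$ on $(S\cap C)\cup(C\setminus S)$. I would argue, following the treatment in~\cite{BrownLPST22}, that minimality of $C$ makes this the \emph{unique} perfect matching of $H$: a second matching would, via its symmetric difference with the backward arcs of $C$ together with a suitable sub‑path of the forward arcs of $C$, produce a strictly shorter alternating cycle in $G(S)$ that is still $f$-violating (backward arcs carry weight $0$ and $\log f$ is superadditive, so shortening cannot raise the total weight), contradicting minimality. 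The classical simultaneous‑exchange lemma for matroids then gives $T\in\mathcal I$, and since $|T|=|S|=r$, $T$ is a basis.

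\emph{The determinant ratio.} Since $\det(V_SV_S^\top)>0$, the matrix $A:=V_SV_S^\top$ is invertible and $V_TV_T^\top=A+UU^\top-WW^\top$ with $U=[\,a_1\ \cdots\ a_m\,]$ and $W=[\,b_1\ \cdots\ b_m\,]$. By Lemma~\ref{lem:det_update},
\begin{equation*}
\frac{\det(V_TV_T^\top)}{\det(V_SV_S^\top)}\;=\;\det N,\qquad N\;=\;I_{2m}+\begin{bmatrix} I_m & 0\\ 0 & -I_m\end{bmatrix}[\,U\,|\,W\,]^\top A^{-1}[\,U\,|\,W\,],
\end{equation*}
so the diagonal entries of $N$ are $1+\langle a_i,a_i\rangle_S$ and $1-\langle b_i,b_i\rangle_S$ (nonnegative since $\langle b_i,b_i\rangle_S\in[0,1]$ for $b_i\in S$, and positive for $i>p$ because $b_i$ lies on a type‑$\RN{2}$ arc) and the off‑diagonal entries are $\pm\langle x,y\rangle_S$ for $x,y$ among the $a_i,b_i$. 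Expanding $\det N=\sum_\sigma\sign(\sigma)\prod_i N_{i,\sigma(i)}$, single out the involution $\sigma_C$ that places a transposition on each type‑$\RN{1}$ pair $(a_i,b_i)$, $i\le p$, and fixes both endpoints of every type‑$\RN{2}$ arc. Its contribution is nonnegative and a short computation gives
\begin{equation*}
\sign(\sigma_C)\prod_i N_{i,\sigma_C(i)}\;=\;\prod_{i\le p}\langle a_i,b_i\rangle_S^{\,2}\;\cdot\;\prod_{i>p}\bigl(1+\langle a_i,a_i\rangle_S\bigr)\bigl(1-\langle b_i,b_i\rangle_S\bigr)\;=\;\Bigl(\prod_{e\in C}e^{-w(e)}\Bigr)^{2}\;\ge\;f(m)^{2}\;\ge\;4,
\end{equation*}
where the first inequality is the $f$-violating hypothesis $\sum_{e\in C}w(e)\le-\log f(|C|/2)=-\log f(m)$ written multiplicatively.

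\emph{Controlling the remaining terms.} It remains to show $\sum_{\sigma\ne\sigma_C}\bigl|\prod_i N_{i,\sigma(i)}\bigr|<f(m)^2-2$, which forces $\det N>2$ and proves the theorem. I would bound each such product by decomposing $\sigma$ into its cycles: a cycle of $\sigma$ that is realizable by arcs of $G(S)$ traces a closed alternating walk of length $<2m$, which by minimality of $C$ is not $f$-violating, so the associated partial product of entries is less than $f(\cdot)$; a cycle not realizable by arcs of $G(S)$ contributes entries uniformly controlled by Cauchy--Schwarz for $\langle\cdot,\cdot\rangle_S$ (using $|\langle x,y\rangle_S|\le\|x\|_S\,\|y\|_S$ and $\|b_i\|_S\le 1$). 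Multiplying the per‑cycle bounds and summing over the at most $(2m)!\le(2d)!$ permutations, the factorial‑power growth of $f$ (the exponent $11$ in $f(i)=(i!)^{11}$ is chosen precisely so that $(2d)!$ times such products is dominated by $f(m)^2$) closes the estimate. The main obstacle is exactly this last step: making the correspondence between cycles of an arbitrary permutation and strictly shorter cycles of $G(S)$ faithful enough to invoke minimality, and keeping the bilinear/Cauchy--Schwarz bounds and the combinatorial count tight against $f(m)^2$. Two subsidiary points also need care: the sign bookkeeping ensuring that $\sigma_C$ enters $\det N$ with a $+$ sign (so the large term fights cancellation rather than abets it), and, on the matroid side, making rigorous the uniqueness‑of‑matching consequence of minimality.
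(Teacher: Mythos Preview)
Your proposal has a genuine gap in the determinant half, and a definitional slip that undermines both halves. First, ``minimal'' in this paper means no $f$-violating cycle on a \emph{proper vertex subset} of $C$, not ``no $f$-violating cycle with fewer arcs''; the independence argument (Lemma~\ref{lem:matroid_indep}) and all the sub-cycle bounds rely on that formulation.

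More seriously, the direct Leibniz expansion of the $2m\times 2m$ matrix $N$ cannot be controlled the way you describe. The top-left block of $N$ has entries $1+\langle a_i,a_i\rangle_S$ on the diagonal and $\langle a_i,a_j\rangle_S$ off it, and there is \emph{no a priori bound} on $\|a_i\|_S$: Cauchy--Schwarz gives $|\langle a_i,a_j\rangle_S|\le\|a_i\|_S\|a_j\|_S$, which can be arbitrarily large. Many permutations $\sigma\neq\sigma_C$ pick up such factors, so $\sum_{\sigma\ne\sigma_C}|\prod_i N_{i,\sigma(i)}|$ need not be bounded by anything like $f(m)^2$. The paper avoids this entirely by a Schur-complement step (Lemma~\ref{lem:local}): writing $N=\bigl[\begin{smallmatrix}A&B\\-B^\top&C\end{smallmatrix}\bigr]$ with $A=I+V_X^\top(V_SV_S^\top)^{-1}V_X\succ0$ and $C=I-V_Y^\top(V_SV_S^\top)^{-1}V_Y\succeq0$, one gets $\det N=\det A\cdot\det(C+B^\top A^{-1}B)\ge\det(B)^2$, which eliminates the unbounded $A$-block and reduces the problem to the $m\times m$ matrix $B=V_X^\top(V_SV_S^\top)^{-1}V_Y$ whose entries \emph{are} type-$\RN{1}$ weights. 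Only then does the permanent bound (Lemma~\ref{lem:perm}) and minimality give $|\det B|\ge 0.95\prod_i|b_{ii}|\ge 0.95\,f(m)$.

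Finally, you do not use (or reprove) the structural fact that a minimal $f$-violating cycle has at most one type-$\RN{2}$ arc (Lemma~\ref{lem:typeII}). The paper needs this: when a type-$\RN{2}$ arc is present, it first swaps along that single arc, then applies the Schur reduction to the remaining $(m-1)$ pairs in the updated inner product $\langle\cdot,\cdot\rangle_{S_1}$, and carefully bounds the perturbation $\langle\cdot,\cdot\rangle_{S_1}-\langle\cdot,\cdot\rangle_S$ via the Woodbury identity. Your attempt to handle an arbitrary number of type-$\RN{2}$ arcs simultaneously through $\sigma_C$ bypasses this, but without the Schur step the expansion you are left with is not summable.
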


If there are no type $\RN{2}$ arcs in a minimal $f$-violating cycle $C$, then our proof follows a similar argument as in~\cite{BrownLPST22}. The existence of type $\RN{2}$ arcs creates significant challenges. First, we show that $C$ can contain only one type $\RN{2}$ arc (see Lemma~\ref{lem:typeII}). Second, exchanging on a type $\RN{2}$ arc changes the inner product space associated with the solution predictably by introducing additional bilinear error terms. We use the minimality of cycle $C$ to bound these error terms so that the problem reduces to the case when there is no type $\RN{2}$ arc in $C$. Lemma~\ref{lem:existence_cycle} and Theorem~\ref{thm:main2} together guarantee that as long as the current solution $S$ is sub-optimal to a factor of $O(d^{O(d)})$ compared to the optimal solution, the algorithm will be able to find a cycle $C$ in $G(S)$ such that the determinant of $S\Delta C$ is strictly greater than the determinant $S$ by a factor of two. By iteratively finding and exchanging on such cycles, we obtain Theorem~\ref{thm:main}.

\subsection{Related Work}

\paragraph{Determinant Maximization Under Cardinality Constraints.} Cardinality constraints can be represented as a uniform matroid constraint, and even in this special case, the determinant maximization problem is NP-hard ~\cite{Welch1982}. There has been substantial work on approximation algorithms for this problem ~\cite{Khachiyan1996,SummaEFM15,Nikolov2015,Allen-Zhu17nearoptimal,SinghX18,madan2020maximizing}, and the best know approximation algorithms provide an $e^r$-approximation for $r \leq d$ ~\cite{Nikolov2015}, and an $e^d$-approximation when $r \geq d$ ~\cite{SinghX18}. Improved results are known when $r$ is significantly larger than $d$. Allen-Zhu et al.~\cite{Allen-Zhu17nearoptimal} give a $(1+\epsilon)^d$-approximation when $r\geq \frac{d}{\epsilon}^2$ using spectral sparsification methods. The same guarantee can also be obtained when $r\geq \frac{d}{\epsilon}$~\cite{MadanSTU19} using the local search method. 

\paragraph{Determinant Maximization Under Matroid Constraints.} With general matroid constraints, there are $e^{O(r)}$-estimation algorithms when $r\leq d$~\cite{NikolovS16,anari2017generalization,AnariLGV19} and a $\min\{e^{O(r)}, O\bigl(d^{O(d)}\bigr)\}$-estimation algorithm when $r\geq d$~\cite{madan2020maximizing}. These algorithms use a convex relaxation of the problem and a connection to real stable polynomials and only estimate the optimum value. They can be derandomized into deterministic algorithms, which provide an $e^{O(r^2)}$-approximation when $r\leq d$ and an $O(d^{O(d^3)})$-approximation ~\cite{madan2020maximizing} when $r \geq d$. 

Another approach uses a non-convex program to efficiently find an approximate solution ~\cite{Ebrahimi17} when the constraint matroid is either a partition or a regular matroid. For a partition matroid with parts of constant size and rank $r\leq d$, they provide a $c^r$-approximation for some constant $c > 0$, and for regular matroids of rank $d$, the approximation factor is a function of the size of the ground set of the matroid.

\subsection{Organization}
In section~\ref{sec:prelim}, we discuss a few useful linear algebra primitives, then formally define the exchange graph and discuss the convex programming relaxation of the problem from~\cite{madan2020maximizing}. Finally, we state our algorithm for determinant maximization under matroid constraints. In section~\ref{sec:existence}, we show the existence of an $f$-violating cycle under appropriate suboptimality conditions and prove Lemma~\ref{sec:existence}. In Section~\ref{sec:update}, we prove Theorem~\ref{thm:main2}, which guarantees that our objective increases by a factor of $2$ after updating along an $f$-violating cycle. In Appendix~\ref{appendix:intro}, we provide some of the definitions and preliminaries related to matroids along with the proof of Theorem~\ref{thm:main}. We state the omitted proofs and lemmas from Section~\ref{sec:existence} and Section~\ref{sec:update} in Appendices~\ref{appendix:existence} and~\ref{appendix:update}, respectively. In Appendix~\ref{appendix:perm}, we provide a bound on the permanent of a matrix arising from a minimal $f$-violating cycle (first proved in~\cite{BrownLPST22}), which is crucial for our algorithm's success.

\section{Preliminaries} \label{sec:prelim}
As stated in the introduction, for a subset $S \subseteq [n]$, we also use $S$ to denote the corresponding set of vectors $\{ v_i : i \in S \}$. We use $V_S$ to denote the matrix formed by vectors $v_i$ for $i \in S$. For a subset of vectors $S$ and vectors $u, v$, we use $S-v+u$ to denote the set $(S\backslash \{v\}) \cup \{u\}$. 
\subsection{Linear Algebra}
\begin{lemma}[Matrix determinant lemma]
Suppose $A$ is an invertible $d\times d$ and $U, V$ are $d \times \ell$ matrices. Then $ \det(A + UV^\top) = \det(A) \cdot \det(I_\ell + V^\top A U)$.
\end{lemma}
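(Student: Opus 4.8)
The plan is to realize $\det(A+UV^\top)$ as the determinant of a single $(d+\ell)\times(d+\ell)$ block matrix, evaluated in two different ways via block-triangular factorizations (Schur complements). This avoids ever computing directly with the rank-$\ell$ correction. (I will prove the identity in the form $\det(A+UV^\top)=\det(A)\cdot\det(I_\ell + V^\top A^{-1}U)$, which is the version used later, e.g.\ in equation~\eqref{eq:1}.)

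First I would introduce the $(d+\ell)\times(d+\ell)$ matrix
\[
M = \begin{pmatrix} A & -U \\ V^\top & I_\ell \end{pmatrix}.
\]
Since $A$ is invertible, block Gaussian elimination on the bottom-left block gives the factorization
\[
M = \begin{pmatrix} I_d & 0 \\ V^\top A^{-1} & I_\ell \end{pmatrix}\begin{pmatrix} A & -U \\ 0 & I_\ell + V^\top A^{-1} U \end{pmatrix}.
\]
Both factors are block triangular, so $\det M = \det(A)\cdot\det(I_\ell + V^\top A^{-1}U)$.

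Next I would eliminate the top-right block instead, using the (always invertible) bottom-right block $I_\ell$:
\[
M = \begin{pmatrix} I_d & -U \\ 0 & I_\ell \end{pmatrix}\begin{pmatrix} A + UV^\top & 0 \\ V^\top & I_\ell \end{pmatrix},
\]
which yields $\det M = \det(A+UV^\top)\cdot\det(I_\ell) = \det(A+UV^\top)$. Equating the two expressions for $\det M$ gives the claim.

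There is no real obstacle here; the only thing to watch is verifying that each pair of factors actually multiplies back to $M$ — in particular, the sign $-U$ in the $(1,2)$ block is precisely what makes the Schur complement come out as $I_\ell + V^\top A^{-1}U$ with a plus sign. If one prefers to avoid block matrices entirely, an alternative is to first reduce to $A=I_d$ by writing $\det(A+UV^\top)=\det(A)\det(I_d+(A^{-1}U)V^\top)$ and then establish $\det(I_d+XV^\top)=\det(I_\ell+V^\top X)$ directly, e.g.\ by observing that $XV^\top$ and $V^\top X$ share the same nonzero spectrum with multiplicities (and invoking continuity/polynomial-identity arguments to remove genericity assumptions). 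I would use the block-matrix argument, as it is the shortest and fully elementary.
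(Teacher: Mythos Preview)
Your proof is correct and is the standard Schur-complement argument for the matrix determinant lemma. Note, however, that the paper does not actually supply a proof of this lemma: it is stated in Section~\ref{sec:prelim} as a well-known preliminary fact and then immediately applied (e.g.\ in Lemma~\ref{lem:det_update}). So there is no ``paper's own proof'' to compare against; you have simply filled in a standard omitted proof. Your observation that the statement as written contains a typo --- it should read $\det(I_\ell + V^\top A^{-1} U)$ rather than $\det(I_\ell + V^\top A U)$ --- is also correct, and indeed the version with $A^{-1}$ is what the paper uses in all subsequent applications.
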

Throughout the paper, we use the following lemma to measure the change in determinant when updating the current basis $S$.
\begin{lemma}[Determinant Update] \label{lem:det_update}
Suppose $S$ is a basis of $\mathcal{M}$ which is also linearly spanning and $X, Y \subset [n]$ with $|X| = |Y| = \ell$ and $Y \subset S$. Let $T = (S \cup X)\backslash Y$. Then
\begin{equation*}
    \det(V_TV_T^\top) = \det(V_SV_S^\top) \cdot \det\left(I_{2\ell} + \begin{bmatrix} V_X(V_SV_S^\top)^{-1} V_X & -V_X(V_SV_S^\top)^{-1} V_Y \\
    V_Y(V_SV_S^\top)^{-1} V_X & -V_Y(V_SV_S^\top)^{-1} V_Y \end{bmatrix}\right).
\end{equation*}
\end{lemma}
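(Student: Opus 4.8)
The plan is to prove the Determinant Update lemma by a direct application of the matrix determinant lemma, after writing the passage from $V_S V_S^\top$ to $V_T V_T^\top$ as a rank-$2\ell$ update. First I would observe that since $Y \subseteq S$ and $T = (S \cup X)\backslash Y$ (and implicitly $X \cap S = \emptyset$, which makes $|T| = |S|$), we can write
\begin{equation*}
V_T V_T^\top = V_S V_S^\top + V_X V_X^\top - V_Y V_Y^\top.
\end{equation*}
The next step is to package the two low-rank terms into a single update of the form $U W^\top$. I would set $U = \begin{bmatrix} V_X & V_Y \end{bmatrix}$, a $d \times 2\ell$ matrix, and choose $W$ so that $U W^\top = V_X V_X^\top - V_Y V_Y^\top$; concretely $W = \begin{bmatrix} V_X & -V_Y \end{bmatrix}$ works, since $U W^\top = V_X V_X^\top - V_Y V_Y^\top$. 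Alternatively, and perhaps more transparently for matching the stated block matrix, one can use the signed identity $J = \begin{bmatrix} I_\ell & 0 \\ 0 & -I_\ell \end{bmatrix}$ and write $V_X V_X^\top - V_Y V_Y^\top = U J U^\top$.

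With $A = V_S V_S^\top$ (invertible because $S$ is linearly spanning, so $\det(V_S V_S^\top) > 0$), the matrix determinant lemma gives
\begin{equation*}
\det(V_T V_T^\top) = \det(A + U J U^\top) = \det(A)\cdot \det\!\left(I_{2\ell} + J U^\top A^{-1} U\right).
\end{equation*}
Now I would compute $U^\top A^{-1} U$ as the $2\times 2$ block matrix with blocks $V_X^\top (V_SV_S^\top)^{-1} V_X$, $V_X^\top (V_SV_S^\top)^{-1} V_Y$, $V_Y^\top (V_SV_S^\top)^{-1} V_X$, $V_Y^\top (V_SV_S^\top)^{-1} V_Y$, and then left-multiply by $J$, which negates the bottom row of blocks; adding $I_{2\ell}$ then reproduces exactly the block matrix in the statement. (One must be slightly careful about the convention $V_X$ versus $V_X^\top$ in the notation of the paper; with the paper's convention that $V_X$ denotes the matrix whose columns are the vectors, the Gram-type blocks are $V_X^\top (V_SV_S^\top)^{-1} V_X$ etc., and the displayed statement should be read accordingly.) A symmetric alternative is to use $\det(I + JM) = \det(I + M^{1/2} J M^{1/2})$-type manipulations, but that is unnecessary: the bare matrix determinant lemma with the signed $U$ suffices.

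I do not expect a genuine obstacle here — this is essentially a bookkeeping computation. The only points requiring a little care are: (i) justifying that $A = V_S V_S^\top$ is invertible, which follows from $S$ being linearly spanning; (ii) making sure the rank-$2\ell$ factorization has the right signs so that the $-V_Y(\cdot)V_Y$ and $-V_X(\cdot)V_Y$ blocks come out with the correct sign, which is handled by the sign matrix $J$ (or equivalently by putting the minus sign into $W$); and (iii) keeping the transpose conventions consistent with the rest of the paper. None of these is deep, so the proof is short.
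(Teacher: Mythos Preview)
Your proposal is correct and follows essentially the same route as the paper: write $V_T V_T^\top = V_S V_S^\top + V_X V_X^\top - V_Y V_Y^\top$, factor the update as a rank-$2\ell$ perturbation, and apply the matrix determinant lemma. The only cosmetic difference is where the minus sign is placed in the factorization (the paper uses $\begin{bmatrix}V_X & -V_Y\end{bmatrix}\begin{bmatrix}V_X & V_Y\end{bmatrix}^\top$, which negates the right block-column rather than the bottom block-row as your $J$ does), but the resulting determinants coincide.
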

\begin{proof}
Expanding $V_TV_T^\top$ gives
\begin{equation*}
    \det(V_{T}V_{T}^\top) = \det(V_SV_S^\top + V_XV_X^\top - V_YV_Y^\top) = \det\left(V_SV_S^\top + \begin{bmatrix} V_X & -V_Y\end{bmatrix} \begin{bmatrix} V_X & V_Y\end{bmatrix}^\top\right).
\end{equation*}
Using the matrix determinant lemma gives
\begin{align*}
    \det(V_TV_T^\top) &= \det(V_SV_S^\top) \cdot \det\left(I_{2\ell} + \begin{bmatrix} V_X^\top \\ V_Y^\top \end{bmatrix} (V_S V_S^\top)^{-1} \begin{bmatrix} V_X & -V_Y\end{bmatrix} \right) \\
    &= \det(V_SV_S^\top) \cdot \det\left(I_{2\ell} + \begin{bmatrix} V_X(V_SV_S^\top)^{-1} V_X & -V_X(V_SV_S^\top)^{-1} V_Y \\
    V_Y(V_SV_S^\top)^{-1} V_X & -V_Y(V_SV_S^\top)^{-1} V_Y \end{bmatrix}\right).
\end{align*}
\end{proof}
For any basis $S$ which is linearly spanning, Lemma~\ref{lem:det_update} implies that the matrix $(V_SV_S^\top)^{-1}$ completely characterizes the change in the determinant of $S$ via the inner product space $\R^d \times \R^d \rightarrow \R : (u,v) \rightarrow u^\top (V_SV_S^\top)^{-1} v$. 
For ease of notation, we formally define the inner product space induced by $(V_S V_S^\top)^{-1}$.
\begin{definition}\label{def:bilinear}
For any basis $S$ which is linearly spanning, we use $\langle \cdot , \cdot \rangle_S$ to denote the inner product induced by $(V_S V_S^\top)^{-1}$ and $\norm{\cdot}_S$ to denote the corresponding norm. For any $u, v \in \R^d$, $\langle u, v \rangle_S = u^\top (V_S V_S^\top)^{-1} v$ and $\norm{u}_S = \sqrt{\langle u, u \rangle_S}$. 
\end{definition}
\subsection{Exchange Graph}
We now formally define the exchange graph $G(S)$ and the two types of arc weights in $G(S)$. We also define cycle weights and the notion of $f$-violating cycles, which was first introduced in ~\cite{BrownLPST22}. 
\begin{definition} [Exchange graph]
    For a subset of vectors $S = \{v_1,\ldots, v_r\}$ such that $S$ is independent in $\mathcal{M}$ and $S$ is a linearly spanning in $\R^d$, we define the exchange graph of $S$, denoted by $G(S)$, as a bipartite graph, where the right-hand bipartition consists of vectors in $S$, and the left-hand bipartition consists of all the vectors not in $S$. There is an arc from $v \in S$ to $u \notin S$ if $S -v + u$ is independent in $\mathcal{M}$. There are two arcs, labeled $\RN{1}$ and $\RN{2}$, from a vertex $u \notin S$ to a vertex $v$ in $S$ in $S - v + u$ is a linear spanning set.
\end{definition}

We now formally define the weight functions for arcs of type $\RN{1}$ and type $\RN{2}$. 
\begin{definition}[Weight functions] \label{def:wts}
In the exchange graph $G(S)$, all the backward arcs, from a vector $v \in S$ to a vector $u \in [n] \backslash S$, have weight $0$. The two forward arcs from $u \in [n] \backslash S$ to $v \in S $ have weights
\begin{align*}
    w(u \arc{1} v) &= -\log |u^\top (V_S V_S^\top)^{-1} v| = -\log |\langle u, v \rangle_S|\,, \\
    w(u \arc{2} v) &= -\log \sqrt{(1 +u^\top (V_S V_S^\top)^{-1} u) \cdot (1 +v^\top (V_S V_S^\top)^{-1} v)}\\ &= -\log\sqrt{(1+\norm{u}^2_S)\cdot(1-\norm{v}^2_S)}\,.
\end{align*}
\end{definition}

The following lemma, which is a direct corollary of Lemma~\ref{lem:det_update}, gives some intuition behind the chosen definitions.

\begin{lemma}\label{lem:weight_intuition}
    Let $S$ be a solution with $\det(V_S V_S^\top) > 0$ and $u \notin S$. Then for every $v \in S$
    \[ \frac{\det(V_S V_S^\top - vv^\top + uu^\top)}{\det(V_S V_S^\top)} = \exp\left( -2\cdot w(u \arc{1} v)\right) + \exp\left(-2\cdot w(u \arc{2} v) \right). \]
\end{lemma}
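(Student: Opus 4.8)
The plan is to derive Lemma~\ref{lem:weight_intuition} directly from the Determinant Update lemma (Lemma~\ref{lem:det_update}) applied with $\ell = 1$, $X = \{u\}$, and $Y = \{v\}$. With these choices, $T = S - v + u$, and Lemma~\ref{lem:det_update} gives
\[
\frac{\det(V_S V_S^\top - vv^\top + uu^\top)}{\det(V_S V_S^\top)} = \det\left( I_2 + \begin{bmatrix} \langle u,u\rangle_S & -\langle u,v\rangle_S \\ \langle u,v\rangle_S & -\langle v,v\rangle_S \end{bmatrix} \right),
\]
after rewriting the scalar quantities $u^\top(V_S V_S^\top)^{-1}u$ etc.\ in the inner-product notation of Definition~\ref{def:bilinear}. (Strictly, to apply Lemma~\ref{lem:det_update} one should note $S$ is linearly spanning, which holds since $\det(V_S V_S^\top) > 0$; and one should check $T$ has the right cardinality, which is immediate here.)

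The next step is the routine $2\times 2$ determinant expansion. Writing $a = \norm{u}_S^2$, $b = \norm{v}_S^2$, $c = \langle u,v\rangle_S$, the matrix above is $\begin{bmatrix} 1+a & -c \\ c & 1-b \end{bmatrix}$, whose determinant is $(1+a)(1-b) + c^2$. Thus the ratio equals $c^2 + (1+\norm{u}_S^2)(1-\norm{v}_S^2)$ — which is exactly the decomposition of the $\RN{1}$ and $\RN{2}$ terms appearing in equation~\eqref{eq:1} of the introduction.

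Finally I would match this against the weight definitions in Definition~\ref{def:wts}. By definition $w(u\arc{1}v) = -\log|\langle u,v\rangle_S|$, so $\exp(-2 w(u\arc{1}v)) = \langle u,v\rangle_S^2 = c^2$. Similarly $w(u\arc{2}v) = -\log\sqrt{(1+\norm{u}_S^2)(1-\norm{v}_S^2)}$, so $\exp(-2 w(u\arc{2}v)) = (1+\norm{u}_S^2)(1-\norm{v}_S^2)$. Adding these two gives precisely $c^2 + (1+\norm{u}_S^2)(1-\norm{v}_S^2)$, which we just showed equals the determinant ratio, completing the proof.

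There is essentially no obstacle here: the statement is a one-line corollary of Lemma~\ref{lem:det_update} combined with unwinding the two weight definitions, and the only thing to be slightly careful about is the sign bookkeeping in the $2\times 2$ determinant (the off-diagonal entries are $-c$ and $+c$, so their product contributes $+c^2$, not $-c^2$) and confirming that $\det(V_S V_S^\top) > 0$ indeed licenses the use of $(V_S V_S^\top)^{-1}$ and of Lemma~\ref{lem:det_update}.
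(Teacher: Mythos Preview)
Your proposal is correct and is exactly the approach the paper indicates: Lemma~\ref{lem:weight_intuition} is stated there as a direct corollary of Lemma~\ref{lem:det_update}, and your $2\times2$ determinant expansion together with unwinding Definition~\ref{def:wts} is precisely that corollary.
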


To characterize the set of really negative cycles, we use the following function $f$.
\begin{definition}[Function $f$] We define $f: \mathbb{Z}_+ \rightarrow \mathbb{Z}_+$ as $f(1) = 2$, and $f(i) = (i!)^{11}$ for all $i \geq 2$.
\end{definition}

Now we redefine cycle weights, $f$-violating cycles, and minimal $f$-violating cycles with respect to the weights introduced above.

\begin{definition}[Cycle Weight]
    The weight of a cycle $C$ in $G(S)$ is defined as $w(C) = \sum_{e \in C} w(e)$. 
\end{definition}
We use $|C|$ to denote the number of arcs in cycle $C$ and $\mathrm{ver}(C)$ to denote the set of vertices in cycle $C$. For a basis $S$ and cycle $C$ in $G(S)$, we use $S\Delta C$ to denote the symmetric difference of $S$ and $\mathrm{ver}(S)$.

\begin{definition}[$f$-Violating Cycle]\label{def:f-violating}
    A cycle $C$ in $G(S)$ is called an $f$-violating cycle if 
    \[ w(C) < -\log f(|C|/2)\,.\]
\end{definition}

\begin{definition}[Minimal $f$-Violating Cycle]
    A cycle $C$ in $G(S)$ is called a minimal $f$-violating cycle if 
    \begin{itemize}
        \item $C$ is an $f$-violating cycle, and
        \item for all cycles $C'$ such that $\mathrm{ver}(C') \subsetneq \mathrm{ver}(C)$, $C'$ is not an $f$-violating cycle.
    \end{itemize}
\end{definition}
A minimal $f$-violating cycle in $G(S)$ (if one exists) can be found in polynomial time following the same method as~\cite{BrownLPST22}.

\subsection{Convex Program}\label{sec:convex_program}
To guarantee the existence of an $f$-violating cycle, Lemma~\ref{lem:existence_cycle} requires that the ground set of $\mathcal{M}$ has cardinality at most $r + \text{poly}(d)$. A priori, this assumption might seem very limiting. However,~\cite{madan2020maximizing} gave a convex relaxation of the problem, which admits an $r + O(d^2)$ sparse solution. For the constraint matroid $\mathcal{M} = ([n], \mathcal{I})$, let $\mathcal{I}_s(\mathcal{M}) := \{S \in \mathcal{I} : |S| = s\}$ be the set of all independent sets of size $s$. We denote by $\mathcal{P}(\mathcal{M})$ the matroid base polytope of $\mathcal{M}$, which is the convex hull of the bases of $\mathcal{M}$. For any vector $z \in \R^n$ and a subset $S\subseteq [n]$, let $z(S) := \sum_{i\in S} z_i$. We let $\mathcal{Z} = \{z \in \R^n: \forall S \in \mathcal{I}_d(\mathcal{M}), z(S) \geq 0\}$. In~\cite{madan2020maximizing}, they introduce the optimization problem
\begin{equation}
     \sup_{x \in \mathcal{P}(\mathcal{M})} \inf_{z \in \mathcal{Z}} g(x,z) := \log\det \left( \sum_{i \in N} x_i e^{z_i} v_i v_i^\top \right).\tag{CP} \label{eq:cvx_program}
\end{equation}
The above program is a convex relaxation of the determinant maximization problem. Let $\OPT_{CP}$ denote the optimal value of the convex program and let $\OPT$ denote the optimal value of the determinant maximization problem on $\mathcal{M}$. We require the following fact about this convex program which can be found in ~\cite{madan2020maximizing}.

\begin{theorem}\label{thm:sparse-sol-gap}\cite{madan2020maximizing} Given vectors  $v_1, \ldots, v_n$ with constraint matroid $\mathcal{M} = ([n], \mathcal{I})$,
    there is a polynomial time algorithm that returns an optimal solution $\hat{x}$ to~\ref{eq:cvx_program} such that $|\supp(\hat{x})|\leq r + 2\left( \binom{d+1}{2} + d \right)$ where $\supp(\hat{x})$ is the set of variables with a non-zero value. Moreover,there exists a basis $T \subseteq \supp(\hat{x})$ such that
    \[ \det\left(\sum_{i\in T} v_iv_i^\top \right) \geq (2e^5 d)^{-d} \cdot \max_{S^* \in \mathcal{I}}\det\left( \sum_{i \in S^*} v_iv_i^\top \right). \]
\end{theorem}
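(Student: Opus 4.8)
This statement bundles two essentially independent claims: (i) the convex program~\ref{eq:cvx_program} admits an optimal solution $\hat x$, computable in polynomial time, with $|\supp(\hat x)| \le r + 2\bigl(\binom{d+1}{2}+d\bigr)$; and (ii) some basis of $\mathcal M$ contained in $\supp(\hat x)$ achieves a $(2e^5 d)^{-d}$ fraction of $\OPT$. I would prove them in that order. Throughout, write $M^\star := \sum_i \hat x_i e^{\hat z_i} v_i v_i^\top$ for the matrix attached to a saddle point $(\hat x,\hat z)$.

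\emph{Sparsity.} First I would put~\ref{eq:cvx_program} into a form where standard convex-optimization tools apply: $g(x,\cdot)$ is convex on the cone $\mathcal Z$, $g(\cdot,z)$ is concave on the base polytope $\mathcal P(\mathcal M)$, and after a coercivity argument restricting $z$ to a bounded region without changing the value, Sion's minimax theorem yields a saddle point $(\hat x,\hat z)$ and polynomial-time computability. Among all optimal $\hat x$, pick one with $|\supp(\hat x)|$ minimal. The first-order conditions say: (a) $\hat x$ lies on the face of $\mathcal P(\mathcal M)$ maximizing the linear functional $i \mapsto e^{\hat z_i}\, v_i^\top (M^\star)^{-1} v_i$, which by matroid theory is the base polytope of a direct sum of connected minors cut out by a chain of tight rank constraints, whose $0/1$ coordinates form an independent set and hence number at most $r$; (b) by complementary slackness for the inner infimum, $\nabla_z g(\hat x,\hat z)$ is a nonnegative combination of the normals $\mathbf 1_A$ of the tight $\mathcal Z$-constraints, and a Carathéodory-type reduction lets one assume at most $d$ of these are active; (c) any perturbation $\delta$ supported on the remaining ``free fractional'' coordinates that stays on the matroid face, keeps those $\le d$ constraints tight, and satisfies $\sum_i \delta_i e^{\hat z_i} v_i v_i^\top = 0$ (a system of $\binom{d+1}{2}$ equations, one per coordinate of the space of symmetric $d\times d$ matrices) keeps $\hat x\pm t\delta$ optimal, contradicting minimality of the support. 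A standard vertex/iterative-rounding count — fractional support at most twice the number of tight ``non-matroid'' constraints — then gives $|\supp(\hat x)| \le r + 2\bigl(\binom{d+1}{2}+d\bigr)$.

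\emph{The integrality gap.} I would first show $e^{\OPT_{CP}} \ge \OPT$ and then that some basis inside $\supp(\hat x)$ captures a $d^{-O(d)}$ fraction of $e^{\OPT_{CP}}$. For the first inequality, note that $\det(A+vv^\top)\ge\det(A)$, so $\OPT$ is attained at a basis $S^\star$; restricting the outer supremum in~\ref{eq:cvx_program} to $x=\mathbf 1_{S^\star}\in\mathcal P(\mathcal M)$ and checking that the tight $\mathcal M$-independent $d$-subsets inside $S^\star$ force $\inf_{z\in\mathcal Z}\log\det\bigl(\sum_{i\in S^\star} e^{z_i} v_i v_i^\top\bigr)=\log\det(V_{S^\star}V_{S^\star}^\top)$ gives $\OPT_{CP}\ge\log\OPT$, without needing duality. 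For the rounding, use that $\hat x$ is a distribution over bases of $\mathcal M$ all contained in $\supp(\hat x)$, and that the polynomial $y\mapsto\det\bigl(\sum_i \hat x_i y_i v_i v_i^\top\bigr)$ is real stable with nonnegative coefficients by Cauchy--Binet. Applying proportional volume sampling on the bases of $\mathcal M$ — a measure $\mu$ with $\mu(S)$ proportional to $\det(V_S V_S^\top)$ times a matroid weight read off from $\hat x$ and $\hat z$ — together with Gurvits/Anari--Oveis Gharan capacity lower bounds, yields $\E_{S\sim\mu}\bigl[\det(V_S V_S^\top)\bigr]\ge (2e^5 d)^{-d}\, e^{\OPT_{CP}}\ge(2e^5 d)^{-d}\OPT$, where the constants $2$ and $e^5$ absorb $d^d/d!$, the $e^{z}$-reweighting, and the usual rounding losses; any basis $T$ attaining at least this expectation lies in $\supp(\hat x)$ because $\mu$ has marginals supported there, and proves~(ii).

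\emph{Main obstacle.} The crux is the rounding bound with the precise exponential constant $(2e^5 d)^{-d}$: it rests on the full real-stable / strongly Rayleigh toolbox (log-concavity of the basis generating polynomial, capacity estimates, proportional volume sampling over matroids), and essentially all the slack accumulates there. A secondary, fiddlier point is the rigorous treatment of the inner infimum in the sparsity argument — establishing the saddle point and, in particular, the $\le d$ bound on active $\mathcal Z$-constraints that is responsible for the ``$+\,d$'' (rather than something like $+\binom{n}{d}$) in the support bound.
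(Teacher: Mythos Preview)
The paper does not prove this theorem at all: it is quoted verbatim from~\cite{madan2020maximizing}, and the only commentary the paper adds is the two-sentence note immediately following the statement, observing that~\cite{madan2020maximizing} gives a randomized rounding achieving the determinant bound in expectation, and that the mere existence of a basis $T\subseteq\supp(\hat x)$ meeting that bound is all the present paper needs. So there is no ``paper's own proof'' to compare against here; your proposal goes far beyond what the paper supplies.

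That said, your sketch is a reasonable outline of how the result is actually obtained in~\cite{madan2020maximizing}: sparsity via first-order conditions at a saddle point plus a dimension count in the space of symmetric $d\times d$ matrices (whence the $\binom{d+1}{2}$) together with matroid-polytope face structure (whence the $r$) and the $\mathcal Z$-side constraints (whence the extra~$d$); and the gap bound via real-stability / capacity estimates applied to a randomized rounding, with the existential statement following from the expectation bound exactly as the paper's note says. One point worth tightening: your $\OPT_{CP}\ge\log\OPT$ step is correct but the mechanism you name is slightly off --- the clean argument is Cauchy--Binet: for $x=\mathbf 1_{S^\star}$ one has $\det\bigl(\sum_{i\in S^\star} e^{z_i}v_iv_i^\top\bigr)=\sum_{|Y|=d,\,Y\subset S^\star}e^{z(Y)}\det(V_YV_Y^\top)$, and every such $Y$ is independent in $\mathcal M$ (being a subset of a basis), so $z(Y)\ge 0$ and the sum dominates $\det(V_{S^\star}V_{S^\star}^\top)$. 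A second minor gap is that ``pick $\hat x$ of minimal support'' is not by itself a polynomial-time procedure; you would want to phrase this as finding an extreme point of the optimal face, or as an explicit iterative support-reduction, to match the algorithmic claim.
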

Note that~\cite{madan2020maximizing}  does not explicitly address the existence of the set $T$. They instead provide a randomized rounding algorithm that achieves the determinant bound in expectation (See the proof of Theorem 2.3 in ~\cite{madan2020maximizing} for details), thus implying the existence of such a basis $T$. We only need the existential result for the success of our algorithm.

\subsection{Algorithm}
We begin by finding a sparse optimal fractional solution $\hat{x}$ to~\eqref{eq:cvx_program}. We then define a new matroid restricted to the support of $\hat{x}$. After this, our algorithm is structurally identical to the algorithm in~\cite{BrownLPST22}.

\begin{algorithm}[H]
\caption{Algorithm to find an approximation to $OPT$}
	\label{alg:exch}
\begin{algorithmic}
\Require $\mathcal{M} = ([n], \mathcal{I})$, $V = \{v_1, \ldots, v_n\}$, $\hat{x}$
\State $E \leftarrow \supp(\hat{x})$
\State $\mathcal{M}_E \leftarrow (E, \mathcal{I}_{|E|})$ \Comment{New matroid with sparse support and rank $r$}
\State $S \leftarrow$ basis of $\mathcal{M}_E$ with $\det(V_S V_S^\top) > 0$
\While{there exists an $f$-violating cycle in $G(S)$}
    \State $C =$ minimal $f$-violating cycle in $G(S)$
    \State $S = S\triangle C$
\EndWhile
\State Return $S$
\end{algorithmic}
\end{algorithm}
When we update $S \leftarrow S\triangle C$, Theorem \ref{thm:main2} guarantees that the new set is independent in the constraint matroid $\M$ and its determinant is strictly greater than that of $S$.

\section{Sparsity and Existence of a Short Cycle} \label{sec:existence}
In this section, we prove if the size of the ground set of $\mathcal{M}$ is $r+k$, and $S$ is a basis with suboptimality ratio $d^{4d}\cdot k^d\cdot f(2d)$, then $G(S)$ contains an $f$-violating cycle. Combining this with the sparsity guarantee of~\eqref{eq:cvx_program} ensures that an optimality gap of $d^{cd}$ (for some constant $c$) suffices for the existence of an $f$-violating cycle.

We start by proving there exists a basis $T$ such that the symmetric difference between $T$ and $S$ is $\ell$ and the ratio of the determinants of $T$ and $S$ is at least $(2\ell!)^{12}$, then $G(S)$ contains an $f$-violating cycle (see Lemma~\ref{lem:cyc_det}). A crucial ingredient of this proof is to relate the inner product space induced by $(V_SV_S^\top)^{-1}$ to arc weights in $G(S)$. With this fact in hand, we use the augmentation property of matroids to construct a basis that differs from $S$ in only $2d$ elements to complete the proof of Lemma~\ref{lem:existence_cycle}.

We restate Lemma~\ref{lem:existence_cycle} for the reader's convenience.
 
\existence*
\begin{proof}
Let $T = \{u_1, u_2, \ldots, u_r\}$ and $S = \{v_1, v_2, \ldots, v_r\}$ such that $S - v_i + u_i \in \mathcal{I}$ for all $i \in [r]$. By Fact~\ref{fact:basis_exch}, such an ordering always exists. Using the Cauchy-Binet formula,
\begin{equation}
    \frac{\det(V_TV_T^\top)}{\det(V_S V_S^\top)} = \frac{\sum_{Y \subset T, |Y| = d} \det(V_YV_Y^\top)}{\det(V_S V_S^\top)} = \sum_{Y \subset T, |Y| = d} \det(V_Y^\top (V_S V_S^\top)^{-1} V_Y)\,. \label{eq:2}
\end{equation}
Define $ U := T \backslash S$. Since the ground set contains $r+k$ elements, $|U| \leq k$. We partition the set of all $d$-subsets of $T$ by their intersection with $U$. For a set $W \subseteq U$, let $S_W = \{Y: Y \subset T, |Y| = d, Y \cap U = W\}$. 
Then
\begin{align*}
     \frac{\det(V_TV_T^\top)}{\det(V_S V_S^\top)} &= \!\!\!\sum_{Y \subset T, |Y| = d} \det(V_Y^\top (V_S V_S^\top)^{-1} V_Y) = \!\!\!\sum_{W \subseteq U, |W| \leq d} \sum_{Y \in S_W} \det(V_Y^\top (V_S V_S^\top)^{-1} V_Y)\,.
\end{align*}
The number of subsets of $U$ of size at most $d$ is $\sum_{i = 0}^d \binom{k}{d} \leq d\cdot (k/d)^d$. Therefore, there exists a $W \subseteq U$ with $|W| \leq d$ such that \begin{equation*}
    \sum_{Y \in S_W} \det(V_Y^\top (V_S V_S^\top)^{-1} V_Y) \geq  d^{d-1} \cdot k^{-d}\cdot \frac{\det(V_TV_T^\top)}{\det(V_S V_S^\top)} \geq (2d)! \cdot f(2d)\,,
\end{equation*}
where the last inequality follows from the hypothesis of the lemma.

Since $\{S \cap T\} \cup W \subset T$, by the downward closure property of matroids, $\{S \cap T\} \cup W$ is independent in $\mathcal{M}$. So we can extend $\{S \cap T\} \cup W $ to a basis, $T_1$, of $\mathcal{M}$ in $S \cup W$ such that $\{S \cap T\} \cup W \subseteq T_1$. Again, using the Cauchy-Binet formula on $T_1T_1^\top$ gives
\begin{align*}
    \frac{\det(V_{T_1}V_{T_1}^\top)}{\det(V_S V_S^\top)} &=  \sum_{Y \subset T_1, |Y| = d} \det(V_Y^\top (V_S V_S^\top)^{-1} V_Y) \geq \sum_{Y \subset\{S \cap T\} \cup W, |Y| = d} \det(V_Y^\top (V_S V_S^\top)^{-1} V_Y)\\
    &\geq \sum_{Y \in S_W, |Y| = d} \det(V_Y^\top (V_S V_S^\top)^{-1} V_Y) \geq (2d)! \cdot f(2d).
\end{align*}
Since $|T_1\backslash S| \leq d$, using Lemma~\ref{lem:cyc_det}, there exists an $f$-violating cycle in $G(S)$.
\end{proof}

\begin{lemma} \label{lem:cyc_det}
Let $T, S$ be two bases of $\mathcal{M}$ such that $|T\backslash S| = \ell$ and $\det(V_TV_T^\top) \geq \det(V_S V_S^\top) \cdot (2\ell)! \cdot f(2\ell)$. Then there exists an $f$-violating cycle in $G(S)$.
\end{lemma}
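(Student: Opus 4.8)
The plan is to relate the determinant ratio $\det(V_TV_T^\top)/\det(V_SV_S^\top)$ directly to a sum over certain combinatorial structures in the exchange graph $G(S)$, and then extract a single $f$-violating cycle from that sum by an averaging argument. Since $|T\backslash S| = \ell$, write $T\backslash S = \{u_1,\ldots,u_\ell\}$ and, using the basis exchange property (Fact~\ref{fact:basis_exch}), pair these with $S\backslash T = \{v_1,\ldots,v_\ell\}$ so that $S - v_i + u_i \in \mathcal{I}$ for each $i$. First I would apply Lemma~\ref{lem:det_update} with $X = T\backslash S$, $Y = S\backslash T$ to write the determinant ratio as $\det(I_{2\ell} + M)$ where $M$ is the $2\ell\times 2\ell$ block matrix with entries given by the inner product $\langle\cdot,\cdot\rangle_S$. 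Expanding this determinant via the Leibniz/permanent-type expansion produces a sum of products of entries of $M$, i.e.\ products of terms of the form $\langle u_i, v_j\rangle_S$, $1+\langle u_i,u_i\rangle_S$, $1-\langle v_j,v_j\rangle_S$ — and by the definitions in Definition~\ref{def:wts}, each such product corresponds (up to sign, and after taking $-\tfrac12\log$) to the weight of a union of cycles in $G(S)$: the type $\RN{1}$ arcs come from the off-diagonal $\langle u_i,v_j\rangle_S$ factors together with the zero-weight backward arcs, and the type $\RN{2}$ arcs come from the diagonal $(1+\|u_i\|_S^2)(1-\|v_j\|_S^2)$ contributions.

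The key step is then a counting argument: the $2\ell\times 2\ell$ determinant expansion has at most $(2\ell)!$ terms, so if the total is at least $(2\ell)!\cdot f(2\ell)$, some single term has absolute value at least $f(2\ell)$. That term corresponds to a collection of vertex-disjoint cycles in $G(S)$ with combined "multiplicative weight" at least $f(2\ell)$, i.e.\ combined (additive) weight at most $-\log f(2\ell) < -\tfrac12\log f(2\ell)$. Because $f$ grows faster than multiplicatively in the relevant sense — specifically $f(a)f(b) \le f(a+b)$ would be the clean inequality one wants, and indeed $(a!)^{11}(b!)^{11} \le ((a+b)!)^{11}$ holds — at least one of the individual cycles $C$ in this collection must satisfy $w(C) < -\log f(|C|/2)$; otherwise each cycle $C_j$ has multiplicative weight at most $1/f(|C_j|/2)$ and their product is at most $1/\prod f(|C_j|/2) \le 1/f(\ell)$, contradicting the bound. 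I would also need to handle the sign issue: the expansion of $\det(I + M)$ has signed terms, but since we have a lower bound on the (positive) total, at least one term is positive and large, or more carefully, one uses that the permanent-type bound controls the sum of absolute values; this is where Appendix~\ref{appendix:perm}'s permanent bound is presumably invoked to ensure the cancellations don't destroy the argument.

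The main obstacle I anticipate is precisely controlling the combinatorial structure of the determinant expansion: a generic term in the expansion of $\det(I_{2\ell}+M)$ is a product of entries forming a permutation, which decomposes into cycles on the $2\ell$ indices, but one must verify that each such cycle (after interleaving with the implicit backward arcs of weight $0$ and correctly assigning type $\RN{1}$ vs type $\RN{2}$ labels) genuinely corresponds to a closed walk in $G(S)$ respecting the bipartite arc structure and the matroid-exchange condition $S - v_i + u_i \in \mathcal{I}$. In particular, diagonal entries of $M$ in a permutation correspond to fixed points, which should be read as type $\RN{2}$ self-loops $u_i \arc{2} v_i$ plus the backward arc $v_i \to u_i$ (a $2$-cycle), and this is legitimate exactly because we chose the pairing so that $S - v_i + u_i \in \mathcal{I}$. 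Off-diagonal entries $\langle u_i, v_j\rangle_S$ with $i \ne j$ chain together into longer cycles, again closed up using zero-weight backward arcs, but one must check $S - v_j + u_i$ is linearly spanning so the type $\RN{1}$ arc $u_i \arc{1} v_j$ exists — this should follow from $\det(V_TV_T^\top) > 0$ and the structure of the nonzero term. Once this dictionary between permutation-cycle terms and $G(S)$-cycles is nailed down, the averaging plus the super-multiplicativity of $f$ finishes the proof; I would also double-check the factor-of-$\tfrac12$ bookkeeping between $\det$ and $\log$-weights, since $w(C)$ is defined via $-\log$ of weights whose product gives the square-root-free determinant ratio.
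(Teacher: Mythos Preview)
Your overall architecture is right and matches the paper: apply Lemma~\ref{lem:det_update}, expand the resulting $2\ell\times 2\ell$ determinant, average over the $(2\ell)!$ permutations to find a single $\tau$ with $\prod_i |a_{i,\tau(i)}| \geq f(2\ell)$, translate this product into cycle weights in $G(S)$, Euler-decompose, and use super-multiplicativity of $f$ to extract one $f$-violating cycle. The permanent bound from Appendix~\ref{appendix:perm} is not needed here at all; plain triangle inequality on the Leibniz expansion suffices, since you only need one large-in-absolute-value term.

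The genuine gap is in your dictionary between matrix entries and arc weights. You list the possible entries as $\langle u_i, v_j\rangle_S$, $1+\|u_i\|_S^2$, $1-\|v_j\|_S^2$, but the $2\ell\times 2\ell$ matrix also has off-diagonal entries $\langle u_i, u_j\rangle_S$ (top-left block) and $\langle v_i, v_j\rangle_S$ (bottom-right block), and these do \emph{not} correspond to any arc of $G(S)$: forward arcs only go from some $u\notin S$ to some $v\in S$. So a generic permutation $\tau$ picks up factors that have no direct cycle interpretation. The paper handles this as follows. For $i$ in the top-left block one bounds $|a_{i,\tau(i)}| \le \sqrt{(1+\|u_i\|_S^2)(1+\|u_{\tau(i)}\|_S^2)}$ by Cauchy--Schwarz, and for $i$ in the bottom-right block one bounds $|a_{i,\tau(i)}| \le \sqrt{(1-\|v_i\|_S^2)(1-\|v_{\tau(i)}\|_S^2)}$ by Lemma~\ref{lem:inv}. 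Each such bound is a product of two ``half''-type-$\RN{2}$ factors, one involving a $u$ and one involving a $v$---but in the wrong pairing. The key observation is that the number of $\tau$-entries landing in the top-left block equals the number landing in the bottom-right block (since $\tau$ is a permutation). One then chooses an arbitrary bijection $h$ between these two index sets and regroups the square-root factors so that every surviving factor has the form $\sqrt{(1+\|u_\bullet\|_S^2)(1-\|v_\bullet\|_S^2)}$, which \emph{is} a type-$\RN{2}$ arc weight. This produces a new permutation $\delta$ on $[2\ell]$ whose associated product of arc weights still dominates $f(2\ell)$, and only then can you build the Eulerian multigraph and decompose into cycles. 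Without this re-pairing step, the translation from ``large permutation term'' to ``cycles in $G(S)$'' does not go through.
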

\begin{proof}
Let $X := T \backslash S$ and $Y := S\backslash T$. Using Lemma~\ref{lem:det_update}, we have
\begin{align*}
    \frac{\det(V_TV_T^\top)}{\det(V_S V_S^\top)} 
    &= \det\left( \begin{bmatrix}
   I_\ell + V_X^\top (V_S V_S^\top)^{-1}  V_X & -V_X^\top (V_S V_S^\top)^{-1}  V_Y \\
    -V_Y^\top (V_S V_S^\top)^{-1}  V_X & I_\ell - V_Y^\top (V_S V_S^\top)^{-1}  V_Y
    \end{bmatrix} \right) .
\end{align*}
Let $A := \begin{bmatrix}
   I_\ell + V_X^\top (V_S V_S^\top)^{-1}  V_X & -V_X^\top (V_S V_S^\top)^{-1} V_Y \\
    -V_Y^\top (V_S V_S^\top)^{-1}  V_X & I_\ell - V_Y^\top (V_S V_S^\top)^{-1}  V_Y
    \end{bmatrix} $ and $a_{i,j} := [A]_{i,j}$.
Expanding the determinant of $A$ gives
\begin{align*}
    |\det(A)| = |\sum_{\sigma \in \mathcal{S}_{2\ell}} \prod_{i=1}^{2\ell} a_{i,\sigma(i)} | \leq \sum_{\sigma \in \mathcal{S}_{2\ell}} \prod_{i=1}^{2\ell} |a_{i,\sigma(i)}|.
\end{align*}
Since $|\mathcal{S}_{2\ell}| = (2\ell)!$, there exists a permutation $\tau \in \mathcal{S}_{2\ell}$ such that 
\begin{equation}
    \prod_{i=1}^{2\ell} |a_{i,\tau(i)}| \geq \frac{1}{(2\ell)!} \cdot \frac{\det(V_TV_T^\top)}{\det(V_S V_S^\top)} \geq f(2\ell)\,. \label{eq:3}
\end{equation}
We now relate $|a_{i, \tau(i)}|$ to the weight of an arc in $G(S)$ for every $i$ and then use those arcs to construct a set of cycles, one of which will give an $f$-violating cycle.

To bound $|a_{i, \tau(i)}|$, consider the partition of the set of indices, $[2\ell]$, into four sets according to $\tau$ as follows. Let  $I_1 = \{i \in [\ell]: \tau(i) \leq \ell\}$, $I_2 = \{i \in [\ell]: \tau(i) > \ell\}$, $I_3 = \{i \in [\ell+1, 2\ell]: \tau(i) \leq \ell\}$, and $I_4 = \{i \in [\ell+1, 2\ell]: \tau(i) > \ell\}$ (see Figure \ref{fig:index_partition}). Note that $|I_1| = |I_4|$; the number of permutation entries in the top left block is equal to the number of permutation entries in the bottom right block. Let $X = \{u_1, u_2, \ldots, u_\ell\}$ and $Y =  \{v_{\ell+1}, v_{\ell+2}, \ldots, v_{2\ell}\}$ so that $S - v_{\ell+i} + u_i \in \mathcal{I}$ for all $i\in [\ell]$.

\begin{figure}
    \centering
    \includegraphics[width=0.6\textwidth]{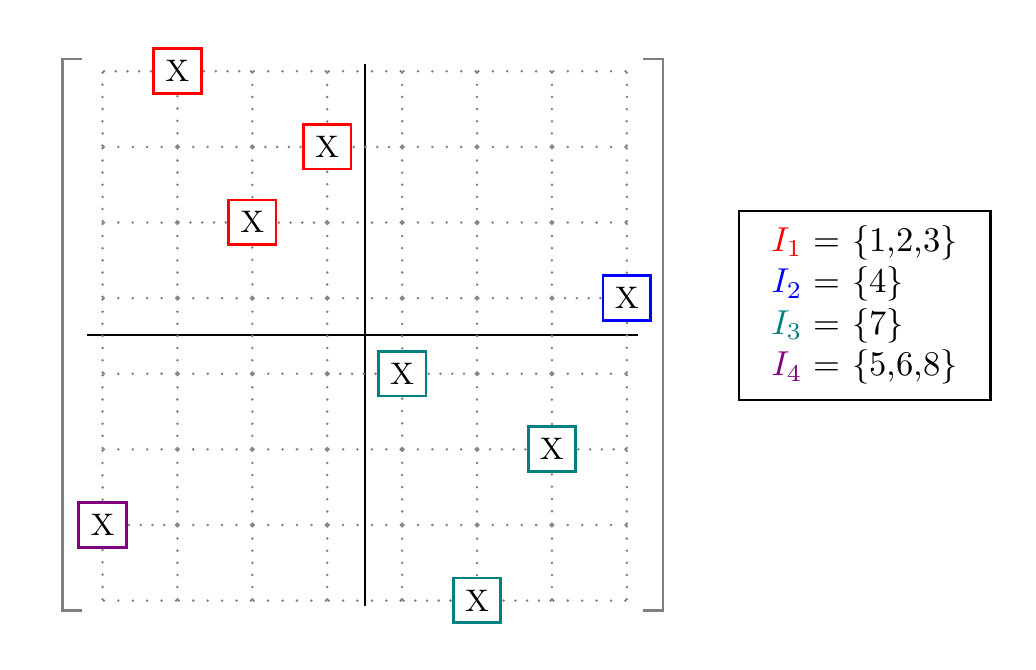}
    \caption{A permutation and corresponding index partition when $\ell = 4$.}
    \label{fig:index_partition}
\end{figure}

For any $i \in I_1$, 
\begin{align*}
    |a_{i, \tau(i)}| &\leq 1+|u_i^\top (V_S V_S^\top)^{-1} u_{\tau(i)}| = 1 + |\langle u_i, u_{\tau(i)} \rangle_S| \leq 1 + \norm{u_i}_S \cdot \norm{u_{\tau(i)}}_S \tag{by Cauchy Schwarz}\\
    &\leq \sqrt{(1+\norm{u_i}_S^2) \cdot (1+\norm{u_{\tau(i)}}_S^2)}\,.
\end{align*}
For any $i \in I_4$ with $i \neq \tau(i)$, by Lemma~\ref{lem:inv}, $|a_{i, \tau(i)}| = |\dotv{i}{\tau(i)}|\leq \sqrt{(1-\norm{v_i}_S^2)\cdot (1-\norm{v_{\tau(i)}}_S^2)}$.
This inequality is trivially true when $\tau(i) = i$ as $a_{i,i} = (1-\normv{i}^2)$ for all $i \in I_4$.

For  $i \in I_2$, $|a_{i,\tau(i)}| = |u_i^\top (V_S V_S^\top)^{-1} v_{\tau(i)}| = |\langle u_i, v_{\tau(i)}\rangle_S|$. Similarly, for $i \in I_3$, $|a_{i, \tau(i)}| = |\langle u_{\tau(i)}, v_{i}\rangle_S|$.

Putting it all together,
\begin{align*}
    \prod_{i=1}^{2\ell} |a_{i,\tau(i)} | 
    {}={}&  \left(\prod_{i\in I_1} |a_{i,\tau(i)} |\right)\cdot  \left(\prod_{i\in I_2} |a_{i,\tau(i)} |\right)\cdot  \left(\prod_{i\in I_3} |a_{i,\tau(i)} |\right)\cdot  \left(\prod_{i\in I_4} |a_{i,\tau(i)} |\right)\\
    {}\leq{}&  \left(\prod_{i\in I_1} \sqrt{(1+\norm{u_i}_S^2)\cdot (1+\norm{u_{\tau(i)}}_S^2)}\right)\cdot  \left(\prod_{i\in I_4} \sqrt{(1-\norm{v_i}_S^2)\cdot (1-\norm{v_{\tau(i)}}_S^2)}\right) \\
    & {}\cdot{}  \left(\prod_{i\in I_2} |\langle u_{i},v_{\tau(i)} \rangle_S |\right)\cdot  \left(\prod_{i\in I_3} |\langle u_{\tau(i)}, v_i \rangle_S |\right).
\end{align*}  
We will now relate the R.H.S. of the inequality to the weights of cycles in $G(S)$. For this purpose, we define a new permutation $\delta$ as follows.
Since $|I_1| = |I_4|$, first define a bijection $h: I_1 \rightarrow I_4$.
Then the permutation $\delta$ is given by $\delta(i)= \tau(h(i)) $ for every $i \in I_1$, $\delta(i)= \tau(h^{-1}(i)) $ for every $i \in I_4$, and $\delta(i)= \tau(i) $ for any $i \in I_2 \cup I_3$. Then the inequality becomes
\begin{align}
    \prod_{i=1}^{2\ell} |a_{i,\tau(i)} |
    {}\leq{}& \prod_{i\in I_1} \sqrt{(1+\norm{u_i}_S^2)\cdot (1-\norm{v_{\delta(i)}}_S^2)}\cdot  \prod_{i\in I_4} \sqrt{(1-\norm{v_{i}}_S^2)(1+\norm{u_{\delta(i)}}_S^2)}\cdot \notag \\
    & {}\cdot{} \prod_{i\in I_2} |\langle u_{i},v_{\delta(i)} \rangle_S|\cdot  \prod_{i\in I_3} |\langle u_{\delta(i)}, v_i \rangle_S|\,, \label{eq:4}
\end{align}
where we have swapped some terms using the new permutation to ensure that each square root has one addition and one subtraction term.
We claim that  $S + u_i - v_{\delta(i)}$  is a linear spanning subset of $\mathbb{R}^d$ for any $i \in I_1 \cup I_2$. To observe this, note that  $\det(V_S V_S^\top + u_iu_i^\top -v_{\delta(i)}v_{\delta(i)}^\top) = \det(V_S V_S^\top)\cdot (\dotp{i}{\delta(i)}^2 + (1+\normu{i}^2) \cdot (1-\normv{\delta(i)}^2)) $ from Lemma~\ref{lem:weight_intuition}. Since $ \prod_{i=1}^{2\ell} |a_{i,\tau(i)} | > 0$,  inequality~\eqref{eq:4} implies, either $|\dotp{i}{\delta(i)}| > 0$ or $(1+\norm{u_i}_S^2)\cdot (1-\norm{v_{\delta(i))}}_S^2) > 0$ for every $i \in I_1 \cup I_2$. Therefore $\det(V_S V_S^\top + u_iu_i^\top -v_{\delta(i)}v_{\delta(i)}^\top) > 0$ and $S + u_i - v_{\delta(i)}$ is a linearly spanning set of cardinality $r$.  So, $G(S)$ contains forward arcs $u_i \arc{1} v_{\delta(i)}$ and $u_i \arc{2} v_{\delta(i)}$ for all $i \in I_1 \cup I_2$. Similarly, $G(S)$ contains arcs $u_{\delta(i)} \arc{1} v_{i}$ and $u_{\delta(i)} \arc{2} v_{i}$ for every $i \in I_3 \cup I_4$.
So, we can rewrite equation~\eqref{eq:4} in terms of arc weights from $G(S)$.
\begin{align*}
     \prod_{i=1}^{2\ell} |a_{i,\tau(i)} |&\leq \exp(-\sum_{i\in I_1} w(u_i \arc{2} v_{\delta(i)}) - \sum_{i\in I_4} w(u_{\delta(i)} \arc{2} v_{i}) -  \sum_{i\in I_2 } w(u_i \arc{1} v_{\delta(i)}) - \prod_{i\in I_3 }w (u_{\delta(i)}\arc{1} v_i))\,.
\end{align*}
From~\eqref{eq:3}, we have $ \prod_{i=1}^{2\ell} |a_{i,\tau(i)} |\geq f(2\ell)$ and therefore
\begin{equation}
   \exp(-\sum_{i\in I_1} w(u_i \arc{2} v_{\delta(i)}) - \sum_{i\in I_4} w(u_{\delta(i)} \arc{2} v_{i}) -  \sum_{i\in I_2 } w(u_i \arc{1} v_{\delta(i)}) - \prod_{i\in I_3 }w (u_{\delta(i)}\arc{1} v_i)) \geq f(2\ell).  \label{eq:5}
\end{equation}
Consider a weighted bipartite graph $H(S)$ with bipartitions $X$ and $Y$, and forward arcs $u_i \arc{2} v_{\delta(i)}$ for all $i \in I_1$,
$u_i \arc{1} v_{\delta(i)}$  for all $i \in I_2$,
 $u_{\delta(i)} \arc{1} v_i$ for all $i \in I_3 $,
$u_{\delta(i)} \arc{2} v_i$ for all $i \in I_4 $. Additionally, for each $i \in [\ell]$, we add two backward arcs from $v_{\ell + i} \rightarrow u_i$ with weight $0$ to $H(S)$. Since $S - v_{\ell+i} + u_i \in \mathcal{I}$, $v_{\ell + i} \rightarrow u_i$ is also an arc in $G(S)$. So the arcs in $H(S)$ are a multiset of arcs in $G(S)$.

$H(S)$ contains $4\ell$ arcs with every vertex incident to exactly two incoming arcs and two outgoing arcs. Therefore $H(S)$ is an Eulerian graph, and we can decompose it into arc disjoint cycles $C_1, \ldots, C_k$. So, summing over the weights of arcs in $H(S)$ gives
\begin{align*}
    \sum_{i\in I_1} &w(u_i \arc{2} v_{\delta(i)}) + \sum_{i\in I_4} w(u_{\delta(i)} \arc{2} v_{i})+ \sum_{i\in I_2 }w(u_i \arc{1} v_{\delta(i)})+\sum_{i\in I_3 } w(u_{\delta(i)} \arc{1} v_i) = \sum_{i=1}^k w(C_i).
\end{align*}
Taking exponential and using~\eqref{eq:5}, we get $  \prod_{i=1}^k \exp(-w(C_i)) \geq f(2\ell)$.
Since $C_1, \ldots, C_k$ are arc disjoint and $H(S)$ contains $4\ell$ arcs, $\sum_{i=1}^k|C_i| = 4\ell$. Now using the fact that $f(a)\cdot f(b) \leq f(a+b)$, we get 
\begin{align*}
     \prod_{i=1}^k \exp(-w(C_i)) \geq f(2\ell) \geq \sum_{i=1}^k f(|C_i|/2). 
\end{align*}
So there exists a cycle $C$ in $C_1,\ldots, C_k$ such that $\exp(-w(C)) > f(|C|/2)$. Since every arc in $H(S)$ is also an arc in $G(S)$, the cycle $C$ is present in $G(S)$ too. Therefore $C$ is an $f$-violating cycle in $G(S)$.
\end{proof}

\section{Short Cycles give Large Improvement} \label{sec:update}
In this section, we prove that exchanging on a minimal $f$-violating cycle in $G(S)$ gives a basis whose determinant is strictly larger than $S$. We first show that any minimal $f$-violating cycle contains at most one edge of type $\RN{2}$ (see Lemma~\ref{lem:typeII}). Using this observation, we bound the change in determinant in two steps.

First, if the minimal $f$-violating cycle, $C$, only contains type $\RN{1}$ edges, then the proof follows the outline of Theorem 2 in~\cite{BrownLPST22}. Let $T = S\Delta C$, then
\[\det(V_T V_T^\top) \geq \det(V_SV_S^\top)\cdot \det(V_{T\backslash S}^\top (V_SV_S^\top)^{-1} V_{S\backslash T})^2,\]
(see Lemma~\ref{lem:local}).  Every non-zero entry of $V_{T\backslash S}^\top (V_SV_S^\top)^{-1} V_{S\backslash T}$ corresponds to the weight of a type $\RN{1}$ arc whose endpoints lie in the vertices of $C$. Using the minimality of $C$, the problem of bounding $|V_{T\backslash S}^\top (V_SV_S^\top)^{-1} V_{S\backslash T}|$ can be reduced to bounding the determinant of a numerical matrix whose entries depend only on the function $f$ and the number of edges in $C$, $|C|$. Using properties of the function $f$, we prove that when $C$ is a minimal $f$-violating cycle with only type $\RN{1}$ edges,
\begin{equation}
    |\det(V_{T\backslash S}^\top (V_SV_S^\top)^{-1} V_{S\backslash T})| \gg 1\,. \label{eq:6}
\end{equation}

Otherwise, if $C$ contains exactly one type $\RN{2}$ edge $e = (u\arc{2} v)$, we first bound the change induced by exchanging on $e$, i.e., $\det(V_{S-v+u}V_{S-v+u}^\top)/\det(V_SV_S^\top)$, and then the change caused by swapping on the remaining edges in $C\backslash e$. The main new ingredient in our proof is the change in the determinant when updating $S-v+u$ by $C\backslash e$ is  approximately the same as the change in determinant when updating $S$ by $C\backslash e$, i.e.,
\begin{align}
    \frac{ \det(V_{S\Delta C}V_{S\Delta C})^\top}{\det(V_SV_S^\top)}  &= \frac{\det(V_{S-v+u}V_{S-v+u}^\top)}{\det(V_SV_S^\top)} \cdot \frac{\det(V_{(S-v+u)\Delta (C\backslash e)}V_{(S-v+u)\Delta (C\backslash e)}^\top)}{\det(V_{S-v+u}V_{S-v+u}^\top)}  \notag \\
    &\approx \frac{\det(V_{S-v+u}V_{S-v+u}^\top)}{\det(V_SV_S^\top)} \cdot \frac{\det(V_{S\Delta(C\backslash e)}V_{S\Delta (C\backslash e)}^\top)}{\det(V_{S}V_{S}^\top)}\,. \label{eq:7}
\end{align}
To prove this, we show that the inner product space induced by  $(V_{S-v+u}V_{S-v+u}^\top)^{-1}$ does not deviate much from the inner product space induced by $(V_SV_S^\top)^{-1}$. As these inner product spaces completely characterize the change in determinant, the inequality follows.

We crucially use the following lemma, first proved in~\cite{BrownLPST22}, to prove quantitative versions of~\eqref{eq:6} and ~\eqref{eq:7}.
\begin{restatable}{lemma}{permanent}  \label{lem:perm}
Let $A_\ell \in \mathbb{R}^{\ell \times \ell}$ satisfy the following properties:
\begin{enumerate}[label=(\roman*)]
    \item \label{th:first}$a_{i,i} > 0$ for all $i \in [\ell]$,
    \item \label{th:second}$0 \leq a_{i,j} \leq  2 \cdot f(i-j)/\prod_{t=j+1}^{i-1} a_{t,t}$ for all $j < i \leq \ell$, and
    \item \label{th:third}$0 \leq a_{i,j} \leq 2 \cdot f(\ell - j + i) \cdot \prod_{t=i}^{j} a_{t,t}/f(\ell)$ for all $i < j \leq \ell$.
\end{enumerate}
Then the permanent of $A_\ell$ is at most $\prod_{i=1}^\ell a_{i,i} \cdot (1 + 0.05/\ell)$. 
\end{restatable}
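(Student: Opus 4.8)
The plan is to bound the permanent of $A_\ell$ by induction on $\ell$, peeling off the last row and column. Write the permanent expansion as a sum over permutations $\sigma \in \mathcal{S}_\ell$ of $\prod_{i} a_{i,\sigma(i)}$, and group the permutations by the orbit structure of $\ell$ under $\sigma$: either $\sigma(\ell) = \ell$, in which case the contribution is $a_{\ell,\ell}\cdot\perm(A_{\ell-1})$, or $\ell$ lies on a nontrivial cycle $(\ell\; j_1\; j_2\;\cdots\; j_s)$ with $j_1,\dots,j_s < \ell$. In the latter case I would replace the product of the $a$-entries along that cycle by an upper bound obtained from properties \ref{th:second} and \ref{th:third}: the entries $a_{\ell, j_1}$ (a term below the diagonal, so bounded via \ref{th:second}) and $a_{j_s,\ell}$ (above the diagonal, bounded via \ref{th:third}) together telescope against the diagonal entries $a_{t,t}$, so that the whole cycle contribution is dominated by $\prod_{t} a_{t,t}$ times a numerical factor built from $f$. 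The goal is to show the total mass of all ``$\ell$ on a nontrivial cycle'' permutations is at most $(0.05/\ell)\prod_i a_{i,i}$ up to lower-order corrections, so that $\perm(A_\ell) \le (1 + 0.05/\ell)\prod_i a_{i,i}$ as claimed.

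The key quantitative step is to estimate, for each cycle length $s+1$ and each choice of the intermediate indices $j_1 > \cdots$ or their ordering, the product $a_{\ell,j_1}\cdot a_{j_1,j_2}\cdots a_{j_{s-1},j_s}\cdot a_{j_s,\ell}$ divided by $\prod_{t=1}^\ell a_{t,t}$, and sum over all such choices together with $\perm$ of the complementary block. Using \ref{th:second} on the ``descending'' steps and \ref{th:third} on the single ``ascending'' step $a_{j_s,\ell}$, the diagonal entries $a_{t,t}$ for $t$ in the range swept out by the cycle cancel, leaving a bound of the form $2^{O(s)} \cdot f(\text{small}) \cdot f(\ell - \text{something})/f(\ell)$ times products of $f$ on the gaps. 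Here I would invoke the superadditivity/supermultiplicativity properties of $f$ — namely $f(a)f(b)\le f(a+b)$ and the rapid growth $f(i) = (i!)^{11}$ — to show that summing over all the combinatorial choices of intermediate indices still yields something tiny, of order $1/\ell^2$ per cycle, and then summing over cycle lengths gives the $0.05/\ell$ bound. Induction closes the argument: replacing $\perm(A_{\ell-1})$ (and the complementary blocks appearing in cycle terms) by the inductive estimate $\le (1+0.05/(\ell-1))\prod a_{t,t}$ keeps everything under control.

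The main obstacle I expect is the bookkeeping in the cycle-sum estimate: one must carefully track which diagonal entries $a_{t,t}$ appear in the denominators coming from \ref{th:second} and \ref{th:third} versus which appear as genuine factors in $\perm$ of the complementary matrix, and ensure no diagonal entry is double-counted or left uncancelled — a spurious uncancelled $a_{t,t}$ could be arbitrarily large or small and would break the bound. The bound in \ref{th:third} is deliberately calibrated with the $/f(\ell)$ factor precisely so that an ascending step from $j_s$ up to $\ell$ "pays back" the $f(\ell)$ that a long descending chain would otherwise cost; making this cancellation rigorous for all cycle shapes simultaneously, rather than just for transpositions, is the delicate part. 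A secondary subtlety is that the complementary block (the rows/columns not touched by the cycle through $\ell$) is not literally one of the $A_{\ell'}$, so I would need to observe that it still satisfies hypotheses \ref{th:first}--\ref{th:third} (possibly after noting that deleting a contiguous-in-index set of rows/columns only improves the bounds, since the $f$-gaps can only shrink), which lets the induction go through. Since this lemma is quoted verbatim from \cite{BrownLPST22}, I would in practice cite that proof, but the sketch above is the route I would reconstruct.
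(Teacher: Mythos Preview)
Your inductive approach is genuinely different from the paper's proof, and as sketched it has a real gap. The paper does not use induction at all: it first rescales $A_\ell$ (dividing column $j$ by $\prod_{x\le j} a_{x,x}$ and multiplying row $i$ by $\prod_{x<i} a_{x,x}$) to obtain a matrix $B_\ell$ with $b_{i,i}=1$, $b_{i,j}\le 2f(i-j)$ for $j<i$, and $b_{i,j}\le 2f(\ell-j+i)/f(\ell)$ for $i<j$. It then bounds $\perm(B_\ell)$ in one shot, classifying every non-identity permutation by its number $\ell-n$ of fixed points and its number $k$ of \emph{exceedances} (indices $i$ with $\sigma(i)>i$), bounding $|\mathcal{S}_\ell(n,k)|$ via Eulerian and derangement numbers, and showing each $(n,k)$-class contributes a term that sums to at most $0.05/\ell$. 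The normalisation is the key simplification: once all diagonals equal $1$, the off-diagonal bounds depend only on the index gaps, so the whole computation becomes purely combinatorial in $f$, with no stray $a_{t,t}$'s to track.

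The main problem with your induction is the complementary-block step. When $\ell$ lies on a cycle $(\ell\,j_1\cdots j_s)$ with $\{j_1,\dots,j_s\}$ an arbitrary subset of $[\ell-1]$, the remaining matrix on $[\ell]\setminus\{j_1,\dots,j_s,\ell\}$ does \emph{not} satisfy hypotheses \ref{th:second}--\ref{th:third} after relabelling: if two surviving indices $j<i$ had deleted $t$'s between them, relabelling shrinks the nominal gap to $i'-j'<i-j$, but the entry still obeys only the weaker bound $2f(i-j)$, and the original denominator $\prod_{t=j+1}^{i-1} a_{t,t}$ in \ref{th:second} included the now-deleted diagonals. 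Your parenthetical fix (``deleting a contiguous-in-index set only improves the bounds'') covers only the case $\sigma(\ell)=\ell$, where the complement is the leading $(\ell{-}1)\times(\ell{-}1)$ block --- and indeed the paper verifies that block inherits the hypotheses --- but not the generic cycle case. A second concrete inaccuracy: a cycle through $\ell$ can have \emph{many} ascending steps $j_p<j_{p+1}$, not just the final $j_s\to\ell$, so \ref{th:third} is invoked $k$ times rather than once, producing a $1/f(\ell)^k$ factor; the number $k$ of ascending steps is exactly the exceedance count that organises the paper's argument, and balancing the $f(\ell)^k$ against the descending-step contributions is precisely the delicate part the paper handles via the $(n,k)$-stratification.
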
 

\begin{lemma} \label{lem:typeII}
Let $C$ be a minimal $f$-violating cycle in $G(S)$. Then $C$ contains at most one edge of type $\RN{2}$.
\end{lemma}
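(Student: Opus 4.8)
The plan is to argue by contradiction: suppose $C$ is a minimal $f$-violating cycle in $G(S)$ containing at least two type~$\RN{2}$ arcs, and exhibit a strictly smaller $f$-violating cycle, contradicting minimality. First I would recall that a type~$\RN{2}$ arc $u \arc{2} v$ has weight $w(u\arc{2}v) = -\log\sqrt{(1+\norm{u}_S^2)(1-\norm{v}_S^2)}$, which decomposes additively into a contribution $-\tfrac12\log(1+\norm{u}_S^2)$ attached to the vertex $u\notin S$ and a contribution $-\tfrac12\log(1-\norm{v}_S^2)$ attached to the vertex $v\in S$. The key structural observation is that the weight of a type~$\RN{2}$ arc depends only on its two endpoints individually, not on the pair jointly (unlike type~$\RN{1}$ arcs, whose weight $-\log|\langle u,v\rangle_S|$ genuinely couples the endpoints). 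So if $C$ uses two type~$\RN{2}$ arcs $u_1 \arc{2} v_1$ and $u_2 \arc{2} v_2$, the "cross" type~$\RN{2}$ arcs $u_1 \arc{2} v_2$ and $u_2 \arc{2} v_1$ carry exactly the same total weight as the original pair.

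Next I would verify that these cross arcs actually exist in $G(S)$. For $u_1 \arc{2} v_2$ to be an arc we need $S - v_2 + u_1$ to be linearly spanning, equivalently $\det(V_S V_S^\top + u_1 u_1^\top - v_2 v_2^\top) > 0$; by Lemma~\ref{lem:weight_intuition} this equals $\det(V_S V_S^\top)\cdot\bigl(\langle u_1,v_2\rangle_S^2 + (1+\norm{u_1}_S^2)(1-\norm{v_2}_S^2)\bigr)$, and since $u_1 \arc{2} v_1$ and $u_2 \arc{2} v_2$ being present forces $(1+\norm{u_1}_S^2) > 0$ and $(1-\norm{v_2}_S^2) > 0$, this determinant is strictly positive. (Note the backward arcs $v_i \to u_i$ need not rewire, since they only require $S - v_i + u_i \in \mathcal I$, which is still about the original pairs — so I must instead rewire along the cycle structure.) This is the delicate point: I cannot freely swap endpoints of type~$\RN{2}$ arcs in a walk without breaking the backward-arc matroid conditions. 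The correct move is to use the cycle $C$ itself: walking around $C$ from $v_1$ to $v_2$ one way and from $v_2$ to $v_1$ the other way, the two type~$\RN{2}$ arcs split $C$ into two arc-disjoint segments, and I re-close each segment using a cross type~$\RN{2}$ arc, producing two strictly shorter closed walks whose total weight equals $w(C)$ plus $0$ (the rewiring preserves weight). Since the backward arcs along each segment are inherited unchanged from $C$, each closed walk is a valid cycle in $G(S)$.

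Finally, with two cycles $C', C''$ of sizes $|C'|, |C''| < |C|$ and $|C'| + |C''| = |C|$ satisfying $w(C') + w(C'') = w(C) < -\log f(|C|/2)$, I use the superadditivity $f(a)f(b) \le f(a+b)$ (equivalently $\log f(a) + \log f(b) \le \log f(a+b)$) exactly as in the proof of Lemma~\ref{lem:cyc_det}: from $\exp(-w(C')) \cdot \exp(-w(C'')) > f(|C|/2) \ge f(|C'|/2) + f(|C''|/2)$ one of the two factors exceeds the corresponding $f$ term, so either $C'$ or $C''$ is an $f$-violating cycle with strictly fewer arcs and a strictly smaller vertex set, contradicting the minimality of $C$. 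The main obstacle I anticipate is the bookkeeping in the rewiring step — making sure the two segments of $C$, once re-closed by cross arcs, are genuine directed cycles (alternating forward/backward arcs appropriately, with all required matroid and spanning conditions met) rather than ill-formed walks; handling the edge case where the two type~$\RN{2}$ arcs are "adjacent" on $C$ (one segment degenerates) also needs a brief separate check, but there the shorter cycle is just a $2$-cycle $u \arc{2} v \to u$ whose weight already violates $f(1) = 2$.
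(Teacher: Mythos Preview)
Your proposal is correct and follows essentially the same approach as the paper: split $C$ at the two type~$\RN{2}$ arcs into two arc-disjoint segments, re-close each segment with the corresponding ``cross'' type~$\RN{2}$ arc (whose existence you verify via Lemma~\ref{lem:weight_intuition}, using that the relevant $(1-\norm{v}_S^2)$ factors are positive because $C$ has finite weight), observe that the total weight is preserved since the type~$\RN{2}$ weight decomposes additively over its two endpoints, and then invoke super-multiplicativity of $f$ together with minimality to reach a contradiction. The paper runs the last step in the contrapositive form (neither subcycle is $f$-violating by minimality, so $w(C) = w(C_1)+w(C_2) \ge -\log f(|C_1|/2) - \log f(|C_2|/2) > -\log f(|C|/2)$), but this is equivalent to what you do.

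One small slip to fix: in your final inequality you write $f(|C|/2) \ge f(|C'|/2) + f(|C''|/2)$ and then try to conclude from a product exceeding a \emph{sum} that one factor exceeds the corresponding summand, which is not valid in general. What you actually need (and what the paper uses here) is the multiplicative form $f(|C'|/2)\cdot f(|C''|/2) < f(|C|/2)$, from which $\exp(-w(C'))\cdot\exp(-w(C'')) > f(|C'|/2)\cdot f(|C''|/2)$ immediately gives that at least one factor exceeds the corresponding $f$-value. Your ``adjacent arcs'' edge case is also not really special: the construction still produces a proper $2$-cycle and a proper $(|C|-2)$-cycle, and the same multiplicative argument applies without needing the $2$-cycle to be $f$-violating on its own.
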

\begin{proof}\label{proof:lemII}
We prove this lemma by induction on the number of arcs in $C$, $|C|$. Since $G(S)$ is bipartite, $|C|$ is always even. The hypothesis is trivially true for the base case when $|C| = 2$, as $C$ contains only one forward arc. 
Let $|C| = 2\ell$, and for the sake of contradiction, let $C$ contain two type $\RN{2}$ arcs at a distance $k$ from each other, i.e., $C = (u_1 \arc{2} v_1 \rightarrow u_2 \arc{1} v_2 \ldots  u_k \arc{2} v_k \ldots v_\ell \rightarrow u_1)$,
such that $v_i \in S$ and $S - v_i + u_{i \;\text{mod} \; \ell +1}$ is independent in $\mathcal{M}$ for all $i \in [\ell]$. Then consider the two cycles formed by switching the type $\RN{2}$ edges in $C$, 
\begin{align*}
    C_1 = (u_1 \arc{2} v_k \rightarrow u_{k+1}\arc{1} v_{k+2} \ldots v_\ell \rightarrow u_1)\quad \text{and} \quad
    C_2 = (u_k \arc{2} v_1 \rightarrow u_2 \arc{1} v_3   \ldots v_{k-1} \rightarrow u_k)\,.
\end{align*}
Rearranging the weights of the arcs, we get \begin{align*}
    \exp\left(-2w(u_1 \arc{2} v_1) -2w( u_k \arc{2} v_k )\right) &=\frac{1+\normu{1}^2}{1-\normv{1}^2} \cdot  \frac{1+\normu{k}^2}{1-\normv{k}^2} \\
    &= \exp\left(-2w(u_1 \arc{2} v_k) -2w( u_k \arc{2} v_1 )\right)\\
    &=  \frac{\det(V_S V_S^\top + u_1u_1^\top - v_kv_k^\top) }{\det(V_S V_S^\top)} \cdot  \frac{\det(V_S V_S^\top + u_ku_k^\top - v_1v_1^\top) }{\det(V_S V_S^\top)}\,. \tag{from Lemma~\ref{lem:weight_intuition}}
\end{align*}
Since $C$ is $f$-violating, $ \exp(-2w(u_1 \arc{2} v_1) -2w( u_k \arc{2} v_k )) > 0$, and therefore $\det(V_S V_S^\top + u_1u_1^\top - v_kv_k^\top) > 0$ and $\det(V_S V_S^\top + u_ku_k^\top - v_1v_1^\top) > 0$. Consequently, $S+u_1 - v_k$ and $S+u_k - v_1$ are linearly spanning, and $ u_1 \arc{2} v_k $ and $ u_k \arc{2} v_1$ are forward arcs in $G(S)$.

So $G(S)$ contains cycles $C_1$ and $C_2$. Additionally, $w(u_1 \arc{2} v_1) + w( u_k \arc{2} v_k ) = w( u_1 \arc{2} v_k ) + w( u_k \arc{2} v_1)$ implies that $w(C) = w(C_1) + w(C_2)$. Since $\mathrm{ver}(C_1) \subsetneq \mathrm{ver}(C)$, $\mathrm{ver}(C_2) \subsetneq \mathrm{ver}(C)$, and $C$ is a minimal $f$-violating cycle, $C_1$ and $C_2$ are not $f$-violating, i.e., $w(C_1) \geq -\log(f(|C_1|/2)$ and $w(C_2) \geq -\log(f(|C_2|/2)$. Thus
\begin{align*}
    w(C) &= w(C_1) + w(C_2) \geq -\log(f(|C_1|/2)) - \log(f(|C_2|/2)) > -\log(f(|C|/2),
\end{align*}
where the last inequality holds because $f(a) \cdot f(b) < f(a+b)$ and $|C_1|+|C_2| = |C|$. This contradicts the fact that $C$ is $f$-violating, and therefore $C$ can contain at most one type $\RN{2}$ arc.
\end{proof}
We are now ready to prove Theorem~\ref{thm:main2}, which we restate here for convenience.
\main*
\begin{proof}
We defer the proof of $T$ being independent to Lemma~\ref{lem:matroid_indep} in Appendix~\ref{appendix:update} as it is identical to the proof in~\cite{BrownLPST22}. For bounding the determinant of $V_TV_T^\top$, we consider two cases based on the number of type $\RN{2}$ edges in $C$:

\noindent \textbf{Case 1:} $C$ only contains edges of type $\RN{1}$. Let $C = (u_1 \arc{1} v_1 \rightarrow u_2 \arc{1} v_2 \ldots u_\ell \arc{1} v_\ell \rightarrow  u_1)$. Define $X := T\backslash S = \{u_1, \ldots, u_\ell\}$, $Y := S\backslash T = \{v_1, \ldots, v_\ell\}$, and $B := V_X^\top (V_SV_S^\top)^{-1} V_Y$. Using Lemma~\ref{lem:local}, we have \begin{equation}
   \det(V_TV_T^\top) \geq \det(V_S V_S^\top) \cdot \det(V_X^\top (V_S V_S^\top) V_Y)^2 = \det(V_S V_S^\top) \cdot \det(B)^2\,.\label{eq:8}
  \end{equation}
So it suffices to bound $|\det(B)|$ to bound the change in the determinant. We claim that $B$ satisfies the prerequisites of Lemma~\ref{lem:perm}, i.e.,
\begin{itemize}
    \item  for all $i < j \leq \ell$, $|b_{i,j}| \leq 2\cdot f(\ell-j+i)/f(\ell) \cdot \prod_{t=i}^j |b_{t,t}|$ 
    \item for all $\ell \geq i > j$, $|b_{i,j}| \leq 2\cdot f(i-j)/ \prod_{t=j+1}^{i-1} |b_{t,t}|$.
\end{itemize}
For $i, j \in [\ell]$ with $i > j$, define a cycle $C_{i,j} := (u_i \arc{1} v_j \rightarrow u_{j+1}\arc{1} v_{j+1}\ldots  v_{i-1} \rightarrow u_i)$. $C_{i,j}$ contains $2(i-j)$ arcs and $\mathrm{ver}(C_{i,j}) \subsetneq \mathrm{ver}(C)$. So, by minimality of $C$, $C_{i,j}$ is not an $f$-violating cycle. Therefore, $\exp(-w(C_{i,j})) = |\dotp{i}{j}| \cdot \prod_{k = j+1}^{i-1} \exp(-w(u_k \arc{1} v_k)) < f(i-j)$, and
\begin{equation*}
    |b_{i,j}| = |\dotp{i}{j}| < \frac{f(i-j)}{\prod_{k = j+1}^{i-1}\exp(-w(u_k \arc{1} v_k))} = \frac{f(i-j)}{\prod_{k = j+1}^{i-1}|b_{k,k}|}\,,
\end{equation*}
where the last inequality follows from the definition $|b_{i,j}| = |\dotp{i}{j}| = \exp(-w(u_i \arc{1} v_j))$.

For $i,j \in [\ell]$ with $i < j$, define a shortcut cycle $C_{i,j} := (v_\ell \rightarrow u_1 \arc{1} v_1 \ldots u_i \arc{1} v_j\rightarrow v_{j+1} \ldots  u_{\ell} \arc{1} v_{\ell})$.  $C_{i,j}$ contains $2(\ell - j + i) < 2\ell$ arcs and $\mathrm{ver}(C_{i,j}) = \subsetneq \mathrm{ver}(C)$. Again, by the minimality of $C$, $C_{i,j}$ is not an $f$-violating cycle. Therefore, \begin{equation}
  \exp(-w(C_{i,j})) = |\dotp{i}{j}| \cdot \prod_{k = 1}^{i-1} \exp(-w(u_k \arc{1} v_k)) \cdot \prod_{k = j+1}^{\ell} \exp(-w(u_k \arc{1} v_k)) < f(\ell - j + i). \label{eq:9}
\end{equation}
Since $C$ is an $f$-violating cycle, we also have $\prod_{k=1}^{\ell} \exp(-w(u_k \arc{1} v_k)) > f(\ell)$.  Dividing~\eqref{eq:9} by this inequality gives
 \begin{align*}
    |b_{i,j}|= |\dotp{i}{j}| &< \frac{f(\ell-j+i)}{f(\ell)} \cdot \prod_{k=i}^j \exp(-w(u_k \arc{1} v_k)) = \frac{f(\ell-j+i)}{f(\ell)} \cdot \prod_{k=i}^j b_{k,k}\,.
\end{align*}

Therefore applying Lemma~\ref{lem:perm} to $B$, we have $\perm(B) \leq 1.05 \cdot \prod_{i=1}^\ell |b_{i,i}|$. Let $\mathcal{S}_k$ denote the set of permutations of $[k]$, and $\id_k$ denote the identity permutation on $[k]$. Then expanding the determinant of $B$ gives
\begin{align*}
|\det(B)| &= |\sum_{\sigma \in \mathcal{S}_\ell} \prod_{i=1}^\ell b_{i, \sigma(i)}| \geq \prod_{i=1}^\ell |b_{i,i}| - \!\!\sum_{\sigma \in \mathcal{S}_\ell \backslash \id_{\ell}} \prod_{i=1}^\ell| b_{i, \sigma(i)} | \\
&{}\geq{} \prod_{i=1}^\ell |b_{i,i}| - (\perm(|B|)-\prod_{i=1}^\ell |b_{i,i}|)\\
&{}\geq{} 0.95 \cdot \prod_{i=1}^\ell |b_{i,i}| \,.
\end{align*}

Since $C$ is an $f$-violating, the exponential of $w(C)$ is equal to \[\prod_{i=1}^\ell \exp(-w(u_i \arc{1} v_i)) = \prod_{i=1}^\ell |b_{i,i}| \geq f(\ell),\]
and therefore $|\det(B)| \geq 0.95 f(\ell)$. Plugging this bound in equation~\eqref{eq:8} gives \begin{equation*}
   \det(V_TV_T^\top) \geq \det(V_S V_S^\top) \cdot |\det(B)|^2 \geq (0.95\cdot f(\ell))^2 > 2\,. 
\end{equation*}

\noindent \textbf{Case 2}: Cycle $C$ contains exactly one edge of type $\RN{2}$. Let $ C = (u_1 \arc{1} v_1 \rightarrow u_2 \arc{1} v_2 \ldots u_\ell \arc{2} v_\ell \rightarrow  u_1)$, and define $X := T \backslash S = \{u_1, \ldots, u_\ell\}$, $Y := S\backslash T = \{v_1, \ldots, v_\ell\}$.

Since there could be two forward arcs from a vertex $u_i$ to a vertex $v_j$, we define a matrix $A \in \mathbb{R}^{\ell \times \ell}$ such that the $(i,j)$-th entry of $A$ corresponds to the arc with the lower weight. Let $[A]_{i,j} := a_{i,j}$ such that  
\begin{itemize}
    \item $a_{i,j} = \max\{|\langle u_i, v_j\rangle_S|, \sqrt{(1+\norm{u_i}_S^2)\cdot (1-\norm{v_j}_S^2)}\}$ for all $i,j < \ell$,
    \item $a_{\ell, j} = \sqrt{(1+\norm{u_\ell}_S^2) \cdot (1-\norm{v_j}_S^2) + \langle u_\ell, v_j\rangle_S^2}$ for all $j < \ell$, and
    \item $a_{i, \ell} = \sqrt{(1+\norm{u_i}_S^2) \cdot (1-\norm{v_\ell}_S^2) + \langle u_i, v_\ell\rangle_S^2}$ for all $i$.
\end{itemize}

In the rest of the proof, we show that $\det(V_TV_T^\top) \geq  0.65 \cdot (\prod_{i=1}^\ell a_{i,i})^2 \cdot \det(V_S V_S^\top)$. This proves the required increase in the determinant because of the following claim.
\begin{claim} The entries of the matrix $A$ satisfy\label{claim:f_viol}
\begin{enumerate}[label=(\alph*)]
    \item \label{item:a}$\prod_{i=1}^\ell a_{i,i} \geq f(\ell)$,
    \item \label{item:b}$a_{i,i} = |\dotp{i}{i}|$ for all $i \in [\ell-1]$, and
    \item \label{item:c}for any $k \in [\ell]$, the principal submatrix of $A$ corresponding to the first $k$ rows and columns, $A_{k,k}$ satisfies $\perm(A_{k,k}) \leq (1 + 0.05/k) \cdot \prod_{i=1}^k a_{i,i} $.
\end{enumerate}
\end{claim}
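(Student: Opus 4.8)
The plan is to verify the three parts of Claim~\ref{claim:f_viol} essentially by unwinding the definitions of the matrix $A$ and the weight functions, and then transferring the minimality-of-$C$ bounds established in Case 1 to the new matrix $A$. The key observation is that $a_{i,i}$ for $i < \ell$ is designed to capture the \emph{lighter} of the two forward arcs $u_i \arc{1} v_i$ and $u_i \arc{2} v_i$, while the last row/column entries $a_{\ell,j}$ and $a_{i,\ell}$ are exactly $\sqrt{\det(V_S V_S^\top + u u^\top - v v^\top)/\det(V_S V_S^\top)}$ by Lemma~\ref{lem:weight_intuition}, i.e. they combine both terms \RN{1} and \RN{2} in a single square root. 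So the first step is to record, for each pair $(i,j)$, the identity $a_{i,j} = \exp(-w(e_{i,j}))$ where $e_{i,j}$ is the appropriate arc (the minimum-weight forward arc, or, for the last row/column, the arc in $C$ itself whose weight already equals $-\tfrac12\log$ of the full determinant ratio).

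For part~\ref{item:b}: along the cycle $C$ the diagonal entry $b_{i,i}$ corresponding to a type~\RN{1} arc is $|\dotp{i}{i}|$, and since $C$ contains type~\RN{1} arcs in rows $1,\dots,\ell-1$ I need to check that in those rows the type~\RN{1} arc is indeed no heavier than the type~\RN{2} arc, equivalently $|\dotp{i}{i}| \geq \sqrt{(1+\normu{i}^2)(1-\normv{i}^2)}$. Actually it's cleaner to argue the other way: the arc that appears \emph{in $C$} is the type~\RN{1} arc, and $a_{i,i}$ is the max of the two quantities; but since we want $a_{i,i} = |\dotp{i}{i}|$ I must show $|\dotp{i}{i}|$ dominates. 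This follows because if the type~\RN{2} arc were lighter, one could shortcut $C$ using it — but here $C$ uses the type~\RN{1} arc in row $i$, and minimality forbids a lighter replacement on the same vertex set; more precisely the single-swap "cycle" $u_i \arc{2} v_{i} \to u_{i+1} \arc{1} v_{i+1} \to \cdots$ would then be $f$-violating, contradicting minimality. This is the first genuine thing to nail down. For part~\ref{item:a}: since $C$ is $f$-violating, $\exp(-w(C)) > f(\ell)$; expanding $w(C) = \sum_{i<\ell} w(u_i\arc{1}v_i) + w(u_\ell \arc{2} v_\ell)$ and using $a_{i,i} = \exp(-w(u_i\arc{1}v_i))$ for $i<\ell$ together with $a_{\ell,\ell} = \sqrt{\dotp{\ell}{\ell}^2 + (1+\normu{\ell}^2)(1-\normv{\ell}^2)} \geq \exp(-w(u_\ell \arc{2} v_\ell))$ gives $\prod a_{i,i} \geq \exp(-w(C)) > f(\ell)$, and since the quantities are positive integers-or-reals bounded below, $\geq f(\ell)$.

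For part~\ref{item:c}, the real work: I want to invoke Lemma~\ref{lem:perm} on each principal submatrix $A_{k,k}$, so I must verify hypotheses \ref{th:first}--\ref{th:third} for $A$. Hypothesis~\ref{th:first} is immediate since all the $a_{i,i}$ are positive (the relevant determinant ratios are positive as shown in Case 2's setup). For \ref{th:second} and \ref{th:third} I reuse the shortcut-cycle construction from Case 1 almost verbatim: for $i > j$ the cycle $C_{i,j} = (u_i \to v_j \to u_{j+1} \arc{1} v_{j+1} \to \cdots \to u_i)$ using the lighter arc from $u_i$ to $v_j$ is a strict subcycle of $C$, hence not $f$-violating, which gives $a_{i,j} \cdot \prod_{t=j+1}^{i-1} a_{t,t} < f(i-j)$; and for $i<j$ the long shortcut cycle around the other way, combined with $\prod_{t=1}^\ell (\text{lighter arc weight}) \geq \exp(-w(C)) > f(\ell)$, gives $a_{i,j} \leq 2 f(\ell - j + i)\prod_{t=i}^j a_{t,t}/f(\ell)$. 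The only subtlety compared to Case 1 is the rows/columns touching index $\ell$, where the arc used in $C_{i,j}$ from $u_\ell$ (or to $v_\ell$) is a type~\RN{2} arc, and $a_{\ell,j}$ (resp. $a_{i,\ell}$) is the combined square-root quantity rather than a single arc weight; here I use Lemma~\ref{lem:weight_intuition} to see that $a_{\ell,j}^2 = \exp(-2w(u_\ell \arc{1} v_j)) + \exp(-2w(u_\ell\arc{2}v_j))$, so $a_{\ell,j} \leq \sqrt{2}\cdot \max\{\exp(-w(u_\ell\arc{1}v_j)), \exp(-w(u_\ell\arc{2}v_j))\}$, and each max is controlled by a non-$f$-violating subcycle of $C$ exactly as before — the extra factor $\sqrt{2} < 2$ is absorbed into the "$2\cdot$" slack already present in hypotheses~\ref{th:second} and~\ref{th:third}. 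I expect this last point — handling the type~\RN{2} arc in the boundary row and column and checking the constant $2$ still suffices after the $\sqrt 2$ loss — to be the main obstacle, though it is more bookkeeping than depth. Once hypotheses are verified, Lemma~\ref{lem:perm} applied to each $A_{k,k}$ yields \ref{item:c} directly.
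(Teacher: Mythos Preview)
Your approach for parts~\ref{item:a} and~\ref{item:c} is essentially the paper's, and correct. Part~\ref{item:a} follows directly from $a_{i,i} \geq \exp(-w(\text{arc of }C\text{ at position }i))$ together with $C$ being $f$-violating. For~\ref{item:c}, the paper does exactly what you describe: for each off-diagonal $(i,j)$ it builds both the type~\RN{1} and type~\RN{2} shortcut cycles $C^{\RN{1}}_{i,j}$ and $C^{\RN{2}}_{i,j}$, neither of which is $f$-violating by minimality of $C$, and combines the two resulting bounds to control $a_{i,j}$ (using $a_{i,j}\leq|\dotp{i}{j}|+\sqrt{(1+\normu{i}^2)(1-\normv{j}^2)}$, which for the $\ell$-th row/column covers the square-root-of-sum form); the factor $2$ in Lemma~\ref{lem:perm}'s hypotheses absorbs the loss. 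One point you do not mention: to apply Lemma~\ref{lem:perm} to $A_{k,k}$ with $k<\ell$ you need hypothesis~\ref{th:third} with $k$ in place of $\ell$, whereas the bounds you derive carry $\ell$. The paper fixes this by observing that $f(y-x)/f(y)$ is non-increasing in $y$, so the $\ell$-bound implies the $k$-bound; this is routine but should be stated.

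Your argument for part~\ref{item:b}, however, has a genuine gap. You write that ``minimality forbids a lighter replacement on the same vertex set,'' and that the cycle obtained by swapping $u_i\arc{1}v_i$ for $u_i\arc{2}v_i$ ``would then be $f$-violating, contradicting minimality.'' But the definition of a minimal $f$-violating cycle only forbids $f$-violating cycles on a \emph{strict subset} of $\mathrm{ver}(C)$; a lighter $f$-violating cycle on the same vertex set does not contradict anything. The correct argument (the paper's) is: if the type~\RN{2} arc at position $i<\ell$ were lighter, the swapped cycle $C'$ is still $f$-violating, has $\mathrm{ver}(C')=\mathrm{ver}(C)$, but now contains \emph{two} type~\RN{2} arcs (at positions $i$ and $\ell$). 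One then invokes Lemma~\ref{lem:typeII} (more precisely, its proof): swapping the targets of the two type~\RN{2} arcs decomposes $C'$ into cycles $C_1,C_2$ with $\mathrm{ver}(C_1),\mathrm{ver}(C_2)\subsetneq\mathrm{ver}(C')$ and $w(C_1)+w(C_2)=w(C')$; since $f(a)f(b)<f(a+b)$, at least one of them is $f$-violating on a strict subset of $\mathrm{ver}(C)$, contradicting minimality of $C$. Your proposal omits this decomposition step, and without it the contradiction does not go through.
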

The proof of the claim appears in Appendix~\ref{sec:omitted_proofs}.

We first bound the change in the determinant caused by swapping on the arc $(u_\ell \arc{2} v_\ell)$. Let $S_1 = S - v_\ell + u_\ell$. Using Lemma~\ref{lem:det_update} on $S_1$ gives
\begin{align}
    \det(V_{S_1}V_{S_1}^\top) &=  \det(V_S V_S^\top) \cdot \det\left(I_2 + \begin{bmatrix}u_\ell^\top (V_S V_S^\top)^{-1} u_\ell & u_\ell^\top (V_S V_S^\top)^{-1}  v_\ell \\
    -v_\ell^\top (V_S V_S^\top)^{-1}  u_\ell & -v_\ell^\top (V_S V_S^\top)^{-1}  v_\ell\end{bmatrix}\right) \notag \\
    &= \det(V_S V_S^\top) \cdot \left((1 + \normu{\ell}^2)\cdot (1 - \normv{\ell}^2) + \dotp{\ell}{\ell}^2\right) =  \det(V_S V_S^\top) \cdot a_{\ell, \ell}^2 \,. \label{eq:10}
\end{align}
Note that $(1 + \normu{\ell}^2)\cdot (1 - \normv{\ell}^2) = \exp(-2\cdot w(u_\ell \arc{2} v_\ell)) > 0$ since the edge $(u_\ell \arc{2} v_\ell)$ belongs to the $f$-violating cycle $C$. This fact, along with $ \det(V_S V_S^\top) > 0$, implies that $\det(V_{S_1}V_{S_1}^\top) > 0$ and, therefore, $V_{S_1}V_{S_1}^\top$ is invertible.

Let $X_1 = X -u_\ell$ and $Y_1 = Y - v_\ell$. Then by Lemma~\ref{lem:local},
\begin{align}
    \det(V_TV_T^\top) &= \det(V_{S_1}V_{S_1}^\top - V_{Y_1}V_{Y_1}^\top + V_{X_1}V_{X_1}^\top) \geq \det(V_{S_1}V_{S_1}^\top)\cdot  \det(V_{X_1}^\top (V_{S_1}V_{S_1}^\top)^{-1} V_{Y_1})^2\,. \label{eq:11}
\end{align}

Combining~\eqref{eq:10} and~\eqref{eq:11}, we get
\begin{equation}
   \det(V_TV_T^\top) \geq \det(V_S V_S^\top)\cdot a_{\ell,\ell}^2 \cdot  \det(V_{X_1}^\top (V_{S_1}V_{S_1}^\top)^{-1} V_{Y_1})^2\,. \label{eq:12}
\end{equation}
In the rest of the proof, we show that $| \det(V_{X_1}^\top (V_{S_1}V_{S_1}^\top)^{-1} V_{Y_1})| \approx | \det(V_{X_1}^\top (V_{S_1}V_{S_1}^\top)^{-1} V_{Y_1})|$. We start by measuring the deviation of the inner product space induced by $(V_{S_1}V_{S_1}^\top)^{-1}$ from the inner product space induced by $(V_SV_S^\top)^{-1}$, and then use the minimality of cycle $C$ to control this deviation.

Since $(V_{S_1}V_{S_1}^\top)^{-1}  = (V_S V_S^\top - u_\ell u_\ell^\top + v_\ell v_\ell^\top )^{-1} = \left(V_S V_S^\top + \begin{bmatrix} u_\ell & -v_\ell \end{bmatrix} \cdot \begin{bmatrix} u_\ell^\top  \\ v_\ell^\top \end{bmatrix} \right)^{-1}$, applying the Woodbury matrix identity to $(V_{S_1}V_{S_1}^\top)^{-1}$ gives
\begin{align}
  (V_{S_1}V_{S_1}^\top)^{-1} &- (V_S V_S^\top)^{-1} \notag \\
    &=  - (V_S V_S^\top)^{-1}\begin{bmatrix} u_\ell & -v_\ell \end{bmatrix}  \cdot \left(I_2 + \begin{bmatrix} u_\ell^\top \\ v_\ell^\top \end{bmatrix} (V_S V_S^\top)^{-1} \begin{bmatrix} u_\ell & -v_\ell \end{bmatrix}\right)^{-1} \cdot \begin{bmatrix} u_\ell^\top \\ v_\ell^\top \end{bmatrix} (V_S V_S^\top)^{-1} \notag\\
    &= - (V_S V_S^\top)^{-1}\begin{bmatrix} u_\ell & -v_\ell \end{bmatrix}  \cdot \begin{bmatrix} 1+\normu{\ell}^2 & -\dotp{\ell}{\ell} \\
    \dotp{\ell}{\ell} & 1-\normv{\ell}^2
  \end{bmatrix}^{-1} \cdot \begin{bmatrix} u_\ell^\top \\ v_\ell^\top \end{bmatrix} (V_S V_S^\top)^{-1}\,. \label{eq:13}
\end{align}
For any invertible matrix $W$, $W^{-1} = \mathrm{adjoint}(W)/\det(W)$. Therefore,
\begin{align*}
    \begin{bmatrix} 1+\normu{\ell}^2 & -\dotp{\ell}{\ell} \\
    \dotp{\ell}{\ell} & 1-\normv{\ell}^2
  \end{bmatrix}^{-1} &=  \frac{1}{(1+\normu{\ell}^2)(1-\normv{\ell}^2)+\dotp{\ell}{\ell}^2} \cdot \begin{bmatrix} 1-\normv{\ell}^2 & \dotp{\ell}{\ell} \\
    -\dotp{\ell}{\ell} & 1 + \normu{\ell}^2\end{bmatrix}\\
    &=  a_{\ell, \ell}^2 \cdot \begin{bmatrix} 1-\normv{\ell}^2 & \dotp{\ell}{\ell} \\
    -\dotp{\ell}{\ell} & 1 + \normu{\ell}^2\end{bmatrix}. \tag{from definition of $a_{\ell, \ell}$}
\end{align*}
Substituting the inverse in equation~\ref{eq:13} gives
\begin{align*}
    (V_{S_1}V_{S_1}^\top)^{-1} &- (V_S V_S^\top)^{-1} \\
    &=  - a_{\ell,\ell}^{-2}\cdot(V_S V_S^\top)^{-1}\begin{bmatrix} u_\ell & -v_\ell \end{bmatrix} \cdot \begin{bmatrix} 1-\normv{\ell}^2 & \dotp{\ell}{\ell} \\
    -\dotp{\ell}{\ell} & 1 + \normu{\ell}^2\end{bmatrix} \cdot \begin{bmatrix} u_\ell^\top \\ v_\ell^\top \end{bmatrix} (V_S V_S^\top)^{-1} \\
    &= - a_{\ell,\ell}^{-2}\cdot \Big[(V_S V_S^\top)^{-1}u_\ell u_\ell^\top (V_S V_S^\top)^{-1} (1-\normv{\ell}^2)+(V_S V_S^\top)^{-1}v_\ell u_\ell^\top (V_S V_S^\top)^{-1}\dotp{\ell}{\ell} \\
    &\quad +(V_S V_S^\top)^{-1}u_\ell v_\ell^\top (V_S V_S^\top)^{-1}\dotp{\ell}{\ell}-(V_S V_S^\top)^{-1}v_\ell v_\ell^\top (V_S V_S^\top)^{-1}(1 + \normu{\ell}^2)  \Big]\,.
\end{align*}
So for $i,j \in [\ell-1]$, the $i,j$-th entry of $V_{X_1}^\top (V_{S_1}V_{S_1}^\top)^{-1} V_{Y_1}$ is given by
\begin{align}
     u_i^\top (V_{S_1}V_{S_1}^\top)^{-1} v_j  &=\dotp{i}{j} - 
    a_{\ell,\ell}^{-2}\cdot \Big[(1-\normv{\ell}^2)\cdot \dotu{i}{\ell}\cdot \dotp{\ell}{j}+\dotp{\ell}{\ell}\cdot \dotp{i}{\ell} \cdot \dotp{\ell}{j}   \notag \\
    &\quad +\dotp{\ell}{\ell} \cdot \dotu{i}{\ell}\cdot \dotv{\ell}{j}-(1+\normu{\ell}^2) \cdot \dotp{i}{\ell} \cdot \dotv{\ell}{j} \Big]\,. 
\label{eq:14}
\end{align}

The difference between $ u_i^\top (V_{S_1}V_{S_1}^\top)^{-1} v_j $ and $  u_i^\top (V_SV_S^\top)^{-1} v_j $ consists of four terms bilinear in $u_i$ and $v_j$. Even though these terms do not directly correspond to any arc weights in $G(S)$, each of these bilinear terms can be upper bounded by entries of the matrix $A$ (see in Claim~\ref{claim:z}). We then use the multi-linearity of the determinant to show that their contribution to $\det(V_{X_1}(V_{S_1}V_{S_1}^\top)^{-1}V_{Y_1} )$ is negligible, and therefore $|\det(V_{X_1}(V_{S_1}V_{S_1}^\top)^{-1}V_{Y_1} )|$ is approximately equal to $|\det(A )|$.

By equation~\eqref{eq:14}, the $j$-th column of $V_{X_1}(V_{S_1}V_{S_1}^\top)^{-1} V_{Y_1}$ is the sum of the five vectors defined below:
\begin{align*}
    z_j^0 &:= V_{X_1}^\top (V_S V_S^\top)^{-1} v_j, \\
    z_j^1 &:= -a_{\ell,\ell}^{-2}\cdot(1-\normv{\ell}^2) \cdot (V_{X_1}^\top (V_S V_S^\top)^{-1} u_\ell) \cdot \dotp{\ell}{j},  \\
    z_j^2 &:= -a_{\ell,\ell}^{-2}\cdot\dotp{\ell}{\ell} \cdot (V_{X_1}^\top (V_S V_S^\top)^{-1} u_\ell) \cdot \dotv{\ell}{j}\,,  \\
    z_j^3 &:= -a_{\ell,\ell}^{-2}\cdot\dotp{\ell}{\ell} \cdot ( V_{X_1}^\top (V_S V_S^\top)^{-1} v_\ell) \cdot \dotp{\ell}{j},\\
    z_j^4 &:= a_{\ell,\ell}^{-2}\cdot (1+\normu{\ell}^2) \cdot (V_{X_1}^\top (V_S V_S^\top)^{-1} v_\ell) \cdot \dotv{\ell}{j}\,.
\end{align*}

So the $j$-th column of $V_{X_1}^\top (V_S V_S^\top)^{-1} V_{Y_1}$ equals $ z_j^0 +  z_j^1 +  z_j^2+  z_j^3+  z_j^4$.

To bound the deviation terms, we crucially use the following claim about the absolute value of vector $z_j^k$. A proof of the claim appears in Appendix~\ref{appendix:update}.
\begin{claim}\label{claim:z}
The matrix $A$ and the vectors $z_j^0, z_j^1, z_j^2,z_j^3,z_j^4$ satisfy
\begin{itemize}
    \item for $k \in \{1,2,3,4\}$ and $i \in [\ell-1]$, $|z_j^k(i)| \leq a_{i, \ell} \cdot a_{\ell, j} \cdot a_{\ell, \ell}^{-1}$,

    \item for any $j_1 \neq j_2$, and $i_1 \neq i_2$,  $ |z_{j_1}^1(i_1) \cdot z_{j_2}^4(i_2)| \leq a_{i_1, j_2}  \cdot a_{\ell, j_1}\cdot a_{i_2, \ell}\cdot a_{\ell, \ell}^{-1}$, and  $|z_{j_1}^2(i_1) \cdot z_{j_2}^3(i_2)| \leq a_{i_1, \ell}  \cdot a_{\ell, j_2}\cdot a_{i_2, j_1} \cdot a_{\ell, \ell}^{-1}$\,.
\end{itemize}
\end{claim}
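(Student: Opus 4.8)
The plan is to prove both bullets of Claim~\ref{claim:z} by substituting the explicit bilinear expressions for $z_j^1,\dots,z_j^4$ read off from~\eqref{eq:14} and then bounding the resulting products of $\langle\cdot,\cdot\rangle_S$-terms factor by factor, using only three ingredients: (i) Cauchy--Schwarz for the inner product $\langle\cdot,\cdot\rangle_S$, applied to the ``$u$--$u_\ell$'' and ``$u$--$v_\ell$'' factors; (ii) Lemma~\ref{lem:inv}, which gives the sharper estimate $|\langle v_\ell,v_b\rangle_S|\le\sqrt{(1-\norm{v_\ell}_S^2)(1-\norm{v_b}_S^2)}$ for the ``$v_\ell$--$v_b$'' factors with $b<\ell$ (both vectors lie in $S$); and (iii) the elementary lower bounds on the entries of $A$ coming from their definitions, namely $a_{i,\ell}\ge\max\{\norm{u_i}_S\sqrt{1-\norm{v_\ell}_S^2},\,|\langle u_i,v_\ell\rangle_S|\}$, $a_{\ell,j}\ge\max\{\norm{u_\ell}_S\sqrt{1-\norm{v_j}_S^2},\,|\langle u_\ell,v_j\rangle_S|\}$, $a_{i,j}\ge\sqrt{(1+\norm{u_i}_S^2)(1-\norm{v_j}_S^2)}$ for $i,j<\ell$, and $a_{\ell,\ell}\ge\max\{|\langle u_\ell,v_\ell\rangle_S|,\,\sqrt{(1+\norm{u_\ell}_S^2)(1-\norm{v_\ell}_S^2)},\,\norm{u_\ell}_S\sqrt{1-\norm{v_\ell}_S^2}\}$. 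Each $z_j^k(i)$ carries a factor $a_{\ell,\ell}^{-2}$ and a prefactor that is one of $1-\norm{v_\ell}_S^2$, $1+\norm{u_\ell}_S^2$, or $\langle u_\ell,v_\ell\rangle_S$; the goal in every case is to regroup the norm factors left over after steps (i)--(ii) so that one group is $\le a_{i,\ell}$ (or $a_{i_1,j_2}$/$a_{i_2,j_1}$ in the cross terms), one is $\le a_{\ell,j}$, and the rest (prefactor included) is absorbed by a power of $a_{\ell,\ell}$.

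For the first bullet this is direct. For instance, for $k=2$ we have $|z_j^2(i)|=a_{\ell,\ell}^{-2}\,|\langle u_\ell,v_\ell\rangle_S|\,|\langle u_i,u_\ell\rangle_S|\,|\langle v_\ell,v_j\rangle_S|$; bound $|\langle u_i,u_\ell\rangle_S|\le\norm{u_i}_S\norm{u_\ell}_S$ by (i) and $|\langle v_\ell,v_j\rangle_S|\le\sqrt{(1-\norm{v_\ell}_S^2)(1-\norm{v_j}_S^2)}$ by (ii), then regroup as $\big(\norm{u_i}_S\sqrt{1-\norm{v_\ell}_S^2}\big)\cdot\big(\norm{u_\ell}_S\sqrt{1-\norm{v_j}_S^2}\big)\cdot|\langle u_\ell,v_\ell\rangle_S|\le a_{i,\ell}\cdot a_{\ell,j}\cdot a_{\ell,\ell}$, which gives $|z_j^2(i)|\le a_{i,\ell}a_{\ell,j}a_{\ell,\ell}^{-1}$. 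The cases $k=1,3,4$ are analogous: for $k=3$ use $a_{\ell,\ell}\ge|\langle u_\ell,v_\ell\rangle_S|$ and $a_{i,\ell}\ge|\langle u_i,v_\ell\rangle_S|$ directly, and for $k=1,4$ split the prefactor $1\mp\norm{\cdot}_S^2$ between a ``half'' of $a_{\ell,\ell}^2$ and a ``half'' of $a_{i,\ell}^2$ or $a_{\ell,j}^2$.

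For the product $z_{j_1}^1(i_1)z_{j_2}^4(i_2)$ the prefactor is $(1-\norm{v_\ell}_S^2)(1+\norm{u_\ell}_S^2)$ and the four off-prefactor factors are $\langle u_{i_1},u_\ell\rangle_S,\langle u_\ell,v_{j_1}\rangle_S,\langle u_{i_2},v_\ell\rangle_S,\langle v_\ell,v_{j_2}\rangle_S$. I would apply (i) to $\langle u_{i_1},u_\ell\rangle_S$ and (ii) to $\langle v_\ell,v_{j_2}\rangle_S$ (legitimate since $j_2<\ell$), leave the other two alone, and regroup: $\norm{u_{i_1}}_S\sqrt{1-\norm{v_{j_2}}_S^2}\le a_{i_1,j_2}$; $|\langle u_\ell,v_{j_1}\rangle_S|\le a_{\ell,j_1}$ and $|\langle u_{i_2},v_\ell\rangle_S|\le a_{i_2,\ell}$; and the leftover $(1+\norm{u_\ell}_S^2)\,\norm{u_\ell}_S\,(1-\norm{v_\ell}_S^2)^{3/2}\le a_{\ell,\ell}^2\cdot a_{\ell,\ell}=a_{\ell,\ell}^3$ by the two defining lower bounds on $a_{\ell,\ell}$. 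Combined with the $a_{\ell,\ell}^{-4}$ coming from the two copies of $a_{\ell,\ell}^{-2}$ this yields $|z_{j_1}^1(i_1)z_{j_2}^4(i_2)|\le a_{i_1,j_2}a_{\ell,j_1}a_{i_2,\ell}a_{\ell,\ell}^{-1}$.

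The product $z_{j_1}^2(i_1)z_{j_2}^3(i_2)$ is where I expect the real difficulty. Its prefactor is $\langle u_\ell,v_\ell\rangle_S^2$ rather than $(1-\norm{v_\ell}_S^2)(1+\norm{u_\ell}_S^2)$, and a matching of indices essentially forces the allocation $\langle u_{i_1},u_\ell\rangle_S\to a_{i_1,\ell}$, $\{\langle u_{i_2},v_\ell\rangle_S,\langle v_\ell,v_{j_1}\rangle_S\}\to a_{i_2,j_1}$, $\langle u_\ell,v_{j_2}\rangle_S\to a_{\ell,j_2}$. Applying Cauchy--Schwarz and Lemma~\ref{lem:inv} in the crudest way then leaves the residual $\langle u_\ell,v_\ell\rangle_S^2\,\norm{u_\ell}_S\,\norm{v_\ell}_S$, which is \emph{not} in general $\le a_{\ell,\ell}^3$; so this route is too lossy and the estimate must be run with more care -- extracting less from $\langle u_{i_1},u_\ell\rangle_S$ and $\langle u_{i_2},v_\ell\rangle_S$ and instead charging the deficit to the two copies of $\langle u_\ell,v_\ell\rangle_S$ via $a_{\ell,\ell}^2\ge\langle u_\ell,v_\ell\rangle_S^2$ and $a_{\ell,\ell}^2\ge(1+\norm{u_\ell}_S^2)(1-\norm{v_\ell}_S^2)$, with the hypotheses $i_1\neq i_2$ and $j_1\neq j_2$ used to keep the four auxiliary factors genuinely distinct so the regrouping is valid. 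Getting this regrouping exactly right (and checking that the slack it produces is harmless for the numerical constants baked into $f$) is the step I would budget the most effort for.
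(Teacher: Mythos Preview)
Your treatment of the first bullet and of the product $z_{j_1}^1(i_1)z_{j_2}^4(i_2)$ is exactly the paper's argument: Cauchy--Schwarz on the $\langle u_\cdot,u_\ell\rangle_S$ factor, Lemma~\ref{lem:inv} on the $\langle v_\ell,v_\cdot\rangle_S$ factor, direct use of $|\langle u_\cdot,v_\cdot\rangle_S|\le a_{\cdot,\cdot}$ on the mixed factors, and the regrouping $\norm{u_a}_S\sqrt{1-\norm{v_b}_S^2}\le a_{a,b}$. Nothing to add there.

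For $z_{j_1}^2(i_1)z_{j_2}^3(i_2)$ you are right that the ``similar argument'' is not as immediate as the paper suggests, and your diagnosis of the obstacle is accurate: after bounding $|\langle u_\ell,v_{j_2}\rangle_S|\le a_{\ell,j_2}$, applying Cauchy--Schwarz to $\langle u_{i_1},u_\ell\rangle_S$ and $\langle u_{i_2},v_\ell\rangle_S$, and Lemma~\ref{lem:inv} to $\langle v_\ell,v_{j_1}\rangle_S$, one is left with the residual $|\langle u_\ell,v_\ell\rangle_S|^2\,\norm{u_\ell}_S\norm{v_\ell}_S$ against $a_{\ell,\ell}^3$, and this inequality is \emph{false} in general (take, e.g., $\norm{u_\ell}_S^2=100$, $\norm{v_\ell}_S^2=0.99$, $\langle u_\ell,v_\ell\rangle_S^2=50$). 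However, your proposed remedies do not help: the two lower bounds $a_{\ell,\ell}^2\ge\langle u_\ell,v_\ell\rangle_S^2$ and $a_{\ell,\ell}^2\ge(1+\norm{u_\ell}_S^2)(1-\norm{v_\ell}_S^2)$ together still cannot absorb $\norm{u_\ell}_S\norm{v_\ell}_S$, and the hypotheses $i_1\ne i_2$, $j_1\ne j_2$ play no role in the pointwise inequality (they only matter in the determinant expansion upstream).

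The clean fix is not to aim for the stated target at all, but to regroup differently: bound $|\langle u_{i_2},v_\ell\rangle_S|\le a_{i_2,\ell}$ and $|\langle u_\ell,v_\ell\rangle_S|^2\le a_{\ell,\ell}^2$ directly, then pair $\norm{u_{i_1}}_S$ with $\sqrt{1-\norm{v_{j_1}}_S^2}$ and $\norm{u_\ell}_S$ with $\sqrt{1-\norm{v_\ell}_S^2}$. This yields, with no residual,
\[
|z_{j_1}^2(i_1)\,z_{j_2}^3(i_2)|\;\le\;a_{i_1,j_1}\,a_{i_2,\ell}\,a_{\ell,j_2}\,a_{\ell,\ell}^{-1},
\]
which is not the bound asserted in Claim~\ref{claim:z} but is the one that is actually provable by the ``similar argument'', and it suffices for the use in Theorem~\ref{thm:main2}: the three $a$-factors still correspond to a non-identity permutation of $[\ell]$ after the $\mathcal T$-map, and the resulting sum over $\mathcal Q_4$ is again bounded by $(\ell-2)(\perm(A)-\prod_i a_{i,i})\,a_{\ell,\ell}^{-1}\le 0.05\prod_{i=1}^{\ell-1}a_{i,i}$.
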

Using this claim, we upper bound entries from $z^k_j$ by entries of matrix $A$ later in the proof.

Let $\det(w_1, w_2, \ldots, w_{\ell})$ denote the determinant of the matrix whose $i$-th column is $w_i$. Using the multi-linearity of the determinant, \begin{align}
    \det(V_{X_1}^\top (V_{S_1}V_{S_1}^\top)^{-1} V_{Y_1}) = \det(\sum_{k=0}^4 z_1^{k}, \sum_{k=0}^4 z_2^{k}, \ldots, \sum_{k=0}^4 z_{\ell-1}^{k})= \sum_{q \in \mathcal{Q}} \det(z_1^{q(1)}, z_2^{q(2)}, \ldots, z_{\ell-1}^{q(\ell-1)})\, , \label{eq:15}
\end{align}
where $\mathcal{Q}$ is the set of all functions from $[\ell-1]$ to $\{0,1,2,3,4\}$, i.e., $\mathcal{Q} = \{q:[\ell-1] \rightarrow \{0,1,2,3,4\}\}$. Note that for any $j_1, j_2 \in [\ell-1]$ and $i \in \{1,2,3,4\}$, 
the vectors $z_{j_1}^i$ and $z_{j_2}^i$ are co-linear. Additionally, for any $j \in [\ell-1]$,
$z_{j}^1$ and $z_{j}^2$ are both co-linear with $V_{X_1}^\top (V_S V_S^\top)^{-1} u_\ell$, which does not depend on $j$. Similarly, $z_{j}^3$ and $z_{j}^4$ are both co-linear with $V_{X_1}^\top (V_S V_S^\top)^{-1} v_\ell$. So for any function $q \in \mathcal{Q}$, if more than two columns $j$ have $q(j) \neq 0$, then $\det(z_1^{q(1)}, z_2^{q(2)}, \ldots, z_{\ell-1}^{q(\ell-1)}) = 0$.

Let $z^k_{j}(i)$ denote the $i$-th entry of vector $z^k_j$. We can can alter the formula for $z^k_j$ to obtain a formula for $z^k_{j}(i)$ by replacing the $(V_{X_1}^\top(V_SV_S^\top)^{-1}v_\ell)$ term by its $i$-th entry, which is exactly $u_{i}^\top(V_SV_S^\top)^{-1} v_\ell = \dotp{i}{\ell}$.

To further restrict the class of functions which contribute to $\det(V_{X_1} (V_{S_1}V_{S_1}^\top)^{-1} V_{Y_1})$, note that
\begin{align}
    z^1_{j_1}(i_1) \cdot z^3_{j_2}(i_2) &= -a_{\ell,\ell}^{-4}\cdot (1-\normv{\ell}^2) \cdot \dotp{\ell}{\ell} \cdot \dotp{\ell}{j_1} \cdot \dotp{\ell}{j_2} \cdot \dotu{i_1}{\ell} \cdot \dotp{i_2}{\ell}\notag \\
    &=z^3_{j_1}(i_2) \cdot z^1_{j_2}(i_1)\,. \label{eq:16} 
\end{align}
Now consider two columns $j_1, j_2 \in [\ell-1]$ with $j_1 \neq j_2$  and functions $q_1, q_2 \in \mathcal{Q}$ such that $q_1(j) = q_2(j) = 0$ for all $j \notin \{j_1, j_2\}$, $q_1(j_1) = 1, q_1(j_2) = 3$, and $q_2(j_1) = 3, q_2(j_2) = 1$. Since exchanging two columns of a matrix flips the sign of the determinant, equation~\eqref{eq:16} implies that
\begin{equation*}
    \det(z_1^{q_1(1)}, z_2^{q_1(2)},\ldots, z_{\ell-1}^{q_1(\ell-1)}) = -  \det(z_1^{q_2(1)}, z_2^{q_2(2)}, \ldots, z_{\ell-1}^{q_2(\ell-1)})\,.
\end{equation*}
The terms corresponding to $q_1$ and $q_2$ cancel each other, so we can ignore their contribution to the overall sum in~\eqref{eq:15}.

Similarly, if  $q_1(j_1) = 2, q_1(j_2) = 4$ and $q_2(j_1) = 4, q_2(j_2) = 2$, then \begin{equation*}
    \det(z_1^{q_1(1)}, z_2^{q_1(2)},\ldots, z_{\ell-1}^{q_1(\ell-1)}) = -\det(z_1^{q_2(1)}, z_2^{q_2(2)}, \ldots, z_{\ell-1}^{q_2(\ell-1)}).
\end{equation*}

So the subset of function in $\mathcal{Q}$ with non-zero determinants is the union of the following four sets: \begin{itemize}
    \item $\{q_0\}$ such that $q_0(j) = 0$ for all $j \in [\ell-1]$,
    \item $\mathcal{Q}_2 = \{q \in \mathcal{Q}: \,q(j) \neq 0 \text{ for exactly one } j\}$,
    \item $\mathcal{Q}_3 = \{f \in \mathcal{Q}: \,\exists j_1, j_2 \text{ s.t. }q(j_1) = 1$, $q(j_2) = 4, \text{ and } q(j) = 0 \; \forall  j \in [\ell-1] \backslash \{j_1, j_2\} \}$,
     \item $\mathcal{Q}_4 = \{f \in \mathcal{Q}: \,\exists j_1, j_2 \text{ s.t. }q(j_1) = 2$, $q(j_2) = 3, \text{ and } q(j) = 0 \; \forall  j \in [\ell-1] \backslash \{j_1, j_2\} \}$.
\end{itemize}

Using these definitions and the triangle inequality, equation~\eqref{eq:15} implies
\begin{equation}
    |\det(V_{X_1}^\top (V_{S_1}V_{S_1}^\top)^{-1} V_{Y_1})| \geq |\det(z_1^{0}, z_2^{0}, \ldots, z_{\ell-1}^{0})| - \sum_{q \in \{\mathcal{Q}_2 \cup \mathcal{Q}_3 \cup \mathcal{Q}_4\} } |\det(z_1^{q(1)}, z_2^{q(2)}, \ldots, z_{\ell-1}^{q(\ell-1)})|. \label{eq:17}
\end{equation}

By expanding the determinants, we now bound every determinant term in~\eqref{eq:17} separately. Let $\mathcal{S}_\ell$ denote the set of permutations on $[\ell]$ and $\id_\ell$ denote the identity permutation on $\ell$ elements. Expanding the first term in~\eqref{eq:17} gives
\begin{align*}
   |\det(z_1^{0}, z_2^{0}, \ldots, z_{\ell-1}^{0})| &= |\sum_{\sigma \in \mathcal{S}_{\ell-1}} \prod_{i=1}^{\ell-1} z_i^0(\sigma(i))| \geq \prod_{i=1}^{\ell-1} |z^0_{i}(i)| - \sum_{\sigma \in \mathcal{S}_{\ell-1} \backslash \id_{\ell-1}} \prod_{j} |z^0_j(\sigma(j))|,
\end{align*}
where the last inequality follows from the triangle inequality.
By part~\ref{item:b} of Claim~\ref{claim:f_viol}, $|z_i^0(i)| = |\dotp{i}{i}| = a_{i,i}$ for all $i < \ell$ and by definition, $|z_j^0(i)| = |\dotp{i}{j}| \leq a_{i,j}$ for $i\neq j$ and $i < \ell$. Therefore,
\begin{align*}
    |\det(z_1^{0}, z_2^{0}, \ldots, z_{\ell-1}^{0})| &\geq \prod_{i=1}^{\ell-1} a_{i, i} - \sum_{\sigma \in \mathcal{S}_{\ell-1} \backslash \id_{\ell-1}} \prod_{j=1}^{\ell-1} a_{\sigma(j), j} = \prod_{i=1}^{\ell-1} a_{i,i} - (\perm(A_{\ell-1, \ell-1}) - \prod_{i=1}^{\ell-1} a_{i,i})\,,
\end{align*}
where $A_{\ell-1, \ell-1}$ is the principal submatrix of $A$ consisting of the first $\ell-1$ rows and columns. Using part~\ref{item:c} of Claim~\ref{claim:f_viol}, we have $\perm(A_{\ell-1}) \leq 1+0.05/(\ell-1)\cdot \prod_{i=1}^{\ell-1} a_{i,i}$, and therefore
\begin{equation}
     |\det(z_1^{0}, z_2^{0}, \ldots, z_{\ell-1}^{0})|  \geq 0.95\cdot \prod_{i=1}^{\ell-1} a_{i,i}\,. \label{eq:18}
\end{equation}

Summing up the determinants corresponding to the functions in $\mathcal{Q}_2$, we have
\begin{align}
    \sum_{f \in \mathcal{Q}_2}& |\det(z_1^{q(1)}, z_2^{q(2)}, \ldots, z_{\ell-1}^{q(\ell-1)})| = \sum_{j \in [\ell-1]}\sum_{i \in [4]}  |\det(z_1^{0}, z_2^{0}, \ldots, z_{j}^{i}, \ldots, z_{\ell-1}^{0})| \notag\\
    &= \sum_{j \in [\ell-1]}\sum_{i \in [4]}  |\sum_{\sigma \in \mathcal{S}_{\ell-1}} \left(\prod_{x\neq j} z_{x}^{0}(\sigma(x)) \cdot z_{j}^{i}(\sigma(j))\right)| \notag\\
    &\leq\sum_{j \in [\ell-1]}\sum_{i \in [4]} \sum_{\sigma\in \mathcal{S}_{\ell-1}} \left(\prod_{x\neq j} |z_{x}^{0}(\sigma(x))|\right) \cdot |z_{j}^{i}(\sigma(j))|\notag\\
    &\leq \sum_{j \in [\ell-1]}\sum_{i \in [4]} \sum_{\sigma\in \mathcal{S}_{\ell-1}} \left(\prod_{x\neq j} a_{\sigma(x), x}\right) \cdot a_{\sigma(j), \ell}\cdot a_{\ell, j}\cdot a_{\ell, \ell}^{-1} \tag*{(from Claim~\ref{claim:z})} \notag\\
     &\leq 4\cdot\sum_{j \in [\ell-1]} \sum_{\sigma\in \mathcal{S}_{\ell-1}} \left(\prod_{x\neq j} a_{\sigma(x), x}\right) \cdot a_{\sigma(j), \ell}\cdot a_{\ell, j}\cdot a_{\ell, \ell}^{-1}\,. \label{eq:19}
\end{align}     

To simplify the last expression, consider a function $\mathcal{T}: S_{\ell-1} \times [\ell-1] \rightarrow \mathcal{S}_{\ell}$ mapping every (permutation, index) pair $(\sigma \in S_{\ell-1}, j \in [\ell-1])$ to a permutation in $\mathcal{S}_\ell$. Define $\mathcal{T}(\sigma, j)$ as the permutation which maps $\mathcal{T}(\sigma, j)(j) = \ell$, $\mathcal{T}(\sigma, j)(\ell) = \sigma(j)$, and $\mathcal{T}(\sigma, j)(x) = \sigma(x)$ for all $x \in [\ell-1]\backslash \{j\}$.  Using this definition,  $ \left(\prod_{x\neq j} a_{\sigma(x), x} \right)\cdot a_{\sigma(j), \ell}\cdot a_{\ell, j} = \prod_{x=1}^\ell a_{\mathcal{T}(\sigma, j)(x), x}$\, (see Figure~\ref{fig:absolute_z}). 
\begin{figure}[h]
    \centering
    \includegraphics[width=0.7\textwidth]{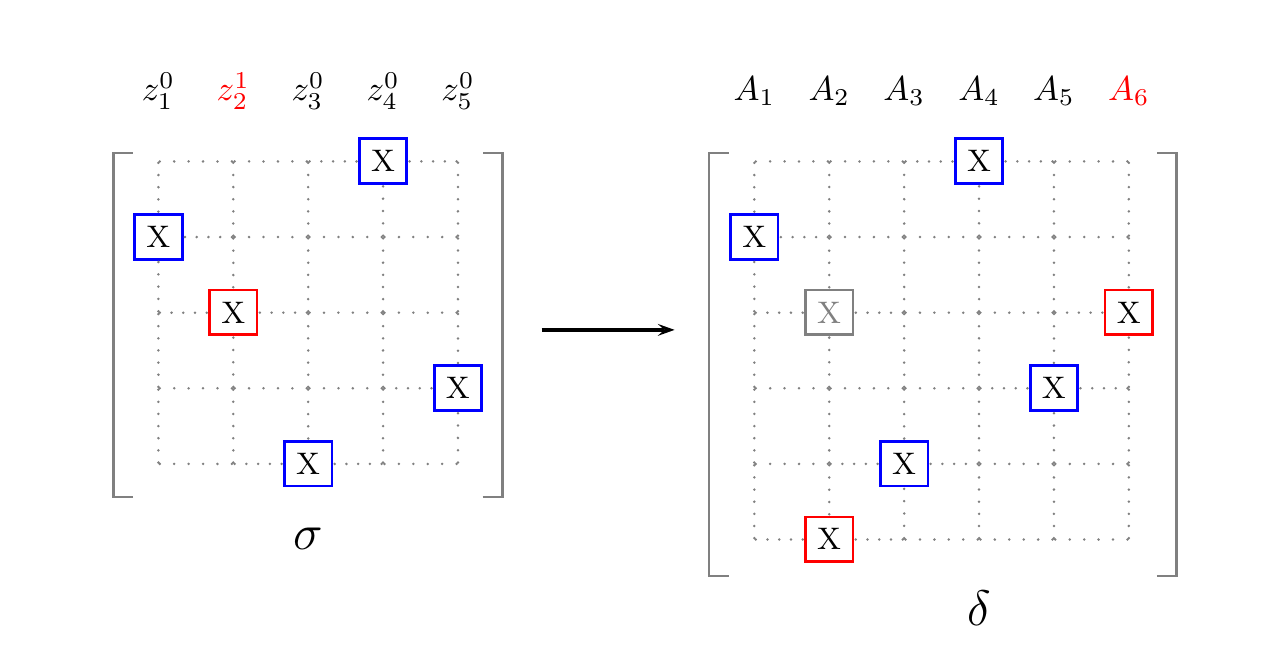}
    \caption{\small In expansion of $\det(z_1^0, z_2^1, z_3^0, z_4^0, z_5^0)$, the product of entries for permutation $\sigma \in \mathcal{S}_5$ with $\sigma(2) = 3$ and $j=2$ can be upper bounded by a permutation $\delta \in \mathcal{S}_{6}$ in expansion of $\perm(A)$ with $\delta(2) = 6, \delta(6) = 3$, and all other entries same as $\sigma$.}
    \label{fig:absolute_z}
\end{figure}

Since $\mathcal{T}$ is one-to-one by construction and does not include $\id_{\ell}$ in its image, the summation can be further simplified as
\begin{align}
   \sum_{j \in [\ell-1]} \sum_{\sigma\in \mathcal{S}_{\ell-1}} \left(\prod_{x\neq j} a_{\sigma(x), x} \right)\cdot a_{\sigma(j), \ell} =  \sum_{j \in [\ell-1]} \sum_{\sigma\in \mathcal{S}_{\ell-1}} \left(\prod_{x=1}^{\ell} a_{\mathcal{T}(\sigma, j)(x), x}\right) = \sum_{\delta\in \mathcal{S}_{\ell} \backslash \id_\ell} \prod_{x=1}^{\ell} a_{\delta(x), x}\,. \label{eq:20}
\end{align}

Plugging~\eqref{eq:20} in~\eqref{eq:19}, we get
\begin{align}
    \sum_{q \in \mathcal{Q}_2} |\det(z_1^{q(1)}, z_2^{q(2)}, \ldots, z_{\ell-1}^{q(\ell-1)})| &\leq 4  \cdot \sum_{\delta\in \mathcal{S}_{\ell} \backslash \id_{\ell}} \prod_{x=1}^{\ell} a_{\delta(x), x}  \cdot a_{\ell, \ell}^{-1}\notag \\
    &\leq 4 \cdot (\perm(A) - \prod_{i=1}^\ell a_{i,i})\cdot a_{\ell, \ell}^{-1}  \leq  0.2 \cdot \prod_{i=1}^{\ell-1} a_{i,i}\,, \label{eq:21}
\end{align}
where the last inequality follows by part~\ref{item:a} of Claim~\ref{claim:f_viol}.

In a similar fashion, summing up the determinants corresponding to the functions in $\mathcal{Q}_3$, we have
\begingroup
\allowdisplaybreaks
\begin{align}
    \sum_{q \in \mathcal{Q}_3} &|\det(z_1^{q(1)}, z_2^{q(2)}, \ldots, z_{\ell-1}^{q(\ell-1)})| 
    = \sum_{\substack{j_1, j_2 \in [\ell-1]\\ q(j_1) = 1, q(j_2) = 4}}  |\det(z_1^{q(1)}, z_2^{q(2)}, \ldots, z_{\ell-1}^{q(\ell-1)})|\notag\\
    &= \sum_{\substack{j_1, j_2 \in [\ell-1]\\ q(j_1) = 1, q(j_2) = 4}}  \left|\sum_{\sigma \in \mathcal{S}_{\ell-1}} \prod_{j=1}^{\ell-1} z_j^{q(j)}(\sigma(j))\right| \leq \sum_{\substack{j_1, j_2 \in [\ell-1]\\ q(j_1) = 1, q(j_2) = 4}}  \sum_{\sigma \in \mathcal{S}_{\ell-1}} \prod_{j=1}^{\ell-1} |z_j^{q(j)}(\sigma(j))|  \notag\\
    &= \sum_{\substack{j_1, j_2 \in [\ell-1]\\ q(j_1) = 1, q(j_2) = 4}}  \sum_{\sigma \in \mathcal{S}_{\ell-1}} \left(\prod_{j \neq j_1, j_2} |z_j^{0}(\sigma(j))| \right) \cdot |z_{j_1}^{1}(\sigma(j_1)) \cdot z_{j_2}^{4}(\sigma(j_2))| \notag\\
    &\leq \sum_{\substack{j_1, j_2 \in [\ell-1]\\ q(j_1) = 1, q(j_2) = 4}}  \sum_{\sigma \in \mathcal{S}_{\ell-1}} \left(\prod_{j \neq j_1, j_2} a_{\sigma(j), j} \right)\cdot |z_{j_1}^{1}(\sigma(j_1)) \cdot z_{j_2}^{4}(\sigma(j_2))|\,. \tag*{(from definition of $z^0$)} \notag\\
    &\leq \sum_{\substack{j_1, j_2 \in [\ell-1]\\ q(j_1) = 1, q(j_2) = 4}}  \sum_{\sigma \in \mathcal{S}_{\ell-1}} \left(\prod_{j \neq j_1, j_2} a_{\sigma(j), j} \right) \cdot a_{\sigma(j_1), j_2}  \cdot a_{\ell, j_1}\cdot a_{\sigma(j_2), \ell} \cdot a_{\ell, \ell}^{-1}\,,\label{eq:22}
\end{align}
\endgroup
where the last inequality follows from Claim~\ref{claim:z}.
Again, consider a function $\mathcal{T}: \mathcal{S}_{\ell-1} \times [\ell-1] \times [\ell-1] \rightarrow \mathcal{S}_\ell$ that maps every (permutation, index, index) tuple $(\sigma \in S_{\ell-1}, j_1\in [\ell-1], j_2 \in [\ell-1])$ with $j_1 \neq j_2$ to a permutation $\mathcal{S}_\ell$. Define $\mathcal{T}(\sigma, j_1, j_2)$ as the permutation in $\mathcal{S}_\ell$ with $\mathcal{T}(\sigma, j_1, j_2)(j_1) = \ell$, $\mathcal{T}(\sigma, j_1, j_2)(j_2) = \sigma(j_1)$, $\mathcal{T}(\sigma, j_1, j_2)(\ell) = \sigma(j_2)$, and $\mathcal{T}(\sigma, j_1, j_2)(x) = \sigma(x)$ for all $x \neq \{j_1, j_2\}$. Using this mapping, $ \left(\prod_{j \neq j_1, j_2} a_{\sigma(j), j} \right) \cdot a_{\sigma(j_1), j_2}  \cdot a_{\ell, j_1}\cdot a_{\sigma(j_2), \ell} = \prod_{x=1}^\ell a_{\mathcal{T}(\sigma, j_1, j_2)(x), x}$\, (see Figure~\ref{fig:absolute_z1z2}).
\begin{figure}[h]
    \centering
    \includegraphics[width=0.7\textwidth]{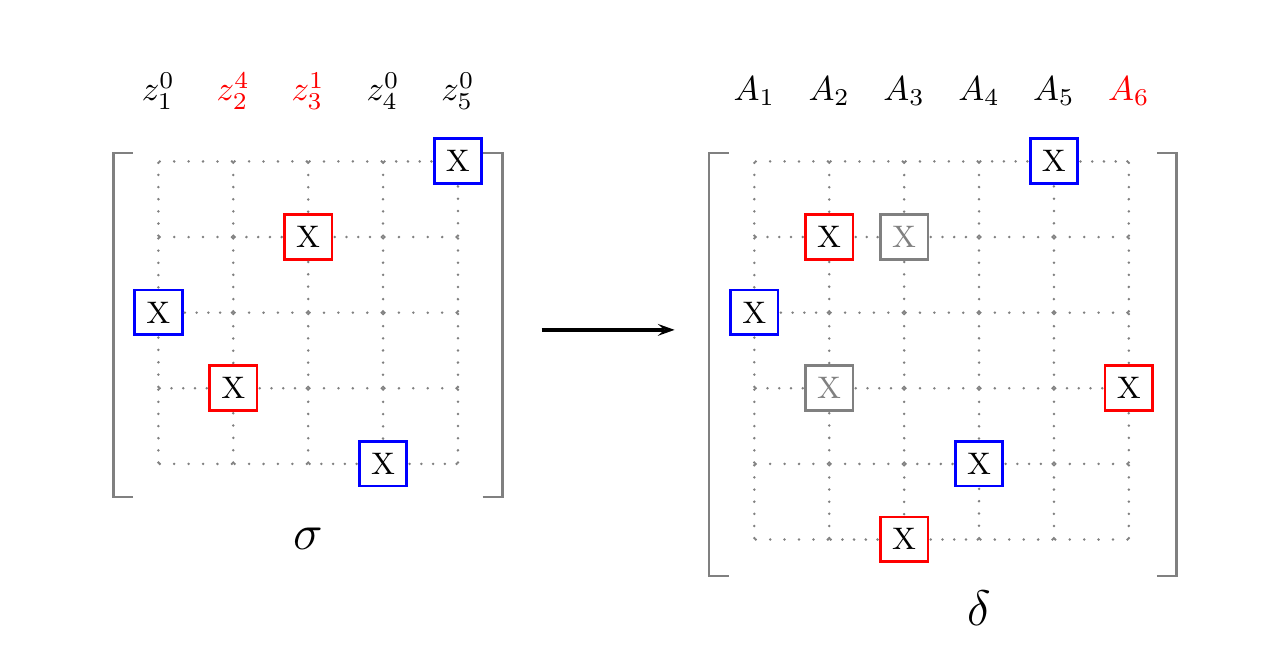}
    \caption{\small In expansion of $\det(z_1^0, z_2^4, z_3^1, z_4^0, z_5^0)$, the term corresponding to permutation $\sigma \in \mathcal{S}_5$ with $\sigma(3) = 2$, $\sigma(2) = 4$ with $j_1 = 3$ and $j_2 = 2$,  can be upper bounded by a permutation $\delta \in \mathcal{S}_{6}$ in expansion of $\perm(A)$ with $\delta(2) = 2, \delta(3) = \ell, \delta(\ell) = 4$, and all other entries same as $\sigma$.}
    \label{fig:absolute_z1z2}
\end{figure}

Note that for a fixed input $j_2$, $\mathcal{T}$ is a one-to-one function in $j_1$ and $\sigma$. So given a permutation $\delta \in \mathcal{S}_\ell$, there are exactly $\ell$ tuples $(\sigma, j_1, j_2)$ such that $\mathcal{T}(\sigma, j_1, j_2) = \delta$. Also, since $\sigma(j_2) < \ell$, the image of $\mathcal{T}$ does not contain $\id_\ell$. So summing over all terms in $\mathcal{Q}_3$ gives
\begin{align*}
     \sum_{\substack{j_1, j_2 \in [\ell-1]\\ q(j_1) = 1, q(j_2) = 4}}  \sum_{\sigma \in \mathcal{S}_{\ell-1}} \left(\prod_{j \neq j_1, j_2} a_{\sigma(j), j} \right) \cdot a_{\sigma(j_1), j_2}  \cdot a_{\ell, j_1}\cdot a_{\sigma(j_2), \ell} &= 
    \sum_{\substack{j_1, j_2 \in [\ell-1]\\ q(j_1) = 1, q(j_2) = 4}}  \sum_{\sigma \in \mathcal{S}_{\ell-1}}\prod_{x=1}^\ell a_{\mathcal{T}(\sigma, j_1, j_2)(x), x}\\
    &= \ell \cdot \sum_{\delta \in \mathcal{S}_{\ell} \backslash \id_\ell}\prod_{x=1}^\ell a_{\delta(x), x}\,.
\end{align*}

Plugging this into~\eqref{eq:22} gives
\begin{align}
     \sum_{q \in \mathcal{Q}_3} &|\det(z_1^{q(1)}, z_2^{q(2)}, \ldots, z_{\ell-1}^{f(\ell-1)})| \leq     \ell\cdot \sum_{\delta \in \mathcal{S}_{\ell} \backslash \id_\ell} \left(\prod_{x=1}^\ell a_{\delta(x), x} \right)  \cdot a_{\ell, \ell}^{-1}\, \notag \\
     &=\ell \cdot(\perm(A)-\prod_{i=1}^\ell a_{i,i})  \cdot a_{\ell, \ell}^{-1} \leq 0.05 \cdot \prod_{i=1}^{\ell-1} a_{i,i}\,, \label{eq:23}
\end{align}
where the last inequality follows from Claim~\ref{claim:f_viol}.

Similarly, summing over the functions in $\mathcal{Q}_4$ gives
\begin{align}
    \sum_{q \in \mathcal{Q}_4} &|\det(z_1^{q(1)}, z_2^{q(2)}, \ldots, z_{\ell-1}^{q(\ell-1)})| 
    \leq \sum_{\substack{j_1, j_2 \in [\ell-1]\\ q(j_1) = 2, q(j_2) = 3}}  \sum_{\sigma \in \mathcal{S}_{\ell-1}} \prod_{j \neq j_1, j_2} a_{\sigma(j), j} \cdot a_{\sigma(j_1), \ell}  \cdot a_{\ell, j_2}\cdot a_{\sigma(j_2), j_1} \cdot a_{\ell, \ell}^{-1} \notag\\
    &\leq \ell \cdot (\perm(A)-\prod_{i=1}^\ell a_{i,i}) \prod_{i=1}^\ell a_{i,i}\leq 0.05\cdot \prod_{i=1}^{\ell-1} a_{i,i}\,. \label{eq:24}
\end{align}
Plugging~\eqref{eq:18},~\eqref{eq:21},~\eqref{eq:23},~\eqref{eq:24} in~\eqref{eq:17}, we get
\begin{align}
    |\det(Z)| &\geq  |\det(z_1^0, z_2^0, \ldots, z_{\ell-1}^0)|-\sum_{f \in  \{\mathcal{Q}_2 \cup \mathcal{Q}_3 \cup \mathcal{Q}_4\} } 
   |\det(z_1^{q(1)}, z_2^{q(2)}, \ldots, z_{\ell-1}^{q(\ell-1)})| \notag \\
    &\geq 0.95\cdot \prod_{i=1}^{\ell-1} a_{i,i} - 0.3 \cdot \prod_{i=1}^{\ell-1} a_{i,i} = 0.65  \cdot \prod_{i=1}^{\ell-1} a_{i,i} \,. \label{eq:25}
\end{align}
From~\eqref{eq:12} and~\eqref{eq:25},
\begin{align*}
    \det(V_TV_T^\top) &\geq 0.65 \cdot a_{\ell, \ell}^2 \cdot (\prod_{i=1}^{\ell-1} a_{i,i})^2 \cdot \det(V_S V_S^\top)  >  0.65 \cdot f(\ell)^2 \cdot\det(V_S V_S^\top) > 2 \cdot \det(V_S V_S^\top)\,,
\end{align*}
where the second last inequality follows from part~\ref{item:a} of Claim~\ref{claim:f_viol}.
\end{proof}

\bibliographystyle{alpha}
\bibliography{ref}

\appendix
\section{Omitted Proofs and Examples} \label{appendix:intro}
\subsection{Matroids}
\begin{definition}[Matroid] A matroid $\mathcal{M}$ is a pair $(E, \mathcal{I})$, where $E$ is a finite set called the ground set of the matroid and $\mathcal{I}$ is a family of subsets of $E$ called the independent sets, satisfying the following properties:
\begin{itemize}
    \item The empty set is independent, i.e., $\emptyset \in \mathcal{I}$.
    \item Downward Closure: If $S \in \mathcal{I}$ and $T \subset S$, then $T \in \mathcal{I}$.
    \item Augmentation: If $S, T \in \mathcal{I}$ with $|T| < |S|$, then there exists $s \in S \backslash T$ such that $T\cup \{s\} \in \mathcal{I}$.
\end{itemize}
\end{definition}  
\begin{definition}[Basis of Matroid] An independent set $S \in \mathcal{I}$ is called a basis of a matroid $\mathcal{M} = (E, \mathcal{I})$ if it has the largest cardinality among all independent sets of $\mathcal{M}$.
\end{definition}
We crucially use the following fact about matroids.
\begin{fact}\label{fact:basis_exch}
For any two distinct bases $S$ and $T$ of matroid $\mathcal{M}$, there exists a bijection $h$ from $S$ to $T$, such that for every $s\in S\setminus T$, $(S\setminus \{s\})\cup \{h(s)\}$ is a basis of $\mathcal{M}$. 
\end{fact}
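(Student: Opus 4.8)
The plan is to reduce the statement to a bipartite matching problem and solve it with Hall's theorem. Let $H$ be the bipartite graph with vertex classes $S \setminus T$ and $T \setminus S$, joining $s \in S \setminus T$ to $t \in T \setminus S$ precisely when $(S \setminus \{s\}) \cup \{t\}$ is independent in $\mathcal{M}$. Since $|S| = |T|$ we have $|S \setminus T| = |T \setminus S|$, and since $(S\setminus\{s\})\cup\{t\}$ has cardinality $|S| = \mathrm{rank}(\mathcal{M})$, being independent here is the same as being a basis. If $H$ has a perfect matching $M \colon S \setminus T \to T \setminus S$, I would define $h \colon S \to T$ by $h(s) = M(s)$ for $s \in S \setminus T$ and $h(s) = s$ for $s \in S \cap T$; this is a bijection $S \to T$, and for every $s \in S \setminus T$ the set $(S \setminus \{s\}) \cup \{h(s)\}$ is a basis by construction. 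So the entire content is the existence of the matching.

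To get the matching I would verify Hall's condition for $H$. Fix a nonempty $A \subseteq S \setminus T$; the goal is $|N(A)| \geq |A|$. Since $A \cap T = \emptyset$ we have $S \setminus A \supseteq S \cap T$, hence $T \setminus (S \setminus A) = T \setminus S$, and $|S \setminus A| = |S| - |A| < |T|$. Repeatedly applying the augmentation axiom to the independent sets $S \setminus A$ and $T$, every element added lies in $T \setminus (S \setminus A) = T \setminus S$, so I can extend $S \setminus A$ to a basis $B = (S \setminus A) \cup A'$ with $A' \subseteq T \setminus S$ and $|A'| = |A|$.

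Next I would show $A' \subseteq N(A)$. Fix $t \in A'$. Then $(S \setminus A) \cup \{t\}$ is a subset of the basis $B$, so it is independent, of size $|S| - |A| + 1$. Since $t \notin S$, the augmentation axiom applied to this set and to $S$ can only introduce elements of $S \setminus ((S\setminus A)\cup\{t\}) = A$; adding $|A| - 1$ such elements yields a basis of the form $(S \setminus \{s_0\}) \cup \{t\}$ for the unique $s_0 \in A$ left out. Thus the pair $s_0, t$ is an edge of $H$, so $t \in N(A)$. As this holds for every $t \in A'$, we get $|N(A)| \geq |A'| = |A|$, and Hall's theorem supplies the desired perfect matching.

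I do not expect a genuine obstacle here, since this is the classical one-sided symmetric-exchange property of matroids; the only place demanding care is the Hall verification, specifically checking that the augmentation steps introduce elements only from the intended side ($T \setminus S$ when extending $S \setminus A$, and $A \subseteq S$ when extending $(S\setminus A)\cup\{t\}$), which is exactly what forces $A' \subseteq N(A)$. If a self-contained argument were considered unnecessary, one could alternatively just cite this as a standard fact from matroid theory.
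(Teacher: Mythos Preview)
Your argument is correct: the reduction to a bipartite matching via Hall's theorem is the standard self-contained proof of this one-sided basis exchange property, and your verification of Hall's condition is clean (the two augmentation steps are justified exactly as you say).

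As for comparison with the paper: there is nothing to compare. The paper states this as a \emph{Fact} with no proof, treating it as a standard matroid-theory result to be quoted rather than proved. Your final sentence anticipated this possibility, and that is precisely what the authors did. So your write-up supplies strictly more than the paper does on this point.
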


\subsection{Omitted Proofs from Section~\ref{sec:prelim}} \label{sec:omitted_proofs}
\begin{proof}[of Theorem~\ref{thm:main}]
Let $\OPT$ denote the optimal value of determinant maximization on vectors $v_1, \ldots, v_n$ with constraint matroid $\mathcal{M} = ([n], \mathcal{I})$.
By Theorem~\ref{thm:sparse-sol-gap}, there exists a fractional solution $\hat{x}$ to~\eqref{eq:cvx_program} with $|\supp(\hat{x})| \leq r + d^2 + 3d$ such that there exists a basis $T$ in $\M$ with $T\subset \supp(\hat{x})$ and
\begin{equation*}
   \det\left(\sum_{i\in T} v_iv_i^\top \right)  \geq (2e^5 d)^{-d} \cdot \OPT .
\end{equation*}

Let $E = \supp(\hat{x})$, and $\mathcal{M}_E = (E, \mathcal{I}_E)$ be the matroid $\mathcal{M}$ restricted to $E$. Let $\OPT_E$ be the optimal value of the determinant maximization problem on vectors in $\{v_i: i \in E\}$ with constraint matroid $\mathcal{M}_E$. Then $\OPT_E \det(\sum_{i \in T} v_i v_i^\top) \geq (2e^5 d)^{-d} \cdot \OPT$. 

Let $S$ be the current basis in Algorithm~\ref{alg:exch}.
Then by Lemma~\ref{lem:existence_cycle}, while
\begin{equation*}
    \det(\sum_{i \in S} v_i v_i^\top) < \OPT_E \cdot O(dk^{O(d)}) = \OPT_E \cdot O(d^{O(d)}),
\end{equation*}
there exists an $f$-violating cycle, $C$ in $G(S)$. So while  and we update $S \leftarrow S \triangle C$. So the basis $S$ returned by Algorithm~\ref{alg:exch} satisfies $\det(\sum_{i \in S} v_i v_i^\top) \geq \Omega(d^{-O(d)}) \cdot \OPT$\,.

If we initialize $S$ to any solution with a non-zero determinant, then in each iteration of Algorithm~\ref{alg:exch}, the ratio of the determinants satisfies
\begin{equation*}
   1 \leq \frac{\det(\sum_{i \in \OPT_E} v_i v_i^\top)}{\det(\sum_{i \in S} v_i v_i^\top)} \leq 2^{4\sigma}\,,
\end{equation*} 
 where $\sigma$ is the encoding length of the input to the determinant maximization problem (Chapter 3, Theorem 3.2 ~\cite{schrijver2000linear}). Theorem \ref{thm:main2} tells us that each iteration improves the objective by a factor of at least $2$, so Algorithm~\ref{alg:exch} will terminate in at most $4\sigma$ iterations.
\end{proof}

\section{Omitted Lemmas and proofs from Section~\ref{sec:existence}}\label{appendix:existence}
\begin{lemma}\label{lem:inv}
Let $S = \{v_1, \ldots, v_r\}$ such that $ V_S V_S^\top$ is invertible. Then \begin{itemize}
   \item $\det(I - V_Y^\top (V_S V_S^\top)^{-1} V_Y) \geq 0$ for any $Y \subseteq S$, 
    \item $|v_i^\top (V_S V_S^\top)^{-1} v_j| \leq \sqrt{(1-v_i^\top (V_S V_S^\top)^{-1} v_i) \cdot (1-v_j^\top (V_S V_S^\top)^{-1} v_j)}$ for any $i, j \in S$ and $i \neq j$.
\end{itemize}
\end{lemma}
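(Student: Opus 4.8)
The plan is to derive both parts from the matrix determinant lemma together with the elementary observation that for $Y \subseteq S$ the matrix $V_S V_S^\top - V_Y V_Y^\top = \sum_{i \in S \setminus Y} v_i v_i^\top$ is positive semidefinite.

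\textbf{First part.} I would apply the matrix determinant lemma with $A = V_S V_S^\top$ (invertible by hypothesis), $U = V_Y$, and $V = -V_Y$, which gives
\[ \det(V_S V_S^\top - V_Y V_Y^\top) = \det(V_S V_S^\top)\cdot \det\!\left(I - V_Y^\top (V_S V_S^\top)^{-1} V_Y\right). \]
The left-hand side is $\det\!\left(\sum_{i \in S \setminus Y} v_i v_i^\top\right)$, a sum of rank-one positive semidefinite matrices, hence $\geq 0$; and $\det(V_S V_S^\top) > 0$ since $V_S V_S^\top$ is invertible and positive semidefinite. Dividing yields $\det(I - V_Y^\top (V_S V_S^\top)^{-1} V_Y) \geq 0$. (Alternatively, one may observe that $I - V_Y^\top (V_S V_S^\top)^{-1} V_Y$ is itself positive semidefinite: conjugating $V_Y V_Y^\top \preceq V_S V_S^\top$ by $(V_S V_S^\top)^{-1/2}$ shows all eigenvalues of $V_Y^\top (V_S V_S^\top)^{-1} V_Y$ lie in $[0,1]$.)

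\textbf{Second part.} Fix $i \neq j$ in $S$ and take $Y = \{v_i, v_j\}$. Writing $\langle u, w\rangle_S = u^\top (V_S V_S^\top)^{-1} w$ and using that $(V_S V_S^\top)^{-1}$ is symmetric,
\[ I_2 - V_Y^\top (V_S V_S^\top)^{-1} V_Y = \begin{bmatrix} 1 - \norm{v_i}_S^2 & -\langle v_i, v_j\rangle_S \\ -\langle v_i, v_j\rangle_S & 1 - \norm{v_j}_S^2 \end{bmatrix}. \]
By the first part its determinant is nonnegative, i.e.\ $\langle v_i, v_j\rangle_S^2 \leq (1 - \norm{v_i}_S^2)(1 - \norm{v_j}_S^2)$. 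Applying the first part again with the singletons $Y = \{v_i\}$ and $Y = \{v_j\}$ shows $1 - \norm{v_i}_S^2 \geq 0$ and $1 - \norm{v_j}_S^2 \geq 0$, so the right-hand side is a genuine nonnegative product and taking square roots gives the claimed inequality.

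\textbf{Main obstacle.} There is essentially none: the only points requiring care are the sign bookkeeping when invoking the matrix determinant lemma (the $U = V_Y$, $V = -V_Y$ pairing so that $UV^\top = -V_Y V_Y^\top$) and certifying nonnegativity of the two diagonal entries before extracting the square root in the second part; everything else is a direct computation.
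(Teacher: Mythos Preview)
Your proposal is correct and follows essentially the same approach as the paper: both use the matrix determinant lemma applied to $V_S V_S^\top - V_Y V_Y^\top$ together with the observation that this matrix is a sum of rank-one PSD terms, hence PSD. The only cosmetic difference is that you obtain the second part by specializing the first to $Y=\{v_i,v_j\}$, whereas the paper redoes the matrix determinant lemma computation directly for that case; your extra care in verifying $1-\norm{v_i}_S^2\ge 0$ and $1-\norm{v_j}_S^2\ge 0$ before extracting the square root is a small improvement over the paper's writeup.
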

\begin{proof}
For any $Y \subseteq S$, $V_SV_S^\top- V_YV_Y^\top = \sum_{v\in S\backslash Y} vv^\top \succeq 0$. Therefore,
\begin{align*}
    \det(V_S V_S^\top - V_YV_Y^\top) = \det(V_S V_S^\top) \cdot (1-V_Y^\top (V_S V_S^\top)^{-1} V_Y) \geq 0\,.
\end{align*}

For any $v_i, v_j \in S$ with $i \neq j$, $V_S V_S^\top- v_i v_i^\top -v_j v_j^\top \succeq 0$. Using the matrix determinant lemma gives
\begin{align*}
    \det(&V_S V_S^\top - v_i v_i^\top - v_j v_j^\top) 
    = \det\left(V_S V_S^\top + \begin{bmatrix}
    -v_i & -v_j
    \end{bmatrix} \begin{bmatrix}
    v_i^\top \\ v_j^\top
    \end{bmatrix}\right) \\
    &= \det(V_S V_S^\top) \cdot \left(I_2+ \begin{bmatrix}
    v_i^\top \\ v_j^\top
    \end{bmatrix}(V_S V_S^\top)^{-1}  \begin{bmatrix}
    -v_i & -v_j
    \end{bmatrix}\right) \\
    &=\det(V_S V_S^\top) \cdot \left((1-v_i^\top (V_S V_S^\top)^{-1} v_i) \cdot (1-v_j^\top (V_S V_S^\top)^{-1} v_j) -  (v_i^\top (V_S V_S^\top)^{-1} v_j)^2\right) \geq 0.
\end{align*}
This implies $|v_i^\top (V_S V_S^\top)^{-1} v_j| \leq \sqrt{(1-v_i^\top (V_S V_S^\top)^{-1} v_i) \cdot (1-v_j^\top (V_S V_S^\top)^{-1} v_j)}$\,.

\end{proof}
\section{Omitted Lemmas and proofs from Section~\ref{sec:update}} \label{appendix:update}
\begin{lemma}\label{lem:matroid_indep}
    If $C$ is a minimal $f$-violating cycle in $G(S)$, then $S\triangle C$ is independent in $\mathcal{M}$.
\end{lemma}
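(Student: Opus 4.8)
The plan is to decouple this statement from the determinant entirely and reduce it to a classical exchange property of matroids, verifying the needed hypothesis from the minimality of $C$ together with the weight structure of $G(S)$. Since $G(S)$ is bipartite with sides $[n]\setminus S$ and $S$ and $C$ is a cycle, $\mathrm{ver}(C)$ splits as $A\sqcup B$ with $A:=\mathrm{ver}(C)\setminus S$ and $B:=\mathrm{ver}(C)\cap S$, $|A|=|B|=:\ell$, and $C$ alternates forward arcs $u\to v$ (whose only role — the linear-spanning certificate — is irrelevant here) with backward arcs $v\to u$, each of which by definition of $G(S)$ certifies $S-v+u\in\mathcal I$. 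Thus the backward arcs of $C$ form a perfect matching $M_b$ of the bipartite \emph{matroid exchange graph} $H$ on $A\cup B$ with edges $\{(u,v):u\in A,\ v\in B,\ S-v+u\in\mathcal I\}$, and the object we must control is exactly $S\triangle C=(S\setminus B)\cup A$. By the classical fact that a unique perfect matching in such an exchange graph forces $(S\setminus B)\cup A$ to be independent (indeed a basis), it suffices to prove that $M_b$ is the \emph{only} perfect matching of $H$.

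Suppose toward a contradiction that $H$ has a perfect matching $M'\neq M_b$, and among all such pick one with $|M'\triangle M_b|$ minimal. Let $M_f$ be the (multiset of) forward arcs of $C$. Then $M_f\cup M'$ is a $2$-regular digraph on $\mathrm{ver}(C)$ all of whose arcs lie in $G(S)$ — the forward ones because $M_f\subseteq C$, the backward ones $v\to u$ because $(u,v)\in M'\subseteq H$ — and hence decomposes into arc-disjoint directed cycles $D_1,\dots,D_p$ of $G(S)$ with $\sum_j|D_j|=|C|=2\ell$. Because every backward arc has weight $0$, $\sum_j w(D_j)=w(M_f)=w(C)$, while super-multiplicativity $f(a)f(b)\le f(a+b)$ gives $\sum_j\log f(|D_j|/2)\le\log f(\ell)$. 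Since $C$ is $f$-violating, $w(C)<-\log f(\ell)$, so $\sum_j\bigl(w(D_j)+\log f(|D_j|/2)\bigr)<0$ and some $D_j$ satisfies $w(D_j)<-\log f(|D_j|/2)$, i.e.\ is $f$-violating. If $p\ge 2$ then $\mathrm{ver}(D_j)\subsetneq\mathrm{ver}(C)$, contradicting minimality of $C$, and we are done.

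The remaining possibility is $p=1$: $M_f\cup M'$ is again a Hamiltonian cycle $C'$ on $\mathrm{ver}(C)$ (with the same weight as $C$, hence again minimal $f$-violating). This case is the crux, and is handled as in~\cite{BrownLPST22}: rerouting $M_b$ into $M'$ along the single alternating cycle $M'\triangle M_b$ permutes the cyclic order in which the ``blocks'' of $C$ between the rerouting vertices are traversed, and, crucially using that $(S\setminus B)\cup A$ is \emph{not} independent — so by the classical exchange fact $H$ cannot have a unique perfect matching, and in fact must admit one whose symmetric difference with $M_b$ is compatible with $M_f$ only after splitting — one exhibits a perfect matching $M''$ of $H$ for which $M_f\cup M''$ has at least two components; applying the second paragraph to $M''$ then yields the contradiction. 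Nowhere did we use which of the two forward arcs (type~$\RN{1}$ or type~$\RN{2}$) $C$ uses at each step, so the argument applies to every minimal $f$-violating cycle, giving that $S\triangle C=(S\setminus B)\cup A$ is independent in $\mathcal M$. I expect the degenerate $p=1$ case to be the only real difficulty: at the level of abstract permutations, rerouting a perfect matching along one alternating cycle need not raise the number of cycles it forms with $M_f$, so this step must genuinely invoke the matroid dependence of $(S\setminus B)\cup A$ — this is precisely the combinatorial core carried over from~\cite{BrownLPST22}, whereas the weight bookkeeping in the second paragraph is a routine variant of the argument in the proof of Lemma~\ref{lem:cyc_det}.
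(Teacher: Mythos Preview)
Your first two paragraphs are sound and in fact mirror the paper's argument closely. The genuine gap is in the $p=1$ case. Your claim that ``one exhibits a perfect matching $M''$ of $H$ for which $M_f\cup M''$ has at least two components'' is not justified, and indeed is false in general. For a concrete obstruction, take $\ell=3$, $S=\{e_1,e_2,e_3\}$ in a rank-$3$ linear matroid, and $u_1=(0,1,1)$, $u_2=(1,0,1)$, $u_3=(1,-1,0)$. Then $S-v_i+u_j$ is a basis iff $j\neq i$, so $H$ is $K_{3,3}$ minus a perfect matching and has exactly the two derangement matchings $M_b$ and $M'$; both, paired with $M_f$, are Hamiltonian. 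Yet $\{u_1,u_2,u_3\}$ is dependent, so we are squarely in your contradiction hypothesis with no third matching available. Your appeal to \cite{BrownLPST22} does not rescue this, because that paper (and the present one) does not produce an $M''$ either.

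What the paper actually does is sidestep the $p=1$ degeneracy by a doubling trick: it forms the Eulerian multigraph $D$ on $\mathrm{ver}(C)$ with arc multiset $2M_f\cup M_b\cup M'$ (two copies of the forward arcs, one copy of each backward matching). Because $M_b\neq M'$, some backward arc of $M'$ is a chord of the Hamiltonian cycle $C=M_f\cup M_b$, and this chord together with a segment of $C$ gives a directed circuit $C_1$ with $\mathrm{ver}(C_1)\subsetneq\mathrm{ver}(C)$; one then extends $C_1$ to a full circuit decomposition $C_1,\dots,C_k$ of $D$. Now $\sum_i w(C_i)=2w(C)$ and $\sum_i|C_i|=4\ell$, and a short case analysis (at most one $C_j$ can be Hamiltonian; use $f(a)f(b)<f(a+b)$ and $f(a)f(b)<f((a+b)/2)^2$) yields an $f$-violating $C_i$ with $\mathrm{ver}(C_i)\subsetneq\mathrm{ver}(C)$, contradicting minimality. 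The doubling is precisely what guarantees a proper sub-circuit exists without needing any further matchings of $H$; your single-copy construction $M_f\cup M'$ cannot guarantee this.
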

\begin{proof}
Let $T := S \triangle C$ and let $|C| = 2\ell$. Since $C$ is an $f$-violating cycle, $w(C) < - \log(f(\ell))$.

Define $N_1$ as the set of backward arcs in $C$ and $N_2$ as the set of forward arcs in $C$.
For the sake of contradiction, assume that $T$ is not independent in $\mathcal{M}$. Then, there exists a matching $N_1'$ on the vertices of $C$ consisting of only backward arcs such that $N_1 \neq N_1'$ (Chapter 39, Theorem 39.13, ~\cite{schrijver2003combinatorial}). Let $A$ be a multiset of arcs consisting of two copies of all arcs in $N_2$ plus all arcs in $N_1$ and $N_1'$ (with arcs in $N_1 \cap N_1'$ appearing twice). Consider the directed
graph $D = (\mathrm{ver}(C), A)$. Since $N_1 \neq N_1'$, $D$ contains
a directed circuit $C_1$ with $\mathrm{ver}(C_1) \subsetneq \mathrm{ver}(C)$. Every vertex in $\mathrm{ver}(C)$ has exactly two in-edges and two out-edges in $A$. Therefore, $D$ is Eulerian, and we can decompose $A$ into directed circuits $C_1, \ldots, C_k$. Since only arcs in $N_2$ have non-zero weights and the sum of these weights in $w(C)$, we have $\sum_{i=1}^k w(C_i) = 2w(C)$.

Because $\mathrm{ver}(C_1) \subsetneq \mathrm{ver}(C)$ by definition, at most one cycle in $C_1, \ldots, C_k$ can have the same set of vertices as $C$. If for some $j$, $\mathrm{ver}(C_j) = \mathrm{ver}(C)$, then $w(C_j) = w(C)$ as $C_j$ must contain every edge in $N_2$ once. So, $\sum_{i \neq j} w(C_i) = w(C) < - \log(f(\ell))$. Since $\sum_{i\neq j}|C_i|/2 = \ell$ and the function $f$ satisfies $f(a)\cdot f(b) < f(a+b)$, there must exist a cycle $C_i$ with $i\neq j$. Since only at most one cycle can have the same vertex set as $C$, we also have $\mathrm{ver}(C_i) \subsetneq \mathrm{ver}(C)$. So $C_i$ is an $f$-violating cycle with $\mathrm{ver}(C_i) \subsetneq \mathrm{ver}(C)$. This contradicts the minimality of $C$.

Otherwise, if $\mathrm{ver}(C_j) \subsetneq \mathrm{ver}(C)$ for all $j$, then $\sum_{i} w(C_i) = 2w(C) < -2\log(f(\ell))$.  Since $\sum_{i}|C_i|/2 = 2\ell$ and the function $f$ satisfies $f(a)\cdot f(b) < (f((a+b)/2))^2$ for any $a, b < (a+b)/2$, there exists a cycle $C_i$ such that $\mathrm{ver}(C_i) \subsetneq \mathrm{ver}(C)$ and $w(C_i) < -\log(f(|C_i|/2))$. So $C_i$ is an $f$-violating cycle with $\mathrm{ver}(C_i) \subsetneq \mathrm{ver}(C)$, which contradicts the fact that $C$ is a minimal $f$-violating cycle.
\end{proof}
\begin{lemma} \label{lem:local}
Let $S, T$ be sets of vectors in $\R^d$ such that $|S| = |T| = r$, $|S \triangle T| \leq 2d$, and let $X = T \backslash S$, $Y = S \backslash T$. Then \begin{equation*}
    \det(V_TV_T^\top) \geq \det(V_S V_S^\top) \cdot \det(V_X^\top (V_S V_S^\top)^{-1} V_Y)^2.
\end{equation*}
\end{lemma}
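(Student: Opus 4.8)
The plan is to reduce the claimed inequality, via the matrix determinant lemma, to a self-contained bound on an $\ell\times\ell$ block matrix built from the inner product $\langle\cdot,\cdot\rangle_S$, and then to prove that bound with a Schur complement together with monotonicity of the determinant on the positive semidefinite cone. As is implicit in the statement — and as holds in every place the lemma is invoked — I assume $V_SV_S^\top$ is invertible; write $\ell := |X| = |Y|$.

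First I would observe that $X\cap S=\emptyset$ and $Y\subseteq S$, so $V_TV_T^\top = V_SV_S^\top + V_XV_X^\top - V_YV_Y^\top$; applying the matrix determinant lemma to this rank-$2\ell$ update — the computation in the proof of Lemma~\ref{lem:det_update} — gives
\[
  \det(V_TV_T^\top) \;=\; \det(V_SV_S^\top)\cdot\det\!\begin{bmatrix} I_\ell + P & -B \\ B^\top & I_\ell - Q\end{bmatrix},
\]
where $P := V_X^\top(V_SV_S^\top)^{-1}V_X$, $Q := V_Y^\top(V_SV_S^\top)^{-1}V_Y$, and $B := V_X^\top(V_SV_S^\top)^{-1}V_Y$, so the right-hand side of the lemma equals $\det(V_SV_S^\top)\cdot(\det B)^2$. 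Here $P$ and $Q$ are Gram matrices for the positive definite form $(V_SV_S^\top)^{-1}$, hence $P,Q\succeq 0$; and since $Y\subseteq S$, we have $V_SV_S^\top-V_YV_Y^\top=\sum_{v\in S\backslash Y}vv^\top\succeq 0$, so conjugating by $(V_SV_S^\top)^{-1/2}$ yields $I_\ell-Q\succeq 0$ (this also follows from Lemma~\ref{lem:inv} applied to all subsets of $Y$). Thus it suffices to show: for any real $\ell\times\ell$ matrix $B$ and symmetric matrices with $P\succeq 0$ and $I_\ell-Q\succeq 0$,
\[
  \det\!\begin{bmatrix} I_\ell + P & -B \\ B^\top & I_\ell - Q\end{bmatrix} \;\geq\; (\det B)^2 .
\]
(The hypothesis $|S\triangle T|\leq 2d$ is not needed.)

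To prove this, first suppose $I_\ell-Q\succ 0$. The Schur complement with respect to the bottom-right block gives
\[
  \det\!\begin{bmatrix} I_\ell + P & -B \\ B^\top & I_\ell - Q\end{bmatrix} \;=\; \det(I_\ell-Q)\cdot\det\!\bigl(I_\ell+P+B(I_\ell-Q)^{-1}B^\top\bigr).
\]
Since $I_\ell+P\succeq 0$ and $B(I_\ell-Q)^{-1}B^\top\succeq 0$, monotonicity of the determinant on the PSD cone ($\det(M_1+M_2)\geq\det M_1$ whenever $M_1,M_2\succeq 0$) gives $\det\bigl(I_\ell+P+B(I_\ell-Q)^{-1}B^\top\bigr)\geq\det\bigl(B(I_\ell-Q)^{-1}B^\top\bigr)=(\det B)^2/\det(I_\ell-Q)$, and multiplying by $\det(I_\ell-Q)$ yields exactly $(\det B)^2$. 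For the general case $I_\ell-Q\succeq 0$, I would apply this to the matrix obtained by replacing $Q$ with $(1-\epsilon)Q$ for $\epsilon\in(0,1]$: then $I_\ell-(1-\epsilon)Q=(1-\epsilon)(I_\ell-Q)+\epsilon I_\ell\succ 0$, so the determinant is $\geq(\det B)^2$ for every such $\epsilon$, and letting $\epsilon\to 0^+$ and using continuity of the determinant finishes the proof.

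I do not expect a serious obstacle. The only delicate point is the possible singularity of $I_\ell-Q$, handled by the one-line perturbation above, together with correctly sourcing $I_\ell-Q\succeq 0$ from $Y\subseteq S$ (equivalently, from Lemma~\ref{lem:inv}); everything else is the matrix determinant lemma and determinant monotonicity on the PSD cone.
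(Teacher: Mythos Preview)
Your proof is correct and follows essentially the same route as the paper's: reduce to the $2\ell\times 2\ell$ block determinant via Lemma~\ref{lem:det_update}, take a Schur complement, and apply determinant monotonicity on the PSD cone. The only difference is that the paper pivots on the top-left block $I_\ell+P\succ 0$ (always invertible), whereas you pivot on $I_\ell-Q\succeq 0$ and therefore need the $\epsilon$-perturbation; pivoting on $I_\ell+P$ would spare you that step.
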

\begin{proof}
Let $\ell := |X| = |Y|$. Then by Lemma~\ref{lem:det_update}, we get
\begin{align}
    \det(V_TV_T^\top) &=  \det(V_S V_S^\top) \cdot \det\left(\begin{bmatrix} I_\ell + V_X^\top (V_S V_S^\top)^{-1}  V_X & V_X^\top (V_S V_S^\top)^{-1} V_X\\
   -V_Y^\top (V_S V_S^\top)^{-1} V_X &  I_\ell -V_Y^\top (V_S V_S^\top)^{-1}  V_X
    \end{bmatrix}\right)\,. \label{eq:26}
\end{align}
Since $I_\ell + V_X^\top (V_S V_S^\top)^{-1}  V_X \succ 0$, the $\ell \times \ell$ matrix $I_\ell + V_X^\top (V_S V_S^\top)^{-1}  V_X$ is invertible. 
We define
\begin{align*}
 A &:= I_\ell + V_X^\top (V_S V_S^\top)^{-1}  V_X, \quad \quad
 B :=V_X^\top (V_S V_S^\top)^{-1} V_Y, \quad\quad
 C := I_\ell -V_Y^\top (V_S V_S^\top)^{-1}  V_Y,\\
    Q &:= \begin{bmatrix} I_\ell & 0 \\
 B^\top A^{-1} & I_\ell\end{bmatrix}, \quad\quad
   P := \begin{bmatrix} I_\ell & -A^{-1} B \\
   0 & I_\ell\end{bmatrix}.
\end{align*}
Then \begin{align}
   Q \cdot & \begin{bmatrix} I_\ell + V_X^\top (V_S V_S^\top)^{-1}  V_X & V_X^\top (V_S V_S^\top)^{-1} V_Y\\
   -V_Y^\top (V_S V_S^\top)^{-1} V_X &  I_\ell -V_Y^\top (V_S V_S^\top)^{-1}  V_Y\end{bmatrix}\cdot P =\begin{bmatrix} A & 0\\
  0 &  C + B^\top A^{-1} B \end{bmatrix} . \label{eq:27}
\end{align}
Since $\det(P) = \det(Q) = 1$, \begin{align*}
  \det&\left( \begin{bmatrix} I_\ell + V_X^\top (V_S V_S^\top)^{-1}  V_X & V_X^\top (V_S V_S^\top)^{-1} V_Y\\
   -V_Y^\top (V_S V_S^\top)^{-1} V_X &  I_\ell -V_Y^\top (V_S V_S^\top)^{-1}  V_Y\end{bmatrix}\right) = \det\left( Q \cdot \begin{bmatrix} A & B\\
   -B^\top &  C\end{bmatrix}\cdot P\right)\\
   &= \det(A) \cdot \det( C + B^\top A^{-1} B ). \tag{from~\eqref{eq:27}}
\end{align*}
By Lemma~\ref{lem:inv}, $C = I_\ell -V_Y^\top (V_S V_S^\top)^{-1}  V_Y \succeq 0$, and since $B^\top A^{-1} B \succ 0$, we have $\det( C + B^\top A^{-1} B ) \geq \det( B^\top A^{-1} B )$. Therefore,
\begin{align}
  \det\left(  \begin{bmatrix} I_\ell + V_X^\top (V_S V_S^\top)^{-1}  V_X & V_X^\top (V_S V_S^\top)^{-1} V_Y\\
   -V_Y^\top (V_S V_S^\top)^{-1} V_X &  I_\ell -V_Y^\top (V_S V_S^\top)^{-1}  V_Y\end{bmatrix}\right)
   &\geq  \det(A) \det( B^\top A^{-1} B ) = \det(B)^2, \label{eq:28}
\end{align}
where the last equality holds because both $A$ and $B$ are $\ell \times \ell$ matrices.
Combining~\eqref{eq:26} and~\eqref{eq:28} completes the proof.
\end{proof}

\begin{proof}[of Claim~\ref{claim:f_viol}]\label{proof:claim_f_viol}
For part~\ref{item:a}, the product of the diagonal entries of $A$ is
\begin{equation*}
   \prod_{i=1}^\ell a_{i,i} \geq \left(\prod_{i=1}^{\ell-1} |\langle u_i, v_i \rangle_S| \right)\cdot  \sqrt{(1+\norm{u_\ell}_S^2) \cdot (1-\norm{v_\ell}_S^2)} \geq \exp(-w(C))\geq f(\ell). 
\end{equation*}
where the last inequality follows from the fact that $C$ is an $f$-violating cycle. 

For part~\ref{item:b}, if $a_{i,i} > |\dotp{i}{i}|$, then
\begin{equation*}
    \exp(-w(u_i \arc{2} v_j)) = \sqrt{(1+\normu{i}^2) \cdot (1-\normv{j}^2)} \geq |\dotp{i}{j}| = \exp(-w(u_i\arc{1} v_j)).
\end{equation*}

So replacing the arc $(u_i \arc{1} v_j)$ with $(u_i \arc{2} v_j)$ in $C$ gives an $f$-violating cycle $C'$, with $\mathrm{ver}(C) = \mathrm{ver}(C')$ but with two arcs of type $\RN{2}$. Since $C'$ contains two arcs of type $\RN{2}$, by Lemma~\ref{lem:typeII}, there exists an $f$-violating cycle $C''$ such that $\mathrm{ver}(C'') \subsetneq \mathrm{ver}(C') = \mathrm{ver}(C)$, which contradicts the minimality of $C$.

For part~\ref{item:c}, we first bound every off-diagonal entry of matrix $A$ as a function of its diagonal entries and then apply Lemma~\ref{lem:perm}.

For $i, j \in [\ell]$ with $i > j$, define the cycles $C^{\RN{1}}_{i,j} := (u_i \arc{1} v_j \rightarrow u_{j+1}\arc{1} v_{j+1} \ldots  v_{i-1} \rightarrow u_i)$ and $C^{\RN{2}}_{i,j} := (u_i \arc{2} v_j \rightarrow u_{j+1}\arc{1} v_{j+1}\ldots  v_{i-1} \rightarrow u_i)$. Both $C^{\RN{1}}_{i,j}$ and $C^{\RN{2}}_{i,j}$ contain $2(i-j)$ arcs and $\mathrm{ver}(C^{\RN{1}}_{i,j}) = \mathrm{ver}(C^{\RN{2}}_{i,j}) \subsetneq \mathrm{ver}(C)$. So, by minilality of $C$, $C^{\RN{1}}_{i,j}$ and $C^{\RN{2}}_{i,j}$ are not $f$-violating cycles.

Therefore, $\exp(-w(C^{\RN{1}}_{i,j})) = |\dotp{i}{j}| \cdot \prod_{k = j+1}^{i-1} \exp(-w(u_k \arc{1} v_k)) < f(i-j)$, and
\begin{equation}
    |\dotp{i}{j}| < \frac{f(i-j)}{\prod_{k = j+1}^{i-1}\exp(-w(u_k \arc{1} v_k))}\,. \label{eq:29}
\end{equation}
Using a similar argument for $C^{\RN{2}}_{i,j}$ gives
\begin{equation}
    \sqrt{(1+\normu{i}^2)\cdot (1-\normv{j}^2)} < \frac{f(i-j)}{\prod_{k = j+1}^{i-1} \exp(-w(u_k \arc{1} v_k))}\,.\label{eq:30}
\end{equation}
Combining~\eqref{eq:29} and~\eqref{eq:30}, we get
\begin{equation*}
    a_{i,j} \leq |\dotp{i}{j}| + \sqrt{(1+\normu{i}^2)\cdot (1-\normv{j}^2)} \leq \frac{2\cdot f(i-j)}{\prod_{k = j+1}^{i-1} \exp(-w(u_k \arc{1} v_k))} = \frac{2\cdot f(i-j)}{\prod_{k = j+1}^{i-1} |a_{k,k}|}\,,
\end{equation*}
where the last equality follows from part~\ref{item:b} of this claim.

For $i,j \in [\ell-1]$ with $i < j$, define $C^{\RN{1}}_{i,j} := (v_\ell \rightarrow u_1 \arc{1} v_1\ldots u_i \arc{1} v_j\rightarrow u_{j+1} \ldots  u_{\ell} \arc{2} v_{\ell})$ and $C^{\RN{2}}_{i,j} := (v_\ell \rightarrow u_1 \arc{1} v_1 \ldots u_i \arc{2} v_j\rightarrow u_{j+1} \ldots  u_{\ell} \arc{2} v_{\ell})$. Both $C^{\RN{1}}_{i,j}$ and $C^{\RN{2}}_{i,j}$ contain $2(\ell - j + i)$ arcs and $\mathrm{ver}(C^{\RN{1}}_{i,j}) = \mathrm{ver}(C^{\RN{2}}_{i,j}) \subsetneq \mathrm{ver}(C)$. So, they are not $f$-violating cycles. Therefore, \begin{align}
\exp(-w(C^{\RN{1}}_{i,j})) &= |\dotp{i}{j}| \cdot \prod_{k = 1}^{i-1} \exp(-w(u_k \arc{1} v_k)) \cdot \!\!\prod_{k = j+1}^{\ell-1} \exp(-w(u_k \arc{1} v_k)) \cdot \exp(-w(u_\ell \arc{2} v_\ell))\notag\\ 
&< f(\ell - j + i). \label{eq:31}
\end{align}
Since $C$ is an $f$-violating cycle, we also have \begin{equation}
    \prod_{k = 1}^{\ell-1} \exp(-w(u_k \arc{1} v_k)) \cdot \exp(-w(u_\ell \arc{2} v_\ell)) > f(\ell). \label{eq:32}
\end{equation}
Dividing~\eqref{eq:31} by~\eqref{eq:32} gives
 \begin{align}
     |\dotp{i}{j}| &< \frac{f(\ell-j+i)}{f(\ell)} \cdot \prod_{k=i}^j \exp(-w(u_k \arc{1} v_k))\,. \label{eq:33}
\end{align}
Similarly,
\begin{align}
     \sqrt{(1+\normu{i}^2)\cdot (1-\normv{j}^2)} &< \frac{f(\ell-j+i)}{f(\ell)} \cdot \prod_{k=i}^j \exp(-w(u_k \arc{1} v_k))\,. \label{eq:34}
\end{align}
Summing~\eqref{eq:33} and~\eqref{eq:34} gives
\begin{align*}
    a_{i,j} &\leq |\dotp{i}{j}| + \sqrt{(1+\normu{i}^2)\cdot (1-\normv{j}^2)} \leq 2\cdot\frac{f(\ell-j+i)}{f(\ell)} \cdot \prod_{k=i}^j \exp(-w(u_k \arc{1} v_k))\\
    &\leq 2\cdot \frac{ f(\ell-j+i)}{f(\ell)} \cdot \prod_{k=i}^j |a_{k,k}|\,. \tag{from definition of $a_{k, k}$}
\end{align*}
To bound $a_{i, \ell}$ for $i < \ell$, consider cycles $C^{\RN{1}}_{i,\ell} := (v_\ell \rightarrow u_1 \arc{1} v_1\ldots u_i \arc{1} v_\ell)$ and $C^{\RN{2}}_{i,j} := (v_\ell \rightarrow u_1 \arc{1} v_1 \ldots u_i \arc{2} v_\ell)$. Both $C^{\RN{1}}_{i,\ell}$ and $C^{\RN{2}}_{i,\ell}$ contain $2i $ arcs and $\mathrm{ver}(C^{\RN{1}}_{i,\ell}) = \mathrm{ver}(C^{\RN{2}}_{i,\ell}) \subsetneq \mathrm{ver}(C)$. So, they are not $f$-violating cycles. Following a similar argument to the $i < j < \ell$ case and comparing $w(C^{\RN{1}}_{i,\ell}), w(C^{\RN{1}}_{i,\ell})$ with $w(C)$ gives \begin{equation*}
    a_{i, \ell} \leq 2\cdot \frac{f(i)}{f(\ell)} \cdot \prod_{k=i}^{\ell-1} |a_{k,k}| \cdot \exp(-w(u_\ell \arc{2} v_\ell)) < 2\cdot \frac{f(i)}{f(\ell)} \cdot \prod_{k=i}^{\ell-1} |a_{k,k}| \cdot a_{\ell, \ell}\,.
\end{equation*}

So $A$ satisfies all three prerequisites of Lemma~\ref{lem:perm}. Applying Lemma~\ref{lem:perm} to $A$ gives $\perm(A) \leq \prod_{i=1}^\ell a_{i,i} (1+0.05/\ell) $.

For any $t \in [\ell]$, the principal submatrix $A_{t,t}$ satisfies  prerequisites~\ref{th:first} and~\ref{th:second} of Lemma~\ref{lem:perm}. Observe that $f(y - x) /f(y)$ is a non-increasing function of $y$ for any fixed $x \in [\ell]$. Therefore, for any $i > j$,
\begin{equation*}
    a_{i,j} \leq\frac{2f(\ell-j+i)}{f(\ell)} \cdot \prod_{k=i}^j a_{i,i}  \leq \frac{2f(t-j+i)}{f(t)} \cdot \prod_{k=i}^j a_{i,i}\,,
\end{equation*}
as long as $t \leq \ell$. So $A_{t,t}$ also satisfies prerequisite~\ref{th:third} of Lemma~\ref{lem:perm}. Therefore, by Lemma~\ref{lem:perm}, $\perm(A_{t,t}) \leq \prod_{i=1}^t a_{i,i} \cdot( 1+0.05/t)$.

\end{proof}
\begin{proof}[of Claim~\ref{claim:z}] \label{proof:claim_z}For any $i, j \in [\ell-1]$, the $i$-th entry of $|z_j^1|$ can be bounded as
\begin{align*}
    |z_j^1(i)| &= a_{\ell,\ell}^{-2}\cdot(1-\normv{\ell}^2) \cdot |(u_i^\top (V_S V_S^\top)^{-1} u_\ell) \cdot \dotp{\ell}{j}|\\
    &\leq (1-\normv{\ell}^2) \cdot |u_i^\top (V_S V_S^\top)^{-1} u_\ell| \cdot a_{\ell, j}\cdot a_{\ell, \ell}^{-2}\,,
\end{align*}
where the last inequality follows from the definition of $a_{i, j}$.
By the Cauchy Schwarz inequality, $|u_i^\top (V_S V_S^\top)^{-1} u_\ell| \leq \sqrt{(u_i^\top (V_S V_S^\top)^{-1} u_i) \cdot (u_\ell^\top (V_S V_S^\top)^{-1} u_\ell)} = \normu{i}\cdot \normu{\ell}$. Therefore, 
\begin{align*}
|z_j^1(i)| &\leq (1-\normv{\ell}^2) \cdot\normu{i}\cdot \normu{\ell} \cdot a_{\ell, j} \cdot a_{\ell, \ell}^{-2} \\
&= \sqrt{\normu{i}^2 \cdot(1-\normv{\ell}^2) } \cdot  \sqrt{\normu{\ell}^2 \cdot(1-\normv{\ell}^2) }\cdot a_{\ell, j} \cdot a_{\ell, \ell}^{-2} \\
&\leq a_{i, \ell} \cdot a_{\ell, \ell}\cdot a_{\ell, j} \cdot a_{\ell, \ell}^{-2} =  a_{i, \ell}\cdot a_{\ell, j} \cdot a_{\ell, \ell}^{-1}\,.  \tag{since $\sqrt{\normu{i}^2\cdot (1-\normv{j}^2)} \leq a_{i,j}$}
\end{align*}

Similarly, the $i$-th entry of $|z_j^2|$ is given by
\begin{align*}
    |z_j^2(i)| &= a_{\ell,\ell}^{-2}\cdot\dotp{\ell}{\ell} \cdot |(u_i^\top (V_S V_S^\top)^{-1} u_\ell) \cdot \dotv{\ell}{j}| \\ &\leq  |u_i^\top (V_S V_S^\top)^{-1} u_\ell| \cdot |v_\ell^\top  (V_S V_S^\top)^{-1} v_j|  \cdot a_{\ell, \ell}^{-1}  \tag*{(since $|\dotp{\ell}{\ell}|  \leq a_{\ell, \ell}$)}
    \\&\leq  \normu{i} \cdot \normu{\ell} \cdot |v_\ell^\top  (V_S V_S^\top)^{-1} v_j|  \cdot a_{\ell, \ell}^{-1} . \tag*{(by Cauchy Schwarz)}
\end{align*}
By Lemma~\ref{lem:inv}, $|v_\ell^\top (V_S V_S^\top)^{-1} v_j| \leq \sqrt{(1-\normv{\ell}^2)\cdot (1-\normv{j}^2)}$. Therefore,
\begin{align*}
    |z_j^2(i)| &\leq \normu{i}\cdot \normu{\ell} \cdot \sqrt{(1-\normv{\ell}^2) \cdot (1-\normv{j}^2)} \cdot a_{\ell, \ell}^{-1} \\
    &=\sqrt{ \normu{i}^2\cdot (1-\normv{\ell}^2) } \cdot \sqrt{\normu{\ell}^2 \cdot (1-\normv{j}^2)} \cdot a_{\ell, \ell}^{-1}
    \leq a_{i,\ell} \cdot a_{\ell, j} \cdot a_{\ell, \ell}^{-1}.
\end{align*}
Similarly,
\begin{align*}
    |z_j^3(i)| &= a_{\ell,\ell}^{-2}\cdot\dotp{\ell}{\ell} \cdot |( u_i^\top (V_S V_S^\top)^{-1} v_\ell) \cdot \dotp{\ell}{j}| \leq a_{i,\ell} \cdot a_{\ell, j} \cdot a_{\ell, \ell}^{-1} \quad \mathrm{and}\\
    |z_j^4(i)| &= a_{\ell,\ell}^{-2}\cdot (1+\normu{\ell}^2) \cdot |(u_i^\top (V_S V_S^\top)^{-1} v_\ell)| \cdot \dotv{\ell}{j} \leq a_{i, \ell} \cdot a_{\ell, j} \cdot a_{\ell, \ell}^{-1}\,.
\end{align*}
This concludes the proof of the first part of the claim.

For columns $j_1, j_2$ with $j_1 \neq j_2$, and rows $i_1, i_2$ with $i_1 \neq i_2$, we have
\begin{align*}
    |z_{j_1}^1(i_1) \cdot z_{j_2}^4(i_2)|  &= a_{\ell,\ell}^{-4}\cdot  (1-\normv{\ell}^2) \cdot |\dotu{i_1}{\ell}\cdot \dotp{\ell}{j_1}|
    \cdot (1+\normu{\ell}^2) \cdot |\dotp{i_2}{\ell} \cdot \dotv{\ell}{j_2}| \\
    &\leq a_{\ell,\ell}^{-2}\cdot |\dotu{i_1}{\ell}|\cdot |\dotp{\ell}{j_1}|
    \cdot |\dotp{i_2}{\ell}| \cdot |\dotv{\ell}{j_2}| \tag{since $(1+\normu{\ell}^2)\cdot(1-\normv{\ell}^2)\leq a_{\ell,\ell}^2 $}\\
    &\leq a_{\ell,\ell}^{-2}\cdot |\dotu{i_1}{\ell} | \cdot a_{\ell,j_1} \cdot a_{i_2, \ell} \cdot  |\dotv{\ell}{j_2}| \tag{since $|\dotp{i}{j}| \leq a_{i,j}$}\\
    &\leq a_{\ell, \ell}^{-2}\cdot \normu{i_1} \cdot \normu{\ell} \cdot a_{\ell,j_1} \cdot a_{i_2, \ell} \cdot |\dotv{\ell}{j_2}| \,. \tag{by Cauchy Schwarz}
\end{align*}
By Lemma~\ref{lem:inv}, $|\dotv{\ell}{j_2}| \leq \sqrt{(1-\normv{\ell}^2) \cdot (1-\normv{j_2}^2)}$. Therefore,
\begin{align*}
     |z_{j_1}^1(i_1) \cdot z_{j_2}^4(i_2)|  &\leq \normu{i_1} \cdot \normu{\ell} \cdot a_{\ell,j_1} \cdot a_{i_2, \ell} \cdot \sqrt{(1-\normv{\ell}^2) \cdot (1-\normv{j_2}^2)} \cdot a_{\ell, \ell}^{-2}\\
     &= a_{\ell,j_1} \cdot a_{i_2, \ell} \cdot \sqrt{ \normu{\ell}^2  \cdot (1-\normv{\ell}^2)} \cdot \sqrt{\normu{i_1}^2 \cdot(1-\normv{j_2}^2)} \cdot a_{\ell, \ell}^{-2}\\
    &\leq a_{i_1, j_2}  \cdot a_{\ell, j_1}\cdot a_{i_2, \ell} \cdot a_{\ell, \ell}^{-1}\,. \tag{since $ \sqrt{ \normu{\ell}^2  \cdot (1-\normv{\ell}^2)}  \leq a_{\ell, \ell}$}
\end{align*}
By a similar argument, one can conclude $|z_{j_1}^2(i_1) \cdot z_{j_2}^3(i_2)|  \leq a_{i_1, \ell}  \cdot a_{\ell, j_2}\cdot a_{i_2, j_1} \cdot a_{\ell, \ell}^{-1}$\,.

\end{proof}

\section{Permanent of the Coefficient Matrix} \label{appendix:perm}
\permanent*
\begin{proof}
We first scale $A_\ell$ so that the upper bounds on off-diagonal entries are independent of the diagonal entries. Let $B_\ell$ be the matrix obtained by applying the following operation to $A_\ell$:
\begin{itemize}
    \item Divide the $j$-th column by $\prod_{x=1}^j a_{x,x}$ for all $j \in [\ell]$ 
    \item Multiply the $i$-row by $\prod_{x=1}^{i-1} a_{x,x}$ and for $i > 1$
\end{itemize}
Then $\perm(A_\ell) = \prod_{i=1}^\ell a_{i,i} \cdot \perm(B_\ell) $, and the entries of $B_\ell$ satisfy
\begin{enumerate}
    \item \label{item:one} $b_{i,i} = 1$ for all $i \in [\ell]$,
    \item \label{item:two} $b_{i,j} \leq 2 \cdot f(i-j)$ for all $j < i$, and
    \item  \label{item:three}$b_{i,j} \leq  2 \cdot \frac{f(\ell - j + i)}{f(\ell)} $ for all $i < j $.
\end{enumerate}
For $\ell = 2$, $\perm(B_2) = 1 + b_{1,2} \cdot b_{2, 1} \leq 1 + \frac{4f(1)^2}{f(2)} < 1.006$ and therefore $\perm(A_2) \leq (1+0.05/2) \cdot a_{1,1}\cdot a_{2,2}$.

So we assume $\ell \geq 3$. Let $\mathcal{S}_\ell$ denote the set of permutations on $[\ell]$, and $\id_\ell$ denote the identity permutation on $\ell$ elements. Expanding the permanent of $B_\ell$ gives \begin{equation}
    \perm(B_\ell) =\sum_{\sigma \in \mathcal{S}_\ell} \prod_{i=1}^\ell b_{i, \sigma(i)} = 1 + \sum_{\sigma \in \mathcal{S}_\ell \backslash \{\id_\ell\}} \prod_{i=1}^\ell b_{i, \sigma(i)} \,. \label{eq:35}
\end{equation}
We categorize all permutations in $\mathcal{S}_\ell \backslash \{\id_\ell\}$ based on the number of \emph{fixed points} and \emph{exceedances}. The set of fixed points of a permutation $\sigma \in \mathcal{S}_\ell$ is defined as $\{i \in [\ell]: \sigma(i) = i\}$ and the exceedance of $\sigma$ is defined as the number of indices $i$ such that $\sigma(i) > i$ (for more details, see Definition~\ref{lem:exc} and Lemma~\ref{lem:derangements}). Let $\mathcal{S}_\ell(n, k)$ denote the subset of $\mathcal{S}_\ell$ with $\ell - n$ fixed points and $k$ exceedances (see Figure~\ref{fig:permutation_types}).
\begin{figure}[ht]
    \centering
    \includegraphics[width=0.6\textwidth]{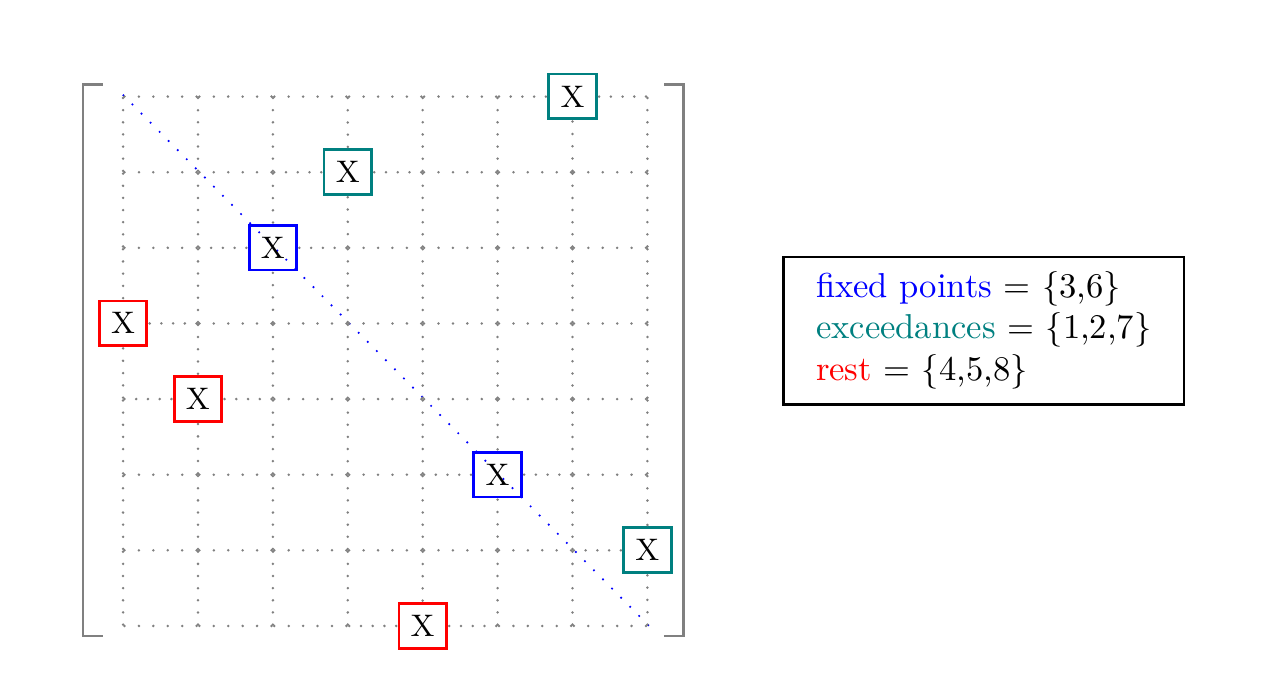}
    \caption{A permutation and corresponding index partition in $\mathcal{S}_8(6,3)$.}
    \label{fig:permutation_types}
\end{figure}

By definition, all permutations in $\mathcal{S}_\ell \backslash \{\id_\ell\}$ have at most $\ell-2$ fixed points and at least $1$ exceedance. So we can further expand~\eqref{eq:35} as
\begin{equation}
    \perm(B_\ell) = 1 +  \sum_{n=2}^\ell \; \sum_{k=1}^{n-1} \sum_{\sigma \in \mathcal{S}_\ell(n, k)} \prod_{i=1}^\ell b_{i, \sigma(i)}  \,. \label{eq:36}
\end{equation}

For a permutation $\sigma \in \mathcal{S}_\ell(n, k)$,  
\begin{align*}
    \prod_{i=1}^\ell b_{i, \sigma(i)} &= \prod_{i = \sigma(i)} b_{i, i} \cdot \prod_{i > \sigma(i)} b_{i, \sigma(i)} \cdot \prod_{i < \sigma(i)} b_{i, \sigma(i)} = \prod_{i > \sigma(i)} b_{i, \sigma(i)} \cdot \prod_{i < \sigma(i)} b_{i, \sigma(i)} \tag{since $b_{i,i} = 1$}\\
    &\leq  \prod_{i > \sigma(i)} 2 f(i-\sigma(i)) \cdot \prod_{i < \sigma(i)} \frac{2f(\ell-\sigma(i)+i)}{f(\ell)}\,. \tag{from properties~\ref{item:two} and~\ref{item:three} }
\end{align*}

Since $\sum_{i=1}^\ell i-\sigma(i) = 0$ for any permutation,
\begin{equation*}
    \sum_{i > \sigma(i)}  i - \sigma(i) + \sum_{i < \sigma(i)} \ell - \sigma(i) + i  = \ell \cdot |\{i : \sigma(i) > i\}| = \ell \cdot k
r\end{equation*} for any $\sigma \in \mathcal{S}_\ell(n, k)$. Therefore, if $\sigma \in \mathcal{S}_\ell(n, k)$, then there exist integers $1 \leq x_1, x_2, \ldots, x_n \leq \ell-1$ with $\sum_{i=1}^n x_i = \ell \cdot k$, such that
\begin{align}
    \prod_{i=1}^\ell b_{i, \sigma(i)} \leq 2^n \cdot \frac{\prod_{i=1}^n f(x_i)}{f(\ell)^k}\,. \label{eq:37}
\end{align}

If $k \geq n/2$, and $x_1, \ldots, x_n$ satisfy $\sum_{i=1}^n x_i = \ell \cdot k$, and $1\leq x_i \leq \ell-1$, then $\prod_{i=1}^n f(x_i)$ is maximized when $x_1 = \ldots = x_k = \ell-1$, $x_{k+1} = 2k-n+1$, and $x_{k+2} = \ldots = x_n = 1$ because $f$ satisfies $f(a+b) \geq f(a) \cdot f(b)$. Similarly, when $k < n/2$,  $\prod_{i=1}^n f(x_i)$ is maximized when $x_1 = \ldots = x_{k-1} = \ell-1$, $x_{k} = \ell-n+2k-1$, and $x_{k+1} = \ldots = x_n = 1$. So,
\begin{equation*}
    \prod_{i=1}^n f(x_i) \leq \begin{cases} 
    f(\ell-1)^k \cdot f(2k-n+1) \cdot f(1)^{n-k-1} & \text{ if } k \geq n/2\\
    f(\ell-1)^{k-1} \cdot f(\ell-n+2k-1) \cdot f(1)^{n-k} & \text{ otherwise.}
    \end{cases}
\end{equation*}
Plugging these bounds in~\eqref{eq:37}, we get
\begin{align*}
    \prod_{i=1}^\ell b_{i, \sigma(i)} \leq \begin{cases} 
    2^n \cdot \frac{f(\ell-1)^k \cdot f(2k-n+1) \cdot f(1)^{n-k-1}}{f(\ell)^k} & \text{ if } k \geq n/2\\
    2^n \cdot \frac{f(\ell-1)^{k-1} \cdot f(\ell-n+2k-1) \cdot f(1)^{n-k}}{f(\ell)^k} & \text{ if } k < n/2\,.
    \end{cases} 
\end{align*}
Using the definition of $f$, for any $k \geq n/2$,
\begin{align}
    \prod_{i=1}^\ell b_{i, \sigma(i)} \leq 2^{2n-k-1} \cdot \frac{((2k-n+1)!)^{11}}{\ell^{11k}} \, , \label{eq:38}
\end{align}
and for $k < n/2$,
\begin{align}
    \prod_{i=1}^\ell b_{i, \sigma(i)} \leq 2^{2n-k} \cdot \frac{1}{\ell^{11(k-1)} \cdot (\ell \cdot (\ell-1) \ldots (\ell-n+2k))^{11} } \, .\label{eq:39}
\end{align}

We will now bound the sum of $\prod_{i=1}^\ell b_{i, \sigma(i)}$ over all permutations $\sigma$ in $\mathcal{S}_\ell(n,k)$ based on the relative values of $n$ and $k$. 

For fixed $k$ and $n$ with $k < n/2$, summing over all permutations in $\mathcal{S}_\ell(n,k)$ gives
\begin{align}
    \sum_{\sigma \in \mathcal{S}_\ell(n, k)} \prod_{i=1}^\ell b_{i, \sigma(i)} &\leq |\mathcal{S}_\ell(n,k)| \cdot \max_{\sigma \in \mathcal{S}_\ell(n,k)} \prod_{i=1}^\ell b_{i, \sigma(i)} =  \binom{\ell}{n} \cdot \theta(n,k) \cdot \max_{\sigma \in \mathcal{S}_\ell(n,k)} \prod_{i=1}^\ell b_{i, \sigma(i)} \tag{from Lemma~\ref{lem:derangements}}\\
    &\leq \binom{\ell}{n} \cdot  \theta(n,k) \cdot 2^{2n-k} \cdot \frac{1}{\ell^{11(k-1)} \cdot (\ell \cdot (\ell-1) \ldots (\ell-n+2k))^{11} } \, , \label{eq:40}
\end{align}
where the last inequality follows from~\eqref{eq:39}.
We will bound the three terms of equation~\eqref{eq:40}, $\binom{\ell}{n}$, $ \theta(n,k)$, and $2^{2n-k}$ separately.

Expanding the first term, we get 
\begin{align}
   \binom{\ell}{n} \cdot \frac{1}{\ell^{(k-1)} \cdot (\ell \cdot (\ell-1) \ldots (\ell-n+2k))} = \frac{\ell \cdot (\ell-1) \ldots (\ell-n+1)}{n! \cdot \ell^{(k-1)} \cdot (\ell \cdot (\ell-1) \ldots (\ell-n+2k))} < \frac{1}{n! \cdot \ell^{k-1}} \,. \label{eq:41}
\end{align}
For the third term, note that $k< n/2$ implies that $2^{2n-k} < 2^{3n-3k}$, and since $k \geq 1$, $\ell -n +2k > 2$. Using these two facts, we get 
\begin{equation}
     2^{2n-k} \cdot \frac{1}{\ell^{3(k-1)} \cdot (\ell \cdot (\ell-1) \ldots (\ell-n+2k))^{3} } < \frac{2^{3(n-k)}}{(\ell - n + 2k)^{3(n-k)}} < 1 \,. \label{eq:42}
\end{equation} 

Plugging~\eqref{eq:41} and~\eqref{eq:42} in~\eqref{eq:40}, we get
\begin{align}
    \sum_{\sigma \in \mathcal{S}_\ell(n, k)} \prod_{i=1}^\ell b_{i, \sigma(i)} \leq \frac{1}{n!\cdot \ell^{k-1}} \cdot \theta(n,k)  \cdot \frac{1}{\ell^{7(k-1)} \cdot (\ell \cdot (\ell-1) \ldots (\ell-n+2k))^{7} } \, .\label{eq:43}
\end{align}

For $k = 1$, $\theta(n, k) = 1$ by Lemma~\ref{lem:derangements}, and therefore \begin{align}
    \sum_{\sigma \in \mathcal{S}_\ell(n, 1)} \prod_{i=1}^\ell b_{i, \sigma(i)} &\leq  \frac{1}{n!} \cdot  \frac{1}{(\ell \cdot (\ell-1) \ldots (\ell-n+2))^{7}}\,. \label{eq:44}
\end{align}
Summing over all $n$ gives
 \begin{align}
    \sum_{n=2}^{\ell}\sum_{\sigma \in \mathcal{S}_\ell(n, 1)} \prod_{i=1}^\ell b_{i, \sigma(i)} &\leq  \sum_{n=2}^{\ell}\frac{1}{n!} \cdot  \frac{1}{(\ell \cdot (\ell-1) \ldots (\ell-n+2))^{7}} \leq \frac{1}{2\ell^6}\,. \label{eq:45}
\end{align}
For any $2 \leq k < n/2$,  using Lemma \ref{lem:derangements}, we have $\theta(n, k) = (2k+3)^{n}$. Since $k \geq 2$, $2k+3 \leq 2 \cdot 2k$, and therefore $\theta(n, k) \leq (2 \cdot 2k)^{n+2}$. Plugging this is~\eqref{eq:43}, we get 
\begin{align}
    \sum_{\sigma \in \mathcal{S}_\ell(n, k)} \prod_{i=1}^\ell b_{i, \sigma(i)} &\leq \frac{1}{n! \cdot \ell^{k-1}} \cdot (2k)^{n} \cdot 2^{n} \cdot \frac{1}{\ell^{7(k-1)} \cdot (\ell \cdot (\ell-1) \ldots (\ell-n+2k))^{7} }\,. \label{eq:46}
\end{align}

Moreover $k < n/2$ implies that $n-k > n/2$, and as a result  $n \leq 2(n-k)$ and $ 2^{n} \cdot (2k)^{n} \leq 2^{2(n-k)} \cdot (2k)^{2(n-k)}$. Therefore,
\begin{align}
   2^{n} \cdot (2k)^{n}  &\cdot \frac{1}{\ell^{4(k-1)} \cdot (\ell \cdot (\ell-1) \ldots (\ell-n+2k))^4} \notag \\ &<  2^{2(n-k)} \cdot (2k)^{2(n-k)}  \cdot \frac{1}{\ell^{4(k-1)} \cdot (\ell \cdot (\ell-1) \ldots (\ell-n+2k))^4} \notag \\
    &< \left(\frac{2}{\ell-n+2k}\right)^{2(n-k)}   \cdot \left(\frac{2k}{\ell-n+2k}\right)^{2(n-k)} < 1, \label{eq:47}
\end{align}
where the last inequality follows from  $2k \leq \ell -n + 2k$ and $2 \leq \ell -n + 2k$.

Combining~\eqref{eq:46}and~\eqref{eq:47}, we have for any $2 \leq k < n/2$,
\begin{equation}
     \sum_{\sigma \in \mathcal{S}_\ell(n, k)} \prod_{i=1}^\ell b_{i, \sigma(i)} \leq  \frac{1}{n!\cdot \ell^{4(k-1)} \cdot (\ell \cdot (\ell-1) \ldots (\ell-n+2k))^3} \,. \label{eq:48}
\end{equation}
Summing over all $k$ and $n$ with $2 \leq k < n/2$ gives
\begin{align}
    \sum_{n=2}^{\ell} \sum_{k=2}^{\lceil n/2\rceil -1}\sum_{\sigma \in \mathcal{S}_\ell(n, k)} \prod_{i=1}^\ell b_{i, \sigma(i)} &=  \sum_{n=5}^{\ell} \sum_{k=2}^{\lceil n/2\rceil -1}\sum_{\sigma \in \mathcal{S}_\ell(n, k)} \prod_{i=1}^\ell b_{i, \sigma(i)} \tag{since $n > 2k > 4$} \\
    &\leq  \sum_{n=5}^{\ell} \sum_{k=2}^{\lceil n/2\rceil -1}\frac{1}{n!\cdot \ell^{4(k-1)} \cdot (\ell \cdot (\ell-1) \ldots (\ell-n+2k))^3} \tag{from~\eqref{eq:48}}\\
    &\leq \frac{1}{5!} \sum_{n=5}^{\ell} \sum_{k=2}^{\lceil n/2\rceil -1}\frac{1}{\ell^{4(k-1)}} \leq  \frac{1}{5!} \sum_{n=5}^{\ell} \sum_{k=2}^{\lceil n/2\rceil -1}\frac{1}{\ell^{4}} \tag{since $k > 2$}\\
    &\leq \frac{1}{5!} \cdot \ell \cdot \lfloor \ell/2 \rfloor \cdot \frac{1}{\ell^{4}} \leq \frac{1}{2\cdot 5! \cdot \ell^2}\,. \label{eq:49}
\end{align}

Similarly, for a fixed $k$ and $n$ with $k \geq n/2$, summing over all permutations in $\mathcal{S}_\ell(n,k)$,
\begin{align}
    \sum_{\sigma \in \mathcal{S}_\ell(n, k)} \prod_{i=1}^\ell b_{i, \sigma(i)} &\leq \binom{\ell}{n} \cdot \theta(n,k) \cdot 2^{2n-k-1} \cdot \frac{((2k-n+1)!)^{11}}{\ell^{11k}}\,. \label{eq:50}
\end{align}
We again bound the three terms, namely $\binom{\ell}{n}$, $\theta(n,k)$, and $2^{2n-k-1}$ separately.

For the first term, since $n/2 \leq k$, $2n-k-1 \leq 3k-1$, and therefore
\begin{align*}
  2^{2n-k-1}  \cdot \frac{((2k-n+1)!)^3}{\ell^{3k}} &\leq  2^{3k-1}  \cdot  \frac{((2k-n+1)!)^3}{\ell^{3k}}\,.
\end{align*}
Since $2k-n+1 \leq k$, and $k+1 \leq \ell$, $2^{3k-1}  \cdot \frac{((2k-n+1)!)^3}{\ell^{3k}} \leq  \frac{1}{2}  \cdot  \frac{(2^k \, k!)^3}{(k+1)^{3k}}$.

For $k = 1, 2, 3, 4, 5$, $\frac{(2^k k!)^3}{(k+1)^{3k}} \leq 1$. For $k \geq 6$, $2^k k! \leq k^k$. Therefore, 
\begin{align}
  2^{2n-k-1}  \cdot \frac{((2k-n+1)!)^3}{\ell^{3k}} &\leq  \frac{1}{2} \,. \label{eq:51}
\end{align}

Expanding the third term, 
\begin{align*}
   \binom{\ell}{n} \cdot \frac{((2k-n+1)!)^2}{\ell^{2k}} &= \frac{\ell\cdot (\ell-1) \ldots (\ell-n+1)}{n!} \cdot \frac{((2k-n+1)!)^2}{\ell^{2k}}\\
   &\leq \frac{1}{n!} \cdot \frac{((2k-n+1)!)^2}{\ell^{2k-n}} = \frac{1}{\ell (n-1)!} \cdot \frac{((2k-n+1)!)^2}{n\ell^{2k-n-1}} \leq \frac{(2k-n+1)!}{\ell (n-1)!} \,.
\end{align*}
Since $k+1 \leq n$, we have $2k-n+1 \leq n-1$, and therefore
\begin{align}
   \binom{\ell}{n} \cdot \frac{((2k-n+1)!)^2}{\ell^{2k}} &\leq \frac{(2k-n+1)!}{l(n-1)!} \leq \frac{1}{\ell} \,. \label{eq:52}
\end{align}

Plugging in~\eqref{eq:51} and~\eqref{eq:52} in~\eqref{eq:50}, 
\begin{align}
    \sum_{\sigma \in \mathcal{S}_\ell(n, k)} \prod_{i=1}^\ell b_{i, \sigma(i)} &\leq  \frac{1}{2\ell} \cdot \theta(n,k) \cdot \frac{((2k-n+1)!)^6}{\ell^{6k}} \,. \label{eq:53}
\end{align}
Since $\theta(n, n-1) = 1$, for $k = n-1$, we have 
\begin{align}
    \sum_{\sigma \in \mathcal{S}_\ell(n, n-1)} \prod_{i=1}^\ell b_{i, \sigma(i)} &\leq  \frac{1}{2\ell} \cdot \frac{((n-1)!)^6}{\ell^{6(n-1)}} \,. \label{eq:54}
\end{align}
Summing over all $n$ gives
\begin{align}
    \sum_{n=2}^{\ell}\sum_{\sigma \in \mathcal{S}_\ell(n, n-1)} \prod_{i=1}^\ell b_{i, \sigma(i)} &\leq  \sum_{n=2}^{\ell}\frac{1}{2\ell} \cdot \frac{((n-1)!)^6}{\ell^{6(n-1)}} \leq \sum_{n=2}^{\ell}\frac{1}{2\ell} \cdot \frac{1}{\ell^{6}}  \tag{since $n-1 < \ell$} \\
    &\leq \frac{1}{2\ell^6}  \label{eq:55}
\end{align}
By Lemma \ref{lem:derangements}, for $k < n-1$, $\theta(n, k) \leq  (2n-2k+5)^{n}$ for $k \geq n/2$, and
\begin{align*}
    \sum_{\sigma \in \mathcal{S}_\ell(n, k)} \prod_{i=1}^\ell b_{i, \sigma(i)} &\leq  \frac{1}{2\ell} \cdot (2n-2k+5)^{n} \cdot \frac{((2k-n+1)!)^6}{\ell^{6k}}\,.
\end{align*}
Let $z = 2k - n$, then
\begin{align*}
  (2n-2k+5)^{n} &\cdot \frac{((2k-n+1)!)^6}{\ell^{6k}} = (2k-2z+5)^{2k-z}  \cdot \frac{((z+1)!)^6}{\ell^{6k}} \\
  &\leq (2k-z+5)^{2k-z}  \cdot \frac{((z+1)!)^6}{\ell^{6k}} \,.
\end{align*}
Taking the partial derivative of $ \frac{(2k-z+5)^{2k-z}}{\ell^{6k}}$ with respect to $k$,
\begin{align*}
   \frac{\partial \frac{(2k-z+5)^{2k-z}}{\ell^{6k}}}{\partial k} &= \frac{(2k-z+5)^{2k-z}}{\ell^{6k}}\left(2\cdot \log(2k-z+5) + 2\cdot\frac{2k-z}{2k-z+5} -6\log(\ell)\right)\,,
\end{align*}
where the last inequality follows since $2k-z \leq \ell$ and $\ell \geq 3$.

Therefore $ \frac{(2k-z+5)^{2k-z+2}}{\ell^{6k}}$ is a non-increasing function of $k$. Since $n = 2k-z \geq 1$, $k$ satisfies $2k \geq z+1$. So $ \frac{(2k-z+5)^{2k-z}}{\ell^{6k}}$ is maximized when $2k = z + 1$. Therefore,
\begin{align*}
  (2k-z+5)^{2k-z}  \cdot \frac{((z+1)!)^6}{\ell^{6k}} 
  &\leq 6 \cdot \frac{((z+1)!)^6}{\ell^{6k}} = 6 \cdot \frac{((2k-n+1)!)^6}{\ell^{6k}} \tag{from definition of $z$}\\
  &\leq  6 \cdot \frac{\ell^{6(2k-n+1)}}{\ell^{6k}} \leq  \frac{6}{\ell^{6(n-k-1)}}\,. \tag{since $2k-n+1 < n \leq \ell$}
\end{align*}
Plugging this bound in~\eqref{eq:53} gives
\begin{align}
    \sum_{\sigma \in \mathcal{S}_\ell(n, k)} \prod_{i=1}^\ell b_{i, \sigma(i)} &\leq  \frac{1}{2\ell} \cdot \frac{6}{\ell^{6(n-k-1)}} \,. \label{eq:56}
\end{align}
Summing over all $n$ and $k$ gives
\begin{align}
    \sum_{n=2}^\ell \sum_{k=\lceil n/2 \rceil}^{n-2} \sum_{\sigma \in \mathcal{S}_\ell(n, k)} \prod_{i=1}^\ell b_{i, \sigma(i)} &\leq \sum_{n=2}^\ell \sum_{k=\lceil n/2 \rceil}^{n-2}  \frac{1}{2\ell} \cdot \frac{6}{\ell^{6(n-k-1)}}  \leq \sum_{n=2}^\ell \sum_{k=\lceil n/2 \rceil}^{n-2}  \frac{1}{\ell} \cdot \frac{3}{\ell^{6}} \tag{since $n-k-1 \geq 1$}\\
    \leq  \ell \cdot \ell \cdot  \frac{3}{\ell^{7}} \leq \frac{3}{\ell^5} \,.  \label{eq:57}
\end{align}

Plugging in~\eqref{eq:45},~\eqref{eq:49},~\eqref{eq:55}, and~\eqref{eq:57} into ~\eqref{eq:36}, 
\begingroup
\allowdisplaybreaks
\begin{align*}
  \perm(B_\ell) 
    {}={}& 1 + \sum_{n=2}^\ell \sum_{k=1}^{n-1} \sum_{\sigma \in \mathcal{S}_\ell(n, k)} \prod_{i=1}^\ell b_{i, \sigma(i)} \\
    {}={}& 1+ \sum_{n=2}^\ell \sum_{\sigma \in \mathcal{S}_\ell(n, 1)}\prod_{i=1}^\ell b_{i, \sigma(i)}  + \sum_{n=5}^\ell \sum_{k =2 }^{\lceil n/2\rceil-1} \sum_{\sigma \in \mathcal{S}_\ell(n, k)}\prod_{i=1}^\ell b_{i, \sigma(i)}\\
    &{}+{} \sum_{n=2}^\ell \sum_{\sigma \in \mathcal{S}_\ell(n, n-1)}\prod_{i=1}^\ell b_{i, \sigma(i)}+\sum_{n=2}^\ell \sum_{ k = \lceil n/2\rceil}^{n-2} \sum_{\sigma \in \mathcal{S}_\ell(n, k)}\prod_{i=1}^\ell b_{i, \sigma(i)} \\
    {}\leq{} & 1+\frac{1}{2\ell^5} + \frac{1}{240\ell^2}  + \frac{1}{2 \ell^6}+ \frac{3}{\ell^5} \leq 1 + \frac{0.05}{\ell}\,. \tag{since $\ell \geq 3$}
\end{align*}
Therefore $\perm(A_\ell) = \prod_{i=1}^\ell a_{i,i} \cdot \perm(B_\ell) \leq \prod_{i=1}^\ell a_{i,i} \cdot (1+\frac{0.05}{\ell}) $.
\end{proof}
\endgroup

\begin{definition}[Exceedances, Eulerian numbers, and Derangement numbers]
\label{lem:exc} \leavevmode
\begin{itemize}
    \item The exceedance of a permutation $\sigma \in \mathcal{S}_n$ is defined as $|\{i \in [n-1]: \sigma(i) > i\}|$.
    \item The Eulerian number $E(n, k)$ is defined to be the number of permutations in $\mathcal{S}_n$ with $k-1$ exceedances.
    \item The derangement number $T(n, k)$ is defined to be the number of derangements in $\mathcal{S}_n$ with $k$ exceedances. 
\end{itemize}
The explicit formula for $E(n, k)$ is  $ E(n,k)=\sum_{j=0}^{k+1}(-1)^{j}{\binom {n+1}{j}} \cdot (k+1-j)^{n}$ (page 273, \cite{comtet1974advanced}). The exponential generating function of $E(n,k)$ is given by (page 273, \cite{comtet1974advanced})
\begin{equation}
    \sum_{n=0}^{\infty} \sum_{k=0}^n E(n,k) \; t^{k} \; \frac{x^n}{n!} = \frac{t-1}{t-e^{(t-1)x}} \,. \label{eq:eulerian}
\end{equation}

The exponential generating function of $T(n,k)$ is given by (Proposition 5, \cite{brenti1990unimodal})
\begin{equation}
    \sum_{n=0}^{\infty} \sum_{k=1}^{n-1} T(n,k) \; t^{k} \; \frac{x^n}{n!} = e^{-t} \cdot\frac{t-1}{t-e^{(t-1)x}} \,. \label{eq:dr}
\end{equation}

Comparing~\eqref{eq:eulerian} and~\eqref{eq:dr}, we infer
\begin{equation}
    T(n,k) = \sum_{j=0}^{n} (-1)^{(n-j)} \cdot \binom{n}{j} \cdot E(j, k) \,. \label{eq:derangement}
\end{equation} 
\end{definition} 
\begin{lemma} \label{lem:derangements}
$\mathcal{S}_\ell(n,k)$ denote the set of permutations on $[\ell]$ with $\ell-n$ fixed points and $k$ exceedences. Then 
\begin{equation*}
    |\mathcal{S}_\ell(n,k) | \leq \binom{\ell}{n}\cdot \theta(n,k) \quad \text{ where } \quad \theta(n,k) := \begin{cases}1 & \text{ if } k = 1 \text{ or } k = n-1 \\
   (2k+3)^n & \text{ if } 2 \leq k < n/2 \\
    (2n-2k+5)^n & \text{ if } n/2 \leq k \leq n-2\,.
    \end{cases}
\end{equation*}
\end{lemma}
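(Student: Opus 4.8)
## Proof proposal for Lemma~\ref{lem:derangements}

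The plan is to express $|\mathcal{S}_\ell(n,k)|$ as $\binom{\ell}{n}$ times the number of derangements on $n$ points with $k$ exceedances, and then bound the derangement count using the generating function identity~\eqref{eq:derangement}. First I would observe the bijection: a permutation in $\mathcal{S}_\ell(n,k)$ is determined by choosing which $n$ of the $\ell$ points are \emph{moved} (there are $\binom{\ell}{n}$ such choices) and then putting a derangement with exactly $k$ exceedances on those $n$ points. Here one must check that the notion of ``exceedance'' is preserved under restricting to the moved points — this holds because the fixed points contribute nothing to the exceedance count, and the relative order of the moved points is unchanged when we pass to the induced permutation on $\{1,\dots,n\}$. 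Hence $|\mathcal{S}_\ell(n,k)| = \binom{\ell}{n}\cdot T(n,k)$, and it remains to show $T(n,k)\le \theta(n,k)$.

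For the bound on $T(n,k)$, I would use~\eqref{eq:derangement}, $T(n,k)=\sum_{j=0}^n (-1)^{n-j}\binom{n}{j}E(j,k)$, together with the crude bound $E(j,k)\le (k+1)^j$ coming from the explicit formula $E(j,k)=\sum_{i=0}^{k+1}(-1)^i\binom{j+1}{i}(k+1-i)^j$ (or just from the fact that the Eulerian numbers sum to $j!$, but the $(k+1)^j$ bound is cleaner and sufficient). Then
\[
T(n,k)\le \sum_{j=0}^n \binom{n}{j}(k+1)^j = (k+2)^n,
\]
which already gives $\theta(n,k)=(2k+3)^n$ in the regime $2\le k<n/2$ with room to spare. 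For the symmetric regime $n/2\le k\le n-2$ I would use the standard symmetry of Eulerian numbers $E(j,k)=E(j,j-k+1)$ — or equivalently the fact that $\sigma\mapsto \sigma^{-1}$ (or the reversal $i\mapsto n+1-i$ conjugation) turns exceedances into ``anti-exceedances'' — to reduce to bounding by $(n-k+c)^n$ for a small constant; tracking the constant carefully gives $(2n-2k+5)^n$. The boundary cases $k=1$ and $k=n-1$ need separate treatment: a derangement of $n$ points with a single exceedance must be the cycle $(1\,2\,\cdots\,n)$ — indeed, $\sigma(i)>i$ for only one index forces a very rigid structure — so $T(n,1)=1$, and dually $T(n,n-1)=1$ (the ``anti-identity'' reversal type derangement), giving $\theta=1$ in those cases.

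The main obstacle I anticipate is getting the constants in $\theta(n,k)$ exactly right rather than merely up to an absolute constant — in particular making sure the reduction via Eulerian-number symmetry produces $2n-2k+5$ and not something larger, and verifying the $k=1$ and $k=n-1$ claims rigorously (the rigidity argument for ``exactly one exceedance'' needs a short induction or a direct cycle-structure argument). One should also double-check the bijection in the first step handles the case $n=0$ or $n=1$ vacuously and that ``$k$ exceedances'' on the restricted permutation uses $[\ell]$ versus $[n]$ consistently — the exceedance count $|\{i:\sigma(i)>i\}|$ is invariant under the order-preserving relabeling of the moved points, so no off-by-one creeps in. Everything else is routine generating-function manipulation and crude inequalities.
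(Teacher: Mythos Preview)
Your approach is essentially the paper's: reduce to $|\mathcal{S}_\ell(n,k)|=\binom{\ell}{n}T(n,k)$, take absolute values in~\eqref{eq:derangement}, bound the Eulerian numbers, and sum via the binomial theorem; the symmetry of $T(n,\cdot)$ and the rigidity argument for $T(n,1)=T(n,n-1)=1$ are exactly what the paper states (though the paper leaves those last two points as assertions). The one place to flag is your ``crude bound $E(j,k)\le (k+1)^j$ coming from the explicit formula'': this is not immediate from the alternating sum, and the paper instead drops the negative terms to get $E(j,k)\le \sum_{i\ \mathrm{even}}\binom{j+1}{i}(k+1-i)^j\le 2^j(k+1)^j$, which after summing yields exactly $(1+2(k+1))^n=(2k+3)^n$ --- so your $(k+2)^n$ is stronger than needed and you can safely fall back to this easier estimate.
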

\begin{proof}
If we fix the $\ell-n$ fixed points for a permutation in $\mathcal{S}_\ell(n,k)$, then we get a derangement on the remaining $n$ points with $k$ exceedences. Therefore  $|\mathcal{S}_\ell(n,k)| = \binom{\ell}{n} \cdot T(n,k)$. 
So, it suffices to prove that \begin{itemize}
    \item $T(n,1) = T(n,n-1) = 1$, 
    \item $T(n, k) = T(n, n+1-k) < (2k+3)^{n}$ for any $k \in \{2, \ldots, n-2\}$.
\end{itemize}
From definition~\ref{lem:exc}, $ E(n,m)=\sum _{k=0}^{m+1}(-1)^{k}{\binom {n+1}{k}} \cdot (m+1-k)^{n}$. So removing terms with odd $k$ from $E(n,m)$ does not decrease it value. Therefore, 
\begin{align*}
    E(n, m) &< \binom{n+1}{0} \cdot (m+1)^{n} + \binom{n+1}{2} \cdot (m+1-2)^{n} + \ldots =  \sum_{i=0}^{m+1}{\binom {n+1}{2i}} \cdot (m+1-2i)^{n} \\
    &< (m+1)^n \cdot \left(\sum_{i = 0}^n \binom{n+1}{2i}  \right) \leq (m+1)^n \cdot 2^n.
\end{align*}
Using equation~\ref{eq:derangement} and taking absolute values, we get
\begin{align*}
    T(n,k) &< \sum_{j=0}^{n}  \binom{n}{j} \cdot E(j, k) < \sum_{j=0}^{n}  \binom{n}{j} \cdot  2^j \cdot (k+1)^j = (1 + 2(k+1))^{n}.
\end{align*}
\end{proof}
\end{document}